\documentclass[sigplan,9pt]{acmart}\settopmatter{}


\usepackage{booktabs}   
\usepackage{subcaption} 


\newif\ifprod
\prodtrue

\newif\iflong
\longtrue

\ifprod
    \definecolor{longcolor}{rgb}{0.0, 0.0, 0.0}
    \definecolor{shortcolor}{rgb}{0.0, 0.0, 0.0}
    \newcommand{\grammarr}[1]{#1}
    \newcommand{\note}[1]{}
    \newcommand{\ConsiderRemoving}[1]{#1}
    \newcommand{\FutureWork}[1]{}
    \newcommand{\todo}[1]{}
\else
    \definecolor{longcolor}{rgb}{0.3, 0.3, 0.0}
    \definecolor{shortcolor}{rgb}{0.0, 0.3, 0.3}
    \newcommand{\note}[1]{\textcolor{red}{[[{#1}]]}}
    \newcommand{\grammarr}[1]{\textcolor{red}{\underline{#1}}}
    \newcommand{\ConsiderRemoving}[1]{\textcolor{blue}{#1}}
    \newcommand{\FutureWork}[1]{\textcolor{brown}{#1}}
    
    \newcommand{\todo}[1]{\textbf{\color{red}[#1]}}
\fi

\iflong
    \fancyhead[LE]{}
    \fancyhead[RO]{}
    \renewcommand\footnotetextcopyrightpermission[1]{}
    \settopmatter{printacmref=false, printfolios=true}
    \newcommand{\longv}[1]{\textcolor{longcolor}{#1}}
    \newcommand{\shortv}[1]{}
    \newcommand{\shortlong}[2]{\textcolor{longcolor}{#2}}
    \excludecomment{shortversion}
    \newenvironment{longversion}
    {\color{longcolor}}
    {}
\else
    \settopmatter{authorsperrow=4}
    \newcommand{\longv}[1]{}
    \newcommand{\shortv}[1]{\textcolor{shortcolor}{#1}}
    \newcommand{\shortlong}[2]{\textcolor{shortcolor}{#1}}
    \excludecomment{longversion}
    \newenvironment{shortversion}
    {\color{shortcolor}}
    {}
\fi

\makeatletter\if@ACM@journal\makeatother
\acmJournal{PACMPL}
\acmVolume{1}
\acmNumber{1}
\acmArticle{1}
\acmYear{2017}
\acmMonth{1}
\acmDOI{10.1145/nnnnnnn.nnnnnnn}
\startPage{1}
\else\makeatother
\copyrightyear{2018} 
\acmYear{2018} 
\setcopyright{rightsretained} 
\acmConference[LICS '18]{33rd Annual ACM/IEEE Symposium on Logic in Computer Science}{July 9--12, 2018}{Oxford, United Kingdom}
\acmDOI{10.1145/3209108.3209193}
\acmISBN{978-1-4503-5583-4/18/07}
\startPage{1}
\fi


\bibliographystyle{ACM-Reference-Format}
\citestyle{acmauthoryear}   

\usepackage{amssymb}
\usepackage{url}
\usepackage{tikz}
\usepackage{tikz-cd}
\usepackage{mathpartir}
\usepackage{fancyvrb}
\usepackage{hyperref}
\usepackage{listings}

\usepackage{amsthm}

\usepackage{comment}

\definecolor{SubtleColor}{rgb}{0,0,.50}

\definecolor{green}{rgb}{0.0, 0.5, 0.0}
\definecolor{red}{rgb}{0.8, 0.0, 0.0}

\definecolor{glsdiminished}{rgb}{0.5, 0.5, 0.5}
\newcommand{\glsdiminished}[1]{\textcolor{glsdiminished}{#1}}

\usepackage{glossaries}
\glstoctrue
\glsdisablehyper

\makeglossaries

\newglossaryentry{R}
{
  name={\ensuremath{\mathbb{R}}},
  description={The space of real numbers},
}
\newcommand{\R}{\gls{R}}

\newglossaryentry{nat}
{
  name={\ensuremath{\mathbb{N}}},
  description={The type/set/discrete space of natural numbers},
}

\newglossaryentry{bool}
{
  name={\ensuremath{\mathbb{B}}},
  description={The type/set/discrete space of Boolean values},
}
\newcommand{\bool}{\gls{bool}}

\newglossaryentry{rat}
{
  name={\ensuremath{\mathbb{Q}}},
  description={The set/discrete space of rational numbers},
}
\newcommand{\rat}{\gls{rat}}

\newglossaryentry{Zero}
{
  name={\ensuremath{\varnothing}},
  description={The empty space},
}

\newglossaryentry{One}
{
  name={\ensuremath{\ast}},
  description={Either the type with one element or the space with one point},
}
\newcommand{\One}{\gls{One}}

\newglossaryentry{btrue}
{
  name={\ensuremath{\mathsf{true}}},
  description={Boolean true, $\mathsf{true} : \One \to_c \bool$},
}
\newcommand{\btrue}{\gls{btrue}}

\newglossaryentry{bfalse}
{
  name={\ensuremath{\mathsf{false}}},
  description={Boolean false, $\mathsf{false} : \One \to_c \bool$},
}
\newcommand{\bfalse}{\gls{bfalse}}

\newglossaryentry{bor}
{
  name={\ensuremath{\mathbin{||}}},
  description={Boolean or, $\mathbin{||} : \bool \times \bool \to_c \bool$},
}
\newcommand{\bor}{\gls{bor}
}

\newglossaryentry{band}
{
  name={\ensuremath{\mathbin{\&\&}}},
  description={Boolean and, $\mathbin{\&\&} : \bool \times \bool \to_c \bool$},
}
\newcommand{\band}{\gls{band}}

\newglossaryentry{bneg}
{
  name={\ensuremath{\mathop{!}}},
  description={Boolean negation, $\mathop{!} : \bool \to_c \bool$},
}
\newcommand{\bneg}{\gls{bneg}}

\newglossaryentry{Type}
{
  name={\ensuremath{\mathcal{U}}},
  description={The type of types, $\mathcal{U} : \mathcal{U}$},
}
\newcommand{\Type}{\gls{Type}}

\newglossaryentry{Prop}
{
  name={\ensuremath{\Omega}},
  description={The (large) set of propositions},
}
\newcommand{\Prop}{\gls{Prop}}

\newglossaryentry{PLower}
{
  name={\ensuremath{\mathcal{P}_\lozenge}},
  description={The partial \& nondeterministic powerspace monad, $\mathcal{P}_\lozenge : \cat{FSpc} \to \cat{FSpc}$},
}
\newcommand{\PLower}{\gls{PLower}}

\newglossaryentry{PLowerP}
{
  name={\ensuremath{\mathcal{P}_\lozenge^+}},
  description={The nondeterministic powerspace monad, $\mathcal{P}_\lozenge^+ : \cat{FSpc} \to \cat{FSpc}$},
}
\newcommand{\PLowerP}{\gls{PLowerP}}

\newglossaryentry{PUpper}
{
  name={\ensuremath{\mathcal{P}_\square}},
  description={The compact powerspace monad, $\mathcal{P}_\square : \cat{FSpc} \to \cat{FSpc}$},
}
\newcommand{\PUpper}{\gls{PUpper}}

\newglossaryentry{Viet}
{
  name={\ensuremath{\mathcal{P}_{\square\lozenge}}},
  description={The compact/overt powerspace monad, $\mathcal{P}_{\square\lozenge} : \cat{FSpc} \to \cat{FSpc}$},
}
\newcommand{\Viet}{\gls{Viet}}

\newglossaryentry{iimg}
{
  name={$\glsdiminished{f}^*$},
  description={The inverse image map $f^* : \Open{B} \to \Open{A}$ of a map $f : A \ndpto B$},
}

\newglossaryentry{dimg}
{
  name={$\glsdiminished{f}_!$},
  description={The direct image map $f_! : \Open{A} \to \Open{B}$ of an open map $f : A \to_o B$},
}

\newglossaryentry{BOpen}
{
  name={$\mathcal{O}_\mathsf{B}\glsdiminished{(A)}$},
  text={\mathcal{O}_\mathsf{B}},
  description={The set of basic opens of a space $A$},
}
\newcommand{\BOpen}{\gls{BOpen}}

\newglossaryentry{Open}
{
  name={$\mathcal{O}\glsdiminished{(A)}$},
  text={\mathcal{O}},
  description={The (large) set of opens of a space $A$},
}

\newcommand{\Open}[1]{\gls{Open}\left({#1}\right)}

\newglossaryentry{up}
{
  name={\ensuremath{\mathsf{up}}},
  description={The open embedding of a space in its lifting, $\mathsf{up} : A \hookto A_\bot$},
}
\newcommand{\up}{\gls{up}}

\newglossaryentry{parallel}
{
  name={$\glsdiminished{f} \otimes \glsdiminished{g}$},
  text = {\otimes},
  description={The parallel composition $f \otimes g : A \times X \to_c B \times Y$ of $f : A \to_c B$ and $g : X \to_c Y$},
}
\newcommand{\parmap}{\gls{parallel}}

\newglossaryentry{liesin}
{
  name={$\glsdiminished{x} \models \glsdiminished{U}$},
  text = {\models},
  description={A point $x$ in a space $A$ lies in an open $U$ of $A$},
}
\newcommand{\liesin}{\gls{liesin}}

\newglossaryentry{iso}
{
  name={$\glsdiminished{A} \cong \glsdiminished{B}$},
  text = {\cong},
  description={Objects $A$ and $B$ in a category are isomorphic (usually, either two sets are isomorphic or two spaces are homeomorphic)},
}
\newcommand{\iso}{\gls{iso}}

\newglossaryentry{oinclf}
{
  name={$\iota[\glsdiminished{U}]$},
  text = {\iota},
  description={The open embedding $\iota[U] : \{ A \mid U \} \hookto A$ of an open subspace $U$ of $A$ into the entire space},
}
\newcommand{\oinclf}[1]{\gls{oinclf} [{#1}]}
\newcommand{\oincl}[2]{\oinclf{#1} \left({#2}\right)}

\newglossaryentry{omeet}
{
  name={$\glsdiminished{a} \downarrow \glsdiminished{b}$},
  text = {\downarrow},
  description={The meet (intersection) of two basic opens $a$ and $b$ of a space $A$ (which is an open of $A$)},
}
\newcommand{\omeet}{\gls{omeet}}

\newglossaryentry{cov}
{
  name={$\glsdiminished{a} \triangleleft \glsdiminished{U}$},
  text = {\triangleleft},
  description={The basic open $a$ of a space $A$ is covered by the open $U$ of $A$},
}
\newcommand{\cov}{\glsadd{cov} \triangleleft}

\newglossaryentry{inteq}
{
  name={$\glsdiminished{a} \equiv \glsdiminished{b}$},
  text = {\equiv},
  description={Intensional equality of $a$ and $b$ (which must have the same type)},
}

\newcommand{\inteq}{\gls{inteq}}

\newglossaryentry{iff}
{
  name={\ensuremath{\Longleftrightarrow}},
  description={If and only if},
}
\newcommand{\ifandonlyif}{\gls{iff}}

\newglossaryentry{cto}
{
  name={$\glsdiminished{A} \to_c \glsdiminished{B}$},
  text = {\to_c},
  description={The set of continuous maps from a space $A$ to a space $B$},
}
\newcommand{\cto}{\gls{cto}}

\newglossaryentry{ndto}
{
  name={$\glsdiminished{A} \to_{nd} \glsdiminished{B}$},
  text = {\to_{nd}},
  description={The set of nondeterministic maps from a space $A$ to a space $B$},
}
\newcommand{\ndto}{\gls{ndto}}

\newglossaryentry{pto}
{
  name={$\glsdiminished{A} \to_p \glsdiminished{B}$},
  text = {\to_p},
  description={The set of partial maps from a space $A$ to a space $B$},
}
\newcommand{\pto}{\gls{pto}}

\newglossaryentry{ndpto}
{
  name={$\glsdiminished{A} \, \to_{nd, p}\, \glsdiminished{B}$},
  text = {\to_{nd, p}},
  description={The set of nondeterministic and partial maps from a space $A$ to a space $B$},
}
\newcommand{\ndpto}{\gls{ndpto}}

\newglossaryentry{Sierp}
{
  name={\ensuremath{\Sigma}},
  description={The Sierpi\'nski space, $\Sigma \iso \One_\bot \iso \PLower(\One)$},
}
\newcommand{\Sierp}{\gls{Sierp}}

\newglossaryentry{strue}
{
  name={\ensuremath{\top_{\Sierp}}},
  description={The ``true'' open point of the Sierpi\'nski space, $\top_{\Sierp} : \One \to_c \Sierp$},
}
\newcommand{\strue}{\gls{strue}}

\newglossaryentry{sfalse}
{
  name={\ensuremath{\bot_{\Sierp}}},
  description={The ``false'' closed point of the Sierpi\'nski space, $\bot_{\Sierp} : \One \to_c \Sierp$},
}
\newcommand{\sfalse}{\gls{sfalse}}

\newglossaryentry{Lifted}
{
  name={$\cdot_\bot$},
  text={\bot},
  description={The lifting monad on spaces},
}
\newcommand{\Lifted}[1]{#1_{\gls{Lifted}}}

\newglossaryentry{possibly}
{
  name={$\lozenge \glsdiminished{U}$},
  text = {\lozenge},
  description={The direct image of an open $U$ of $A$ in its partial and nondeterministic powerspace $\PLower(A)$ or some subspace of it},
}
\newcommand{\possibly}{\gls{possibly}}

\newglossaryentry{Branch}
{
  name={$\glsdiminished{p} \Rightarrow \glsdiminished{e}$},
  text = {\Rightarrow},
  description={A branch in a pattern match, where $p$ is a pattern such that, if matched, the expression $e$ results},
}
\newcommand{\Branch}{\gls{Branch}}

\newglossaryentry{wildcard}
{
  name={\ensuremath{\_}},
  description={A wildcard in a pattern, which matches anything},
}
\newcommand{\wildcard}{\gls{wildcard}}

\newglossaryentry{leto}
{
  name={$\glsdiminished{f} \leqq \glsdiminished{g}$},
  text = {\leqq},
  description={The truth order implication $f$ implies $g$ for maps $f, g : \Gamma \ndpto \bool$},
}
\newcommand{\leto}{\gls{leto}}

\newcommand{\commentt}[1]{}
\newcommand{\iimg}[1]{#1^*}

\newcommand{\refsection}[1]{\S\ref{#1}}
\newcommand{\hookto}{\hookrightarrow}

\newcommand{\xto}[1]{\to_{#1}}

\newcommand{\denote}[1]{\llbracket #1 \rrbracket}

\newcommand{\cat}[1]{\textbf{#1}}
\newcommand{\functor}[1]{\mathsf{#1}}
\newcommand{\fun}[2]{\lambda {#1}.\  {#2}}
\newcommand{\irule}[1]{\textsc{#1}}
\newcommand{\restrict}[2]{{#1}_{|{#2}}}

\tikzset{
 thicker/.style={line width=#1\pgflinewidth},
 thicker/.default={2},
}

\newcommand{\rootfindingcode}{
\mathsf{roots}_f &: \One \ndto \{ \One \mid \forall x \in K.\ f(x) \ne 0 \} + 
  \{ x : K \mid | f(x) | < \varepsilon \}
\\ \mathsf{roots}_f &\triangleq
\mathsf{case}(\mathsf{tt})
\begin{cases}
\oincl{\exists x \in K.\ |f(x)| < \varepsilon}{y}
  &\Branch \mathsf{inr}(\mathsf{simulate}(y))
\\ \oincl{\forall x \in K.\ f(x) \neq 0}{n}
  &\Branch \mathsf{inl}(n)
\end{cases}
}

\newcommand{\bijectivecorrespondence}[2]{#1 \quad &\cong \quad #2}

\hyphenation{Grothen-dieck}

\newcommand{\grammar}[1]{\grammarr{#1}}

\lstdefinelanguage{marshall} {
morekeywords={cut, left, right, True, False, let, in, fun, real, prop, bool},
otherkeywords={<,~>,=>,->,>,/\,\/,||,;;,=,+,*,-,:},
sensitive=false,
morecomment=[l]{!},
morestring=[b]", }

\lstset{
basicstyle=\ttfamily,
columns=fullflexible,
basewidth=0.5em,
keepspaces=true,
language=marshall,
keepspaces,
mathescape
}

\usepackage{letltxmacro}
\newcommand*{\SavedLstInline}{}
\LetLtxMacro\SavedLstInline\lstinline
\DeclareRobustCommand*{\lstinline}{%
  \ifmmode
    \let\SavedBGroup\bgroup
    \def\bgroup{%
      \let\bgroup\SavedBGroup
      \hbox\bgroup
    }%
  \fi
  \SavedLstInline
}


\begin{document}

\title{Computable decision making on the reals and other spaces}       
\subtitle{via partiality and nondeterminism}                     
\iflong
\subtitlenote{This is an extended version of a paper with the same name in the proceedings of the Thirty-Third Annual ACM/IEEE Symposium on Logic in Computer Science (LICS) in July 2018.}       %
\fi


\author{Benjamin Sherman}
\orcid{0000-0002-9569-3393}             
\affiliation{
  \institution{MIT CSAIL}            
}
\email{sherman@csail.mit.edu}          

\author{Luke Sciarappa}
\affiliation{
  \institution{MIT CSAIL}            
}
\email{lukesci@mit.edu}          

\author{Adam Chlipala}
\affiliation{
  \institution{MIT CSAIL}            
}
\email{adamc@csail.mit.edu}          

\author{Michael Carbin}
\affiliation{
  \institution{MIT CSAIL}            
}
\email{mcarbin@csail.mit.edu}          


\begin{abstract}
Though many safety-critical software systems use floating point to represent real-world input and output, programmers usually have idealized versions in mind that compute with real numbers. Significant deviations from the ideal can cause errors and jeopardize safety.
Some programming systems implement exact real arithmetic, which resolves this matter but complicates others, such as decision making.
In these systems, it is impossible to compute 
(total and deterministic) discrete decisions based on connected spaces such as $\R$.
We present programming-language semantics based on constructive topology
with variants allowing nondeterminism and/or partiality.
Either nondeterminism or partiality suffices to allow computable decision making on connected spaces such as $\R$.
We then introduce \emph{pattern matching on spaces}, a language construct for creating programs on spaces, generalizing pattern matching in functional programming, where patterns need not represent decidable predicates and also may overlap or be inexhaustive, giving rise to nondeterminism or partiality, respectively.
Nondeterminism and/or partiality also yield formal \emph{logics for constructing approximate decision procedures}.
We extended the Marshall language for exact real arithmetic with these constructs.
\end{abstract}

\begin{CCSXML}
<ccs2012>
<concept>
<concept_id>10003752.10003790.10003796</concept_id>
<concept_desc>Theory of computation~Constructive mathematics</concept_desc>
<concept_significance>500</concept_significance>
</concept>
<concept>
<concept_id>10003752.10010124.10010131</concept_id>
<concept_desc>Theory of computation~Program semantics</concept_desc>
<concept_significance>100</concept_significance>
</concept>
<concept>
<concept_id>10002950.10003741.10003746</concept_id>
<concept_desc>Mathematics of computing~Continuous functions</concept_desc>
<concept_significance>300</concept_significance>
</concept>
<concept>
<concept_id>10002950.10003741.10003742.10003743</concept_id>
<concept_desc>Mathematics of computing~Point-set topology</concept_desc>
<concept_significance>100</concept_significance>
</concept>
</ccs2012>
\end{CCSXML}

\ccsdesc[500]{Theory of computation~Constructive mathematics}
\ccsdesc[100]{Theory of computation~Program semantics}
\ccsdesc[300]{Mathematics of computing~Continuous functions}
\ccsdesc[100]{Mathematics of computing~Point-set topology}

\keywords{locale theory, constructive analysis}  

\maketitle

\section{Introduction}

Ensuring the safety of software controlling cyber-physical systems can
be challenging, at least in part due to the need to compute with values
that are continuous in nature, such as space, time, magnitude, and probability.
When these values are represented unsoundly, such as with finite precision,
failure can result from numerical error alone.
Verification necessitates both guaranteeing accuracy of computations with
continuous values and idealized reasoning about a system's
behavior.

Programming systems implementing exact real arithmetic \cite{oconnor2008, lamcra, marshall} do guarantee
accuracy and have been used to develop verified cyber-physical systems \cite{roscoq}.
While these programming systems do ease
development of traditionally continuous computations on the reals,
there has been little investigation of how to soundly
incorporate decision-making: computations from the reals ($\R$) to the
Booleans ($\bool$).  Classic results prove that it is impossible to compute
(total and deterministic) discrete decisions on connected spaces such as $\R$
\cite{weihrauchintro}\note{cite Brouwer per review \#1}.

\commentt{
\todo{ideally we should say, "We show that the only way to compute discrete
decision on connected spaces is"} 
}

However, we show that by allowing partiality or nondeterminism into the
computational model, we can enable decision-making while retaining fairly
strong computational abilities. We present programming-language semantics based
on \emph{constructive topology} with variants allowing nondeterminism and/or
partiality. 

Constructive topology, in the form of \emph{locale theory}, provides a single
programming language in which it is possible to build and execute programs that
compute with continuous values and to reason about these programs in terms of
their mathematical descriptions. In this programming language (category)
\cat{FSpc}, types (objects) are \emph{spaces} and programs (morphisms) are
\emph{continuous maps}.

\paragraph{Types are spaces.}
Spaces are defined as theories of \emph{geometric logic} \cite{vickersloctopspaces}: propositional
symbols describe the core observable properties of the space, and axioms
describe which properties imply others. Points of a space are models of its
theory.

For example, the theory for $\R$ has as its propositional symbols the open balls with rational centers, $q - \varepsilon < \cdot < q + \varepsilon$ (for each $q : \rat$, $\varepsilon : \rat^+$), and an
example axiom is the one $ \top \le \bigvee_{q : \rat} \left( q - \varepsilon <
\cdot < q + \varepsilon \right)$ that says (for any $\varepsilon : \rat^+$)
that without assumptions ($\top$), a point must be within $\varepsilon$ of
\emph{some} rational $q$.
Axioms with disjunctions on the right, like this one, are called \emph{open covers}.
They can be read computationally. For instance, since the point $\pi$ lies in $\top$ (as every point does),
it must lie in some ball of radius 1 with a rational center,
and indeed it should be able to \emph{compute} any such rational:
3 would be one possible choice, since $3 - 1 < \pi < 3 + 1$.

\paragraph{Programs are continuous maps.} One
defines a continuous map $f : A \cto B$ by programming how it reduces open
covers of $B$ to open covers of $A$.  Accordingly, computation is pull-based,
where a composition of functions successively reduces open covers of the output
to open covers of the input, at which point the input computes which open of
the cover it lies in, which corresponds to a particular open that the output
lies in. Constructive topology has surprisingly strong computational abilities
\cite{lamcra, escardoinfinite, simpson1998lazy}, such as the ability to compute
the maximum that a real-valued function attains over a compact-overt
space (see Definition \ref{compact-overt}). 

\subsection{Contributions}

Making nontrivial total and deterministic decisions based on connected spaces
is impossible: any continuous map $f : C \cto D$ from a connected space $C$ to
a discrete space $D$ must be constant.  We demonstrate that decisions
\emph{can} be made, however, by permitting either partiality or nondeterminism,
and we continue to then present the following contributions:

\commentt{
 \refsection{parnondet} describe an
integrated characterization of these concepts for continuous maps.

Guided by
this characterization, we generalize two tools from the discrete world, pattern
matching and decision procedures, yielding two programming-language constructs
for constructive topology that facilitate decision-making with continuous
values}

\paragraph{Partial and/or nondeterministic maps.} \refsection{parnondet} defines
partiality and nondeterminism as they relate to continuous maps and
\refsection{openmaps} characterizes the open maps and open embeddings as those
continuous maps having partial and/or nondeterministic inverses. While each of
these subjects has been studied individually in the context of constructive
topology, we contribute the first integrated characterization relating them.

\paragraph{Pattern matching on spaces.} \refsection{patterns} generalizes
pattern matching on inductive types in functional programming to spaces. It
differs in that patterns need not correspond to decidable predicates, and
patterns are allowed to overlap or fail to be injective, yielding nondeterminism, or be inexhaustive,
yielding partiality.

\paragraph{Formal logics for approximate decision procedures.}
\refsection{bcover} generalizes the decidable predicates of functional
programming to \emph{approximately} decidable predicates on spaces,
which may either be partial or nondeterministic.
Because spaces often have few decidable predicates, this
relaxation is essential for decision-making on spaces.
The Boolean
algebra of decidable predicates generalizes to a quasi-Boolean algebra, and
quantification over finite sets is generalized to quantification over
\emph{compact-overt} spaces (see Definition \ref{compact-overt}). The partial
logic and nondeterministic logic are observed to be duals.

\paragraph{Case study.} We have extended the Marshall language~\cite{marshall}
for exact real arithmetic with versions of these constructs and in
\refsection{applications} present two example programs that make critical
use of those constructs.

\commentt{
 We present a semantics for the construct that enables it to
+operate in either of is partial and/or nondeterministic configurations, thereby
+enabling it to be instantiated in a variety of different programming
+environments.
}

With these additional tools, constructive topology could conceivably be used
for applications such as formally verified cyber-physical systems or 
model checking of continuous systems.

\commentt{

 We implemented versions of these
constructs in the Marshall language for exact real arithmetic, and in
\refsection{applications} describe two example applications that make critical
use of those constructs:

\begin{itemize}
\item \emph{Approximate root finding:} Suppose we have an arbitrary continuous function $f : K \cto \R$, where $K$ is a compact-overt space (such as the unit interval $[0, 1]$), and want to determine if it has any roots. In general, it is not decidable whether $f$ has any roots, but an approximation is possible: fix a tolerance $\varepsilon > 0$. Then at least one of the following statements must hold:
\begin{itemize}
\item There is some $x \in K$ such that $|f(x)| < \varepsilon$.
\item For every $x \in K$, $f(x) \neq 0$.
\end{itemize}
Though each of the above statements is in general undecidable, we use the constructs devised in this paper to define a function that nondeterministically verifies that one of the above two statements holds, in the former case \emph{computing} such an $x$ that is ``almost a root.''

\item \emph{Car approaching a yellow light:}
Consider an autonomous car that is approaching a traffic light that has just turned yellow. To ensure safety, the car must be outside of the intersection when the light turns red. We model the problem with additional concrete detail, show that it is impossible to solve deterministically, and derive a nondeterministic program that is proven safe.
\end{itemize}
}

\begin{shortversion}
A longer version of this paper \cite{arXiv} provides further technical detail and proofs.
\end{shortversion}

\section{Constructive topology}

This section reviews locale theory, a constructive theory of topology that provides a semantic and computational foundation for programming with spaces. Readers interested in a more thorough introduction may wish to consult \citeauthor{topologyvialogic}'s \emph{Topology via Logic} \citeyear{topologyvialogic}.

\paragraph{Preliminaries.}
We intend mathematical statements to be interpreted within a constructive metatheory with a universe of impredicative propositions $\Prop$, potentially formalizable within, for instance, the Calculus of Constructions\footnote{
It is possible to formulate a predicative analogue of locale theory known as \emph{formal topology}, which makes more clear the computational content of constructive topology.
\grammarr{This} does impose some difficulties that require some changes. For instance,
the construction of product spaces is generally impredicative, but it is possible
to instead use \emph{inductively generated formal spaces} \cite{coquand2003}, which has products even in a predicative setting. All spaces used in this paper are inductively generated \cite{coquand2003, vickersmetric, vickersdoublepowerlocale, sublocFT, vickersconnected}.
Palmgren \citeyear{palmgren2003} offers a more careful treatment of predicativity and universes in formal topology.
}.
We use the term ``type'' to refer to a type and the term ``set'' to refer to what is often called a setoid or a Bishop set \cite{bishop}: a type together with a distinguished equivalence relation on it, which we denote by $=$. 
If $A$ and $B$ are both sets, then the notation $f : A \to B$ means that $f$ is a morphism of sets (i.e., it maps equivalent elements of $A$ to equivalent elements of $B$).
For objects $A, B$ of a category, let the notation $A \cong B$ indicate that they are isomorphic.

\begin{definition}
A \emph{space}\footnote{
Since all topological notions in this article are pointfree, we coopt terminology from classical topology without fear of confusion. For instance, we say ``space'' rather than ``locale'' when describing the pointfree analogue of spaces.
} $A$ is a distributive lattice $\Open{A}$ that has top and bottom elements, $\top$ and $\bot$, respectively, and that has all joins such that binary meets distribute over all joins:
\[
U \wedge \bigvee_{i : I} V_i = \bigvee_{i : I} U \wedge V_i.
\]
\end{definition}

We call the lattice $\Open{A}$ the \emph{opens} of $A$. This lattice describes the observable or ``affirmable'' properties of $A$ \cite{topologyvialogic}. If $U \le \bigvee_{i : I} V_i$, we call the family $(V_i)_{i : I}$ an \emph{open cover} of $U$.

\begin{definition}
\label{def:point}
A \emph{point} $x$ of a space $A$ is a subset $(x \models \cdot) : \Open{A} \to \Prop$ (read ``$x$ lies in'') such that
\begin{mathpar}
\inferrule*[right=join]
  {x \models U \\ U \le \bigvee_{i : I} V_i}
  {\exists i : I.\ x \models V_i}

\inferrule*[right=meet-0]
  { }
  {x \models \top}

\inferrule*[right=meet-2]
  {x \models U \\ x \models V}
  {x \models U \wedge V}.
\end{mathpar}
\end{definition}

The formal proof that a point satisfies the above three rules both justifies the consistency of its definition and provides its computational content.
Intuitively, $x \models U$ means we have some knowledge $U$ about $x$. \irule{join} says that it is possible to refine existing knowledge about $x$ to get an even sharper estimate of where $x$ lies. When a point $x$ that lies in $U$ is presented with an open cover $U \le \bigvee_{i : I} V_i$, it uses the \emph{proof} of the covering relationship to \emph{compute} some open $V_i$ that $x$ also lies in. The index $i$ is a concrete answer that indicates where the point lies. 
\irule{meet-0} says that we know \emph{something} about $x$ (which we can then refine with \irule{join}), and \irule{meet-2} says that we can assimilate two pieces of knowledge about $x$ into one, which assures that they are mutually consistent.

\begin{definition}
\label{def:cmap}
A \emph{continuous map} $f : A \cto B$ between spaces is a map $\iimg{f} : \Open{B} \to \Open{A}$, called an \emph{inverse image map}, that preserves all joins, $\top$, and binary meets, i.e., it satisfies
\begin{mathpar}
\inferrule*[right=join]
  {U \le \bigvee_{i : I} V_i}
  {f^*(U) \le \bigvee_{i : I} f^*(V_i)}

\inferrule*[right=meet-0]
  { }
  {\top \le f^*(\top)}

\inferrule*[right=meet-2]
  { }
  {f^*(U) \wedge f^*(V) \le f^*(U \wedge V)}.
\end{mathpar}
\end{definition}

A continuous map $f : A \cto B$ transforms covers on $B$ into covers on $A$. Spaces and continuous maps form a cartesian monoidal category we call \cat{FSpc} (for \emph{formal spaces})\footnote{
\cat{FSpc} is often called \cat{Loc}, for \emph{locales}.}.
The terminal object is the one-point space $\One$, whose lattice of opens $\Open{\One}$ is $\Omega$, where $U \le V$ if $U$ implies $V$. Points of a space $A$ can be identified with continuous maps $\One \cto A$, and in particular the \irule{join} and \irule{meet} rules for continuous maps reduce to the corresponding rules for points.
Two continuous maps are equal if they have the same inverse image maps. One can think of the inverse image map as a behavioral \emph{specification} and the formal proof that the continuous map preserves meets and finitary joins as an \emph{implementation} of that specification.

Given a space $A$ and an open $U : \Open{A}$, we can form the open subspace $\{ A \mid U \}$ of $A$ by making $\Open{\{A \mid U \}}$ a quotient of $\Open{A}$, identifying opens $P, Q : \Open{A}$ in $\{ A \mid U \}$ when $P \wedge U = Q \wedge U$.

\section{Decision making with partiality and nondeterminism}
\label{parnondet}

The real line $\R$ is \emph{connected}, meaning that any continuous map $f : \R
\cto D$ to a discrete set $D$ must be a constant map. In particular, every map
$f : \R \cto \bool$ is constant.  The practical implications of connectedness
are severe: it is impossible to (continuously) make (nontrivial) discrete
decisions over variables that come from connected spaces such as $\R$.

\begin{proposition}
Continuous maps $f : A \cto \bool$ are in bijective correspondence with pairs of opens $(P, Q)$ of $A$
that are \emph{covering}, i.e., $\top \le P \vee Q$, and \emph{disjoint}, i.e., $P \wedge Q \le \bot$.
\label{bool-pairs}
\end{proposition}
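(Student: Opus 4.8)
The plan is to exhibit the bijection explicitly in both directions and check that the two constructions are mutually inverse. The key observation is that $\bool$, as a space, is the coproduct $\One + \One$; its lattice of opens $\Open{\bool}$ is generated by two basic opens, call them $t$ (the open where the value is $\mathsf{true}$) and $f$ (the open where the value is $\mathsf{false}$), subject to exactly the relations $t \wedge f = \bot$ (a Boolean value is not both) and $\top = t \vee f$ (a Boolean value is one or the other). So an element of $\Open{\bool}$ is determined by an independent choice of ``include $t$ or not'' and ``include $f$ or not,'' and the four opens are $\bot, t, f, \top$.

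Given a continuous map $g : A \cto \bool$, I would set $P \triangleq g^*(t)$ and $Q \triangleq g^*(f)$. Then disjointness follows from the \irule{meet-2} and \irule{meet-0} properties: $P \wedge Q = g^*(t) \wedge g^*(f) \le g^*(t \wedge f) = g^*(\bot) \le \bot$ (using that $g^*$ preserves the empty join, hence $\bot$). Covering follows from \irule{join}: $\top \le g^*(\top) = g^*(t \vee f) \le g^*(t) \vee g^*(f) = P \vee Q$. Conversely, given a covering, disjoint pair $(P, Q)$, I would define $g^* : \Open{\bool} \to \Open{A}$ on the four opens by $g^*(\bot) = \bot$, $g^*(t) = P$, $g^*(f) = Q$, $g^*(\top) = \top$, extended by the requirement that it send joins to joins. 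One then checks this is well-defined (the only nontrivial relations in $\Open{\bool}$ are $t \wedge f = \bot$ and $t \vee f = \top$, and these are respected precisely because $P \wedge Q \le \bot$ and $\top \le P \vee Q$) and that it satisfies the three conditions of Definition \ref{def:cmap}. Preservation of $\top$ and of the relevant binary meets is a short finite case analysis over the four opens; preservation of arbitrary joins reduces, by distributivity, to the finite case.

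The two passages are inverse by construction: starting from $g$, recovering $(g^*(t), g^*(f))$ and rebuilding the inverse image map gives back $g^*$ on the generators $t, f$, hence on all of $\Open{\bool}$ since both agree on joins and on $\bot, \top$; starting from $(P,Q)$, the rebuilt map has $g^*(t) = P$ and $g^*(f) = Q$ on the nose. Finally I would note that the correspondence is compatible with the chosen equality of continuous maps, since two maps into $\bool$ are equal exactly when their inverse image maps agree, i.e.\ exactly when the associated pairs are equal.

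The main obstacle — really the only point requiring care — is the well-definedness of $g^*$ in the reverse direction: one must be sure that presenting $\Open{\bool}$ by the generators $t, f$ and relations $t \wedge f = \bot$, $t \vee f = \top$ genuinely captures the lattice of opens of the two-point discrete space, so that assigning images to $t$ and $f$ subject to the translated relations suffices to define a lattice-and-join-preserving map. Once that presentation is pinned down, everything else is a finite verification.
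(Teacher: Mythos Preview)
Your proof is correct and takes essentially the same approach as the paper's (brief) proof sketch: define $P \triangleq f^*(\cdot = \btrue)$ and $Q \triangleq f^*(\cdot = \bfalse)$, derive covering from \irule{meet-0} (and join preservation) and disjointness from \irule{meet-2} (and preservation of $\bot$). You simply fill in the reverse direction and the mutual-inverse check that the paper leaves implicit.
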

\begin{proof}[Proof sketch]
Since $f^*$ preserves joins, it is specified entirely by its behavior on the two basic opens, $P \triangleq f^*(\cdot = \btrue)$ and $Q \triangleq f^*(\cdot = \bfalse)$. Since $f^*$ preserves $\top$ (\irule{meet-0}), $P$ and $Q$ are covering, and since $f^*$ preserves binary meets (\irule{meet-2}), $P$ and $Q$ are disjoint.
\end{proof}

While it is impossible to make discrete decisions on connected spaces $A$ that are total and
deterministic, we \emph{can} make decisions that are either partial (only
defined on some open subspace of the input space) or nondeterministic (could
potentially give different answers even when given the exact same input).
Partiality relaxes the requirement that the inverse image map preserves $\top$ (\irule{meet-0}),
while nondeterminism relaxes the requirement that the inverse image map preserves binary meets (\irule{meet-2}).
Accordingly, partial $\bool$-valued maps correspond to pairs of opens that are not necessarily covering, and nondeterministic $\bool$-valued maps correspond to pairs of opens that are not necessarily disjoint.
 
\commentt{ 

Maps $A \ndto \bool$ correspond to opens $(P,Q)$ that are \emph{covering}, i.e.,  $\top \le P \vee Q$, and maps $A \pto \bool$ correspond to opens $(P, Q)$ that are \emph{disjoint}, i.e., $P \wedge Q \le \bot$.
We think of $P$ as the ``true'' region and $Q$ the ``false'' region. 

Partiality eliminates the covering requirement and
nondeterminism eliminates the disjointness requirement.  \todo{Similar to
explanation in Proposition \ref{bool-pairs}}
}

\commentt{
While it is impossible to make discrete decisions on $\R$ that are total and
deterministic, we \emph{can} make decisions that are either partial (only
defined on some open subspace of the input space) or nondeterministic (could
potentially give different answers even when given the exact same input).
}

\commentt{
This section defines notions of partiality and nondeterminism related to
continuous maps and characterizes the open maps and open embeddings as those
continuous maps having partial and/or nondeterministic inverses. While each of
these subjects has been studied individually in the context of constructive
topology, we contribute the first integrated characterization relating them.
}

\commentt{
Recall from Definition \ref{def:point} that points (or continuous maps) must
satisfy \irule{join} (preserving joins), \irule{meet-0} (preserving $\top$),
and \irule{meet-2} (preserving binary meets). Each rule corresponds to a
computational property. The \irule{join} rule states that we can refine our
knowledge of a point by presenting it with an open cover.

\paragraph{Partiality.} The \irule{meet-0} rule enforces totality: viewing
$\top$ as the predicate representing the entire space, \irule{meet-0} says that
a point must lie in the entire space. Accordingly, maps $\Open{A} \to \Prop$
that satisfy \irule{join} and \irule{meet-2} but \emph{not} \irule{meet-0} are
\emph{partial points} of $A$. There is a partial point of $A$ that does not lie
in \emph{any} opens of $A$.

\paragraph{Nondeterminism.} The \irule{meet-2} rule enforces determinism.
Spatially, the rule says that if a point lies in two opens, it must lie in
their intersection. Computationally, it says that it should be possible to
consistently reconcile different answers given by different refinements
computed by use of the \irule{join} rule. Accordingly, maps $\Open{A} \to
\Prop$ that satisfy \irule{join} and \irule{meet-0} but \emph{not}
\irule{meet-2} are \emph{nondeterministic points} of $A$.  We can still compute
with nondeterministic points in the sense that if $x$ is a nondeterministic
point of $A$, and $\top \le \bigvee_{i : I} U_i$ is an open cover of $A$, then
$x$ can compute some $U_i$ which it might lie in (and this estimate can later
be consistently refined).  
}

In this section, we present categories whose objects are spaces and whose
morphisms are like continuous maps, but the inverse image maps need not
necessarily preserve $\top$ or binary meets (see Fig. \ref{fig:parnondet-diamond}).
The remainder of this
section characterizes these partial and/or nondeterministic maps and the monads
that represent them.

\subsection{Partiality}

The \irule{meet-0} rule enforces totality: viewing $\top$ as the predicate
representing the entire space, \irule{meet-0} says that a point must lie in the
entire space.  Eliminating \irule{meet-0} permits definition of a continuous
map that is only defined on an open subspace of the domain. 

\begin{definition}
A \emph{partial map} $f$ from $A$ to $B$, written $f : A \pto B$, is a map $f^* : \Open{B} \to \Open{A}$ that preserves joins and binary meets but not necessarily $\top$.
These maps form a category $\cat{FSpc}_p$.
\end{definition}

\begin{example}
Consider the task of approximately comparing a real number with 0. We can define a partial comparison
$
\mathsf{cmp} : \R \pto \bool
$
by only defining a continuous map on the open subspace $\{ \R \mid \cdot \neq 0 \}$ of $\R$. We specify its observable behavior with its inverse image map
\begin{align*}
\mathsf{cmp}^*(\cdot = \btrue) &\triangleq \cdot > 0
& \mathsf{cmp}^*(\cdot = \bfalse) &\triangleq \cdot < 0.
\end{align*}

The inverse image map $\mathsf{cmp}^*$ in fact defines a partial map, as $\mathsf{cmp}^*$ preserves joins and binary meets, but it is not total, since it fails to preserve $\top$.
\end{example}
\begin{proof}
To confirm that $\mathsf{cmp}^*$ preserves binary meets, it suffices to check binary meets of distinct basic opens, so we confirm
\begin{align*}
\mathsf{cmp}^*((\cdot = \btrue) \wedge (\cdot = \bfalse))
&= \mathsf{cmp}^*(\bot) 
\\ &= \bot
  = (\cdot > 0) \wedge (\cdot < 0)
\\ &= \mathsf{cmp}^*(\cdot = \btrue) \wedge \mathsf{cmp}^*(\cdot = \bfalse).
\end{align*}

However, $\mathsf{cmp}^*$ does not preserve $\top$, since
$
\top \nleq \mathsf{cmp}^*(\top) = (\cdot < 0) \vee (\cdot > 0).
$
\end{proof}

There is a bijective correspondence between partial maps and continuous maps defined on some open subspace of the domain.
\commentt{
As a result, there is a bijective correspondence between partial values of $A$, $\One \pto A$, and points of $A$ ``guarded'' by some arbitrary proposition $P$, $\sum_{P : \Prop} (P \to (\One \cto A))$ (using the correspondence $\Open{\One} \cong \Prop$). \grammarr{This} tells us that proving that a partial point is actually total could require proving an arbitrary proposition. Assuming classical logic, the proposition $P$ is either true, in which case we indeed have a point of $A$, or it is false, in which case no point of $A$ was defined.
}

\subsection{Nondeterminism}

The \irule{meet-2} rule enforces determinism.  Spatially, the rule says that if
a point lies in two opens, it must lie in their intersection. Computationally,
it says that it should be possible to consistently reconcile different answers
given by different refinements computed by use of the \irule{join} rule.
Eliminating \irule{meet-2} allows the definition of programs whose observable
behavior might depend on the exact implementation of their inputs
(specifically, the formal proofs that their inputs preserve joins and finitary
meets). Rather than viewing such behavior as breaking the abstraction provided
by the equivalence relation on points (since points that lie in the same opens
may be treated differently), we can instead choose to maintain this abstraction
and view such behavior as fundamentally nondeterministic.

\begin{definition}
A \emph{nondeterministic map} $f$ from $A$ to $B$, written $f : A \ndto B$, is a map $f^* : \Open{B} \to \Open{A}$ that preserves joins and $\top$ but not necessarily binary meets.
These maps form a category $\cat{FSpc}_{nd}$.
\end{definition}

For instance, we can perform a nondeterministic approximate comparison of a real number with 0.
\begin{example}
\label{ex:cmp_nd}
Fix some error tolerance parameter $\varepsilon > 0$. We may define a total but nondeterministic approximate comparison with 0,
$
\mathsf{cmp} : \R \ndto \bool,
$
allowing error up to $\varepsilon$, by specifying its observable behavior with the inverse image map
\begin{align*}
\mathsf{cmp}^*(\cdot = \btrue) &\triangleq \cdot > - \varepsilon
& \mathsf{cmp}^*(\cdot = \bfalse) &\triangleq \cdot < \varepsilon.
\end{align*}

We can confirm that $\mathsf{cmp}^*$ in fact defines a nondeterministic map, as it preserves joins and $\top$ but fails to preserve binary meets.
\end{example}
\begin{proof}
Since $\mathsf{cmp}$'s codomain is discrete, it trivially satisfies \irule{join}.
We confirm it preserves $\top$:
\begin{align*}
\mathsf{cmp}^*(\top)
&= \mathsf{cmp}^*(\cdot = \btrue) \vee \mathsf{cmp}^*(\cdot = \bfalse) 
  = (\cdot > - \varepsilon) \vee (\cdot < \varepsilon) 
  = \top.
\end{align*}
However, it fails to preserve binary meets, since
\[
\mathsf{cmp}^*((\cdot = \btrue) \wedge (\cdot = \bfalse))
= \mathsf{cmp}^*(\bot) = \bot
\]
but
\begin{align*}
\mathsf{cmp}^*(\cdot = \btrue) \wedge \mathsf{cmp}^*(\cdot = \bfalse)
&= (\cdot > - \varepsilon) \wedge (\cdot < \varepsilon)
  = - \varepsilon < \cdot < \varepsilon,
\end{align*}
which is not $\bot$.
\end{proof}

\subsection{Both partiality and nondeterminism}

\begin{definition}
A \emph{nondeterministic and partial map} $f$ from $A$ to $B$, written $f : A \ndpto B$, is a map $f^* : \Open{B} \to \Open{A}$ that preserves joins but not necessarily any meets.
These maps form a category $\cat{FSpc}_{nd, p}$ (equivalent to the category of suplattices).
\end{definition}

\begin{figure}[!h]
\begin{equation*}
\begin{tikzcd}[column sep=tiny]
& \cat{FSpc}
   \arrow[dl, tail, "\text{forget \irule{meet-2}}"']
   \arrow[dr, tail, "\text{forget \irule{meet-0}}"]
&
\\ \cat{FSpc}_{nd}
    \arrow[dr, tail, "\text{forget \irule{meet-0}}"']
& & \cat{FSpc}_p
       \arrow[dl, tail, "\text{forget \irule{meet-2}}"]
\\ & \cat{FSpc}_{nd, p} &
\end{tikzcd}
\end{equation*}
\caption{The lattice of categories representing potentially nondeterministic ($nd$) or partial ($p$) maps on spaces.
}
\label{fig:parnondet-diamond}
\end{figure}
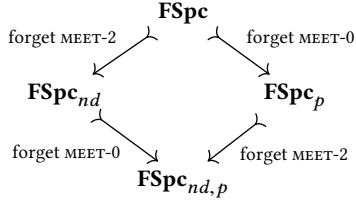

\subsection{Monads and summary}

Potentially allowing partiality and nondeterminism yields
a lattice of categories that represent nondeterministic and partial maps,
depicted in Fig. \ref{fig:parnondet-diamond}, where each arrow denotes a
faithful (``forgetful'') functor where a particular rule is no longer required
for inverse image maps. These forgetful functors have right adjoints, such that
they induce a family of (strong) monads on $\cat{FSpc}$:
$\cdot_\bot$ for representing partiality, 
$\PLowerP$ for nondeterminism,
and $\PLower$ for both\footnote{ Each of these strong monads preserves
inductive generation of spaces \cite{topologyvialogic,
vickersdoublepowerlocale}.  }.
Their adjunctions give the correspondences
\begin{align*}
\bijectivecorrespondence{A \ndpto B}{A \cto \PLower(B)}
\\ \bijectivecorrespondence{A \ndto B}{A \cto \PLowerP(B)}
\\ \bijectivecorrespondence{A \pto B}{A \cto B_\bot}.
\end{align*}

Accordingly, it is possible to use these monads to have access to partiality and/or nondeterminism within the language of continuous maps.

\commentt{
\subsection{Catch all (Delete)}

To understand how partiality and nondeterminism arise by forgetting
\irule{meet-0} and \irule{meet-2}, respectively, it is instructive to observe
how the categories of partial and nondeterministic values restrict to the
discrete spaces. For discrete spaces, an inverse image map $f^* :
\Open{\mathsf{Discrete}(B)} \to \Open{\mathsf{Discrete}(A)}$ corresponds to a
relation $R \subset A \times B$. \irule{join} says only that $R$ must respect
the equivalence relations on $A$ and $B$. \irule{meet-0} says that the $R$ must
be left-total, and \irule{meet-2} says that $R$ must be functional.
Accordingly, $\cat{FSpc}_{nd, p}$ restricted to discrete spaces corresponds to
relations on sets, $\cat{FSpc}_{nd}$ to multi-valued functions (i.e.,
left-total relations), and $\cat{FSpc}_{p}$ to partial functions (i.e.,
functional relations).
}

\section{Pattern matching}
\label{patterns}

\commentt{
\todo{Fix this paragraph}
Our development thus far demonstrates a direct extension of existing
programming models to programming on $\R$ (as is the goal of languages and
systems such as Rosa \cite{soundreals} and Marshall \cite{marshall}) and -- more generally -- programming
on spaces, that would enable developers to construct computations from $\R$ to
$\bool$. Specifically, by the observation that each such computation is
specified by a pair of opens $(P, Q)$ that each correspond to $\btrue$ and
$\bfalse$, respectively (Proposition~\ref{bool-pairs}), we could extend a
given language with syntax that enables a developer to specify their opens of
interest. For example, a programmer could specify our running example of
comparison in an extended language as follows:

\begin{verbatim}
let cmp = fun r : R => (r > -e, r < e);; 
\end{verbatim} where $P \triangleq r > -e$ and $Q \triangleq r < e$.
}

Often, programmers would like to compose a decision
on $\R$ with other computations that depend on the decision and
thus associate each condition with a corresponding computation. This
programming pattern resembles pattern matching in traditional functional
programming, and therefore, in this section, we identify and present general
pattern matching for spaces.  Our constructions admit partial and/or
nondeterministic pattern matches, where in the former case the collection of
patterns may not be exhaustive, and in the latter they may overlap. While the
syntax of a pattern match determines a unique map that is potentially partial
and nondeterministic, there are simple conditions that ensure that a map is
total or deterministic:
\begin{enumerate}
\item \emph{Totality}: together, the cases cover the entire input space.
\item \emph{Determinism}: patterns are disjoint and injective.
\end{enumerate}

\commentt{
Formal (constructive) proofs of these properties contribute to the
computational behavior of the pattern match.
\todo{Ben, do you want to remove this?}
}

\commentt{
For example, consider the computation underlying the controller of an
autonomous car that upon the observation of  a yellow light at an intersection
computes an acceleration that corresponds to either stopping the car or
accelerating it through the intersection.

Fundamentally, we desire a controller that satisfies the property that all
computed accelerations result in safe state of the car. Specifically, let
$\mathsf{safe} : \Open{\R}$ be the open $\mathsf{safe} \triangleq (\cdot <
0) \vee (w < \cdot)$ that denotes that the car is in a safe position if it is
before the intersection $(\cdot < 0)$ or after the intersection $(w < \cdot)$,
where $w : \R$  marks the end of the intersection.

Then, if we 1) define the state of the car as a value $s : \mathsf{CarState}
\triangleq \R \times \R$ that denotes the car's position and velocity and 2) let
$\mathsf{pos} : \mathsf{CarState} \times \mathsf{Accel} \cto \R$ compute where
the car will lie when the light turns red (given the car's state when the light
turns yellow and a chosen acceleration), then the programming task is to deliver a function
\begin{align*}
 f : \mathsf{CarState} \cto \{ y : \mathsf{CarState} \times \mathsf{Accel} \mid
\mathsf{safe}(\mathsf{pos}(y)) \}
\end{align*}
where $ \mathsf{fst} \circ f = \mathsf{id}$,  $\mathsf{Accel} : R$.

The skeleton of one such implementation can use a {\em pattern match}:
\begin{align*}
f(s) &\triangleq \mathsf{case}(s)
\begin{cases}
\oincl{\fun{c}{a_\text{go}(c) < a_\text{max}}}{c'}
  &\Branch (c', a_\text{go}(c'))
\\ \oincl{\fun{c}{a_\text{min} < a_\text{stop}(c)}}{c'}
  &\Branch (c', a_\text{stop}(c'))
\end{cases}.
\end{align*}
where, for example, $a_\text{go}$ computes the acceleration for continuing
through the light and is bounded from above by $a_\text{max}$ (the car's
maximum acceleration) and $a_\text{stop}$ computes the acceleration for
stopping the car and is bounded from below by $a_\text{min}$ (the car's
maximum breaking force).

This implementation composes a discrete choice (the decision to either stop or
go) with the computation of the acceleration such that the implementation's
formal proof of correctness proceeds by showing that the outputs of each branch
is indeed within the required subspaces indicated by the output type.
\todo{more motivation - explain the alternative}
}

\subsection{Pattern families}
\label{s:patternfamilies}

This section characterizes those families of patterns that may be used to match on a scrutinee that comes from a space $A$ (if nothing is to be assumed about the branches). The idea is that we compose a function $f : A \xto{\mathcal{C}} B$ by factoring through a disjoint sum over a collection of spaces representing the possible patterns and branches, $\sum_{i : I} U_i$, i.e., a composition
\begin{equation*}
\begin{tikzcd}
A
\arrow[r, "\mathsf{inv}"]
& \sum_{i : I} U_i
\arrow[r, "e"]
& B
\end{tikzcd}
\end{equation*}
of a ``pattern matching'' part $\mathsf{inv}$ followed by the ``branch execution'' part $e$. The collection of branches $(e_i : U_i \xto{\mathcal{C}} B)_{i : I}$ exactly correspond to the branch-execution function $e$, but the pattern-matching part $\mathsf{inv}$ is more interesting; this section will address those families of patterns that may yield valid functions of this sort.

Semantically, we think of a single pattern as representing a space $U$ together with a map $p : U \cto A$ that represents the possibility that the scrutinee can be represented as a point in the image of $p$. For a single pattern $p : U \cto A$ to be implementable, it must have a well-behaved inverse $p^{-1} : A \ndpto U$ that is partial and also may be nondeterministic.
If we are building a program that is deterministic, then $p^{-1}$ should be deterministic.
If we are building a program that is total, then we do not need each $p^{-1}$ to be total, but we do need the collection of them to cover $A$. 
We will find that \emph{open maps} are exactly those with well-behaved (nondeterministic and partial) inverses, and \emph{open embeddings} are open maps whose inverses are deterministic.

\subsection{Open maps and open embeddings}
\label{openmaps}

In pattern matching for functional programming, one may pattern match on an inductive type by checking whether it has the form of a particular constructor applied to some argument (i.e., it is in the image of the map defined by a particular constructor). The analogues of constructors for pattern matching on spaces are the open maps and the open embeddings.

\subsubsection{Open maps}

\begin{definition}[\citealt{elephant}]
A continuous map $f : A \cto B$ is an \emph{open map} (which we may denote by $f : A \to_o B$) if the inverse image map $f^* : \Open{B} \to \Open{A}$ has a left adjoint $f_! : \Open{A} \to \Open{B}$, called the \emph{direct image map}, that satisfies the Frobenius law,
$
f_!(U \wedge f^*(V)) = f_!(U) \wedge V.
$
\end{definition}

That is, $f$ is an open map if the \emph{image} of any open in $A$ is open in $B$; the direct image map provides this mapping.

\begin{longversion}
\begin{example}
Identity maps $\mathsf{id} : A \cto A$ are open maps, and the composition of open maps is an open map.
\end{example}
\begin{proof}
We have $\mathsf{id}_!(U) = U$, which clearly satisfies $\mathsf{id}_! \dashv \mathsf{id}^*$ since they are all identity maps. It is also immediate that the Frobenius law holds for $\mathsf{id}_!$. Given open maps $f : A \cto B$ and $g : B \cto C$, we claim that the direct image map is $(g \circ f)_! = g_! \circ f_!$. We first confirm the adjunction:
\begin{align*}
(g \circ f)_!(U) \le V
&\ifandonlyif g_! (f_!(U)) \le V
\\ &\ifandonlyif f_!(U) \le g^*(V)  \tag{$g_! \dashv g^*$}
\\ &\ifandonlyif U \le f^*(g^*(V))  \tag{$f_! \dashv f^*$}
\\ &\ifandonlyif U \le (g \circ f)^*V.
\end{align*}
It only remains to confirm the Frobenius law for the composition $g_! \circ f_! : \Open{A} \to \Open{C}$:
\begin{align*}
(g \circ f)_!(U \wedge (g \circ f)^*(V))
  &= g_! (f_! (U \wedge f^*(g^*(V))))
\\ &= g_! (f_! (U) \wedge g^*(V)) \tag{Frobenius law for $f$}
\\ &= g_! (f_! (U)) \wedge V 
\\ &= (g \circ f)_! (U) \wedge V  \tag{Frobenius law for $g$}.
\end{align*}
\end{proof}

\begin{proposition}
In the pullback square where $p$ and $q$ are open maps,
\begin{equation*}
\begin{tikzcd}
\\A \times_X B \arrow[r, "\theta"]
   \arrow[d, "\varphi"]
   \arrow[dr, phantom, "\lrcorner", very near start]
  & A \arrow[d, "p"]
\\ B \arrow[r, "q"]
& X
\end{tikzcd},
\end{equation*}
define $f : A \times_X B \cto X$ by $f = p \circ \theta = q \circ \varphi$. Then
\[
f_!(\top) = p_!(\top) \wedge q_!(\top).
\]
\end{proposition}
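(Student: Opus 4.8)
The plan is to reduce the claim to the Frobenius law for $p$ together with the Beck--Chevalley condition for the pullback square. First I would observe that open maps are pullback-stable in $\cat{FSpc}$ \cite{elephant}: since $\theta$ is the pullback of $q$ and $\varphi$ is the pullback of $p$, and both $p$ and $q$ are open, both $\theta$ and $\varphi$ are open. In particular $\theta_!$ exists, and $f = p \circ \theta$ is a composite of open maps, so $f$ is open with $f_! = p_! \circ \theta_!$ (by the composition lemma for open maps above). The $\le$ half of the statement is then immediate from monotonicity: $f_!(\top) = p_!(\theta_!(\top)) \le p_!(\top)$ since $\theta_!(\top) \le \top$, and symmetrically $f_!(\top) \le q_!(\top)$, so $f_!(\top) \le p_!(\top) \wedge q_!(\top)$.

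For the reverse inequality (and in fact for equality in one stroke) I would invoke the Beck--Chevalley identity for this square: because $q$ is open (hence $\theta$ is open), the evident square of image maps commutes, giving $\theta_! \circ \varphi^* = p^* \circ q_!$ as maps $\Open{B} \to \Open{A}$ \cite{elephant}. Evaluating both sides at $\top \in \Open{B}$ and using that the inverse image map $\varphi^*$ of a continuous map preserves $\top$, I obtain $\theta_!(\top) = p^*(q_!(\top))$.

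Finally I would compute $f_!(\top) = p_!(\theta_!(\top)) = p_!\!\left(p^*(q_!(\top))\right)$ and apply the Frobenius law for $p$ in the special case $U = \top$: since $\top \wedge p^*(V) = p^*(V)$, the law $p_!(U \wedge p^*(V)) = p_!(U) \wedge V$ specializes to $p_!(p^*(V)) = p_!(\top) \wedge V$; taking $V = q_!(\top)$ yields $f_!(\top) = p_!(\top) \wedge q_!(\top)$, as desired. As a consistency check, the symmetric argument run through $\varphi$ and $q$ gives $f_!(\top) = q_!(\top) \wedge p_!(\top)$, confirming that the two descriptions $p_! \circ \theta_!$ and $q_! \circ \varphi_!$ of $f_!$ agree on $\top$.

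The main obstacle is the Beck--Chevalley step: it rests on the facts that pullbacks of open maps in $\cat{FSpc}$ exist and are again open, and that the associated square of direct/inverse image maps commutes. These are standard properties of open maps of locales \cite{elephant}, so I would cite rather than reprove them; the only real care needed is bookkeeping — keeping straight which adjoint runs in which direction so that applying the identity to $\top$ produces $p^*(q_!(\top))$ and not a transposed expression — after which the Frobenius law finishes the argument mechanically.
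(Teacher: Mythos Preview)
Your proof is correct and follows essentially the same route as the paper's: both arguments combine the Beck--Chevalley identity $\theta_! \circ \varphi^* = p^* \circ q_!$ (the paper cites \cite{pedicchio} rather than \cite{elephant}) with the Frobenius law for $p$, evaluating at $\top$ and using that $\varphi^*$ preserves $\top$. The paper runs the chain of equalities starting from $p_!(\top) \wedge q_!(\top)$ rather than from $f_!(\top)$, and omits your separate $\le$ argument and symmetric consistency check, but the substance is the same.
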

\begin{proof}
Note that for any $U : \Open{B}$, $p^*(q_!(U)) = \theta_!(\varphi^*(U))$ (proof in section 5.2 of \cite{pedicchio}). Then
\begin{align*}
p_!(\top) \wedge q_!(\top)
   &= p_!(\top \wedge p^*(q_!(\top))) \tag{Frobenius reciprocity}
\\ &= p_!(p^*(q_!(\top)))
\\ &= p_!(\theta_!(\varphi^*(\top)))
\tag{the earlier equality}
\\ &= p_!(\theta_!(\top)) \tag{$\varphi^*$ preserves $\top$}
\\ &= f_!(\top).
\end{align*}
\end{proof}

\begin{proposition}
Parallel composition of open maps yields an open map. That is, given $f : A \cto B$ and $g : X \cto Y$ open,
$f \parmap g : A \times X \cto B \times Y$ is open.
\end{proposition}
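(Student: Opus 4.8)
The plan is to reduce the statement to two facts already in hand: open maps are closed under composition (shown above) and open maps are stable under pullback along arbitrary continuous maps---the latter being exactly the fact tacitly used in the previous proposition, where $\theta_!$ appears \cite{pedicchio, elephant}. First I would factor the parallel composition through an intermediate product:
\[
f \parmap g \;=\; (\mathsf{id}_B \parmap g) \circ (f \parmap \mathsf{id}_X) : A \times X \to B \times X \to B \times Y .
\]
By closure under composition, it then suffices to prove that $f \parmap \mathsf{id}_X$ and $\mathsf{id}_B \parmap g$ are each open.

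Next I would exhibit each of these two maps as a pullback of an open map along a projection. The square
\[
\begin{tikzcd}
A \times X \arrow[r, "\pi_A"] \arrow[d, "f \parmap \mathsf{id}_X"'] \arrow[dr, phantom, "\lrcorner", very near start] & A \arrow[d, "f"] \\
B \times X \arrow[r, "\pi_B"'] & B
\end{tikzcd}
\]
commutes (both composites send $(a,x)$ to $f(a)$) and is a pullback, witnessing $A \times_B (B \times X) \cong A \times X$; since $f$ is open, so is $f \parmap \mathsf{id}_X$. Symmetrically, $\mathsf{id}_B \parmap g$ is the pullback of the open map $g$ along the projection $\pi_Y : B \times Y \cto Y$, hence open. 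Composing the two factors shows $f \parmap g$ is open; unwinding the pullback descriptions also gives the direct image on basic opens, $(f \parmap g)_!(U \times W) = f_!(U) \times g_!(W)$.

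I expect the only real obstacle to be how self-contained one wants the argument to be about pullback-stability of open maps. If one prefers not to cite it, one constructs $(f \parmap g)_!$ by hand: set $(f \parmap g)_!(U \times W) \triangleq f_!(U) \times g_!(W)$ on basic opens and extend by join preservation, which is legitimate because $\Open{A \times X}$ is generated under joins by the basic rectangles, freely modulo bilinearity---it is the tensor product of $\Open{A}$ and $\Open{X}$ in the category of suplattices---and $(U, W) \mapsto f_!(U) \times g_!(W)$ preserves joins in each variable separately. One then checks $(f \parmap g)_! \dashv (f \parmap g)^*$ and the Frobenius law; since every map involved preserves joins, it is enough to verify these with the left operand ranging over basic opens, where they follow from the adjunctions $f_! \dashv f^*$ and $g_! \dashv g^*$ and from the Frobenius laws for $f$ and $g$. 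The fiddly step in that route is matching a generator-level inequality against an arbitrary open of $B \times Y$---precisely the bookkeeping that the pullback-stability lemma packages once and for all.
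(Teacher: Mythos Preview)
Your proposal is correct. Your primary route---factoring $f \parmap g$ as $(\mathsf{id}_B \parmap g) \circ (f \parmap \mathsf{id}_X)$ and recognising each factor as a pullback of an open map along a projection---is genuinely different from what the paper does. The paper goes straight for the direct-image formula $(f \parmap g)_!(a \times x) = f_!(a) \times g_!(x)$ on basic rectangles, then verifies the adjunction by an explicit chain of biconditionals (writing arbitrary opens as joins of rectangles on both sides) and checks Frobenius on basic opens. That is essentially the hand-built alternative you sketch in your final paragraph. Your categorical route is shorter and conceptually cleaner, at the cost of importing pullback-stability of open maps from \cite{elephant}; the paper's route is self-contained within what has been proved so far in the text, at the cost of the bookkeeping you correctly flag as the fiddly step. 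Either is fine; your fallback is exactly the paper's argument.
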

\begin{proof}
We claim that the direct image map operates on basic opens (which are open rectangles of $A \times X$) by $(f \parmap g)_!(a \times x) = f_!(a) \times g_!(x)$ (this extends to all opens by taking joins). We confirm the adjunction (using the fact that every open is a join of basic opens):
\begin{align*}
&(f \parmap g)_!\left(\bigvee_{i : I} a_i \times x_i \right) \le \bigvee_{j : J} (b_j \times y_j)
\\ &\ifandonlyif \bigvee_{i : I} f_!(a_i) \times g_!(x_i) \le \bigvee_{j : J} (b_j \times y_j)
\\ &\ifandonlyif \forall i : I.\ \exists j : J.\ f_!(a_i) \times g_!(x_i) \le b_j \times y_j
\\ &\ifandonlyif \forall i : I.\ \exists j : J.\ (f_!(a_i) \le b_j) \text{ and } g_!(x_i) \le y_j
\\ &\ifandonlyif \forall i : I.\ \exists j : J.\ (a_i \le f^*(b_j)) \text{ and } x_i \le g^*(y_j)
\\ &\ifandonlyif \forall i : I.\ \exists j : J.\ a_i \times x_i \le f^*(b_j) \times g^*(y_j)
\\ &\ifandonlyif \bigvee_{i : I} a_i \times x_i \le \bigvee_{j : J} f^*(b_j) \times g^*(y_j)
\\ &\ifandonlyif \bigvee_{i : I} a_i \times x_i \le (f \parmap g)^* \left( \bigvee_{j : J} b_j \times y_j \right).
\end{align*}

It suffices to confirm the Frobenius law holds on basic opens \cite{spitterslocatedness}:
\begin{align*}
&(f \parmap g)_!((a \times x) \wedge (f \parmap g)^*(b \times y))
\\ &= (f \parmap g)_!((a \times x) \wedge (f^*(b) \times g^*(y)))
\\ &= (f \parmap g)_!((a \wedge f^*(b)) \times (x \wedge g^*(y))
\\ &= f_!(a \wedge f^*(b)) \times g_!(x \wedge g^*(y))
\\ &= (f_!(a) \wedge b) \times (g_!(x) \wedge y) \tag{Frobenius law for $f$ and $g$}
\\ &= (f_!(a) \times g_!(x)) \wedge (b \times y)
\\ &= (f \parmap g)_!(a \times x) \wedge (b \times y).
\end{align*}
\end{proof}

We will now describe some facts that relate the open maps to nondeterministic maps.
\end{longversion}

\begin{proposition}
For any open map $p : A \cto B$, there is a (potentially) partial and nondeterministic inverse map $p^{-1} : B \ndpto A$ whose inverse image map is $p_!$.
\end{proposition}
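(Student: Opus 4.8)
The plan is to take the direct image map $p_!$ itself as the inverse image map of the desired morphism. Recall from the definition of $\cat{FSpc}_{nd,p}$ that a partial and nondeterministic map $B \ndpto A$ is \emph{nothing more than} a join-preserving map $\Open{A} \to \Open{B}$: neither \irule{meet-0} nor \irule{meet-2} is demanded. So it suffices to set $(p^{-1})^* \triangleq p_!$ and check that $p_!$ preserves joins in the sense of the \irule{join} rule.

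That check is immediate from the fact that, because $p$ is an open map, $p_!$ is by definition a left adjoint to $p^* : \Open{B} \to \Open{A}$, i.e. $p_!(U) \le V \iff U \le p^*(V)$ for all $U : \Open{A}$ and $V : \Open{B}$. A left adjoint between posets is monotone and preserves all existing joins: given a family $(U_i)_{i : I}$ in $\Open{A}$ and any $V : \Open{B}$, we have $p_!\bigl(\bigvee_{i} U_i\bigr) \le V \iff \bigvee_{i} U_i \le p^*(V) \iff \forall i.\ U_i \le p^*(V) \iff \forall i.\ p_!(U_i) \le V \iff \bigvee_{i} p_!(U_i) \le V$, and since this holds for every $V$ we conclude $p_!\bigl(\bigvee_{i} U_i\bigr) = \bigvee_{i} p_!(U_i)$. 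Monotonicity (hence the cover form of \irule{join}, $U \le \bigvee_i V_i \Rightarrow p_!(U) \le \bigvee_i p_!(V_i)$) follows the same way. Thus $p_!$ satisfies the single requirement imposed on the inverse image map of a morphism of $\cat{FSpc}_{nd,p}$, so $p^{-1} : B \ndpto A$ with $(p^{-1})^* = p_!$ is well defined.

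There is essentially no obstacle; the only point worth emphasizing is \emph{why} the construction lands in $\cat{FSpc}_{nd,p}$ and not higher in the lattice of Fig.~\ref{fig:parnondet-diamond}. In general $p_!(\top)$ is the open ``image'' of $A$ inside $B$, which need not be all of $B$, so \irule{meet-0} can fail and $p^{-1}$ is genuinely partial; and $p_!$ need not send binary meets to binary meets, since the Frobenius law only controls meets of the special form $U \wedge p^*(V)$, so \irule{meet-2} can fail and $p^{-1}$ is genuinely nondeterministic. (If one wished to justify the name ``inverse,'' one would observe that the unit and counit of the adjunction give $p^*(p_!(U)) \ge U$ and $p_!(p^*(V)) \le V$, so $p^{-1}$ is a one-sided inverse to $p$ in the appropriate lax sense; but this is not needed for the statement as given.)
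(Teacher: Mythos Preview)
Your proof is correct and takes essentially the same approach as the paper: set $(p^{-1})^* = p_!$ and note that $p_!$ preserves joins because it is a left adjoint. The paper's proof is a single sentence to that effect; your version simply spells out the standard argument that left adjoints preserve joins and adds the (optional but helpful) remark on why \irule{meet-0} and \irule{meet-2} can fail.
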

\begin{proof}
We only must prove that $p_!$ preserves joins: it does, since $p_!$ is a left adjoint (to $p^*$).
\end{proof}

An example of an open map is the ``return'' function of the nondeterminism monad $\{ \cdot \} : A \cto \PLower^+(A)$.
\begin{longversion}
Its direct image map takes opens $U$ of $A$ to $\lozenge U$ in $\PLower(A)$.
\end{longversion}

\subsubsection{Open embeddings}
\label{s:open-embedding}

Given an open $U$ of a space $A$, let $\iota[U] : \{A \mid U \} \cto A$ denote the inclusion of the open subspace $\{ A \mid U \}$ into $A$.

\begin{lemma}
\label{open-factor}
An open map $f : A \cto B$ factors through its direct image $f_!(\top)$, i.e., there is an $\tilde{f}$ such that the following diagram commutes:
\begin{equation*}
\begin{tikzcd}
A \arrow[r, "\tilde{f}"]
   \arrow[dr,swap,"f", shift right, shift right]
& \{ B \mid f_!(\top) \}
   \arrow[d, "\oinclf{f_!(\top)}"]
\\
{} & B
\end{tikzcd}
\end{equation*}
\end{lemma}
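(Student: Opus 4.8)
The plan is to build $\tilde f$ by directly specifying its inverse image map on the quotient lattice $\Open{\{ B \mid f_!(\top) \}}$, which consists of equivalence classes $[V]$ of opens $V : \Open{B}$ under the relation $V \sim W \ifandonlyif V \wedge f_!(\top) = W \wedge f_!(\top)$; recall that the inverse image map of $\oinclf{f_!(\top)}$ is precisely the quotient map $V \mapsto [V]$. I would set $\tilde f^*([V]) \triangleq f^*(V)$, so that the triangle is forced to commute, and then check that this is legitimate.

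The one step requiring care is well-definedness of $\tilde f^*$. The crux is the identity $f^*(f_!(\top)) = \top$, which holds because the unit of the adjunction $f_! \dashv f^*$ gives $\top \le f^*(f_!(\top))$ and the reverse inequality is automatic. Using that $f^*$ preserves binary meets, this yields, for every $V : \Open{B}$,
\[
f^*(V \wedge f_!(\top)) = f^*(V) \wedge f^*(f_!(\top)) = f^*(V).
\]
Hence if $V \sim W$, applying $f^*$ to both sides of $V \wedge f_!(\top) = W \wedge f_!(\top)$ and invoking this identity gives $f^*(V) = f^*(W)$, so $\tilde f^*([V])$ does not depend on the chosen representative. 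Note this argument uses only that $f$ is open, i.e. that $f_!$ is left adjoint to $f^*$; the Frobenius law plays no role here.

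What remains is routine bookkeeping. Since the lattice operations on $\Open{\{B \mid f_!(\top)\}}$ are computed representative-wise (e.g. $[P] \wedge [Q] = [P \wedge Q]$, $[\top]$ is the top element), $\tilde f^*$ preserves joins, $\top$, and binary meets because $f^*$ does, so $\tilde f$ is a genuine continuous map. For commutativity, the composite $\oinclf{f_!(\top)} \circ \tilde f$ has inverse image map $V \mapsto \tilde f^*([V]) = f^*(V)$, i.e. exactly $f^*$; since a continuous map is determined by its inverse image map, $\oinclf{f_!(\top)} \circ \tilde f = f$. The main obstacle is the well-definedness check in the second paragraph — everything else is unpacking the definition of the open subspace and its inclusion.
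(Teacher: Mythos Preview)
Your proof is correct and follows essentially the same approach as the paper: both reduce the factorization to the identity $f^*(V \wedge f_!(\top)) = f^*(V)$, derived from the unit $\top \le f^*(f_!(\top))$ of the adjunction $f_! \dashv f^*$ together with $f^*$ preserving binary meets. The paper states this more tersely as the inequality $f^*(U) \le f^*(U \wedge f_!(\top))$ and leaves the quotient-lattice bookkeeping implicit, whereas you spell out the well-definedness check and the preservation properties explicitly.
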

\begin{longversion}
\begin{proof}
This statement is equivalent to that for all $U : \Open{B}$, $f^*(U) \le f^*(U \wedge f_!(\top))$. This is indeed the case:
\begin{align*}
 f^*(U \wedge f_!(\top))
   &= f^*(U) \wedge f^*(f_!(\top)) \tag{$f^*$ preserves meets}
 \\ &\ge f^*(U) \wedge \top \tag{$f^* \circ f_!$ inflationary}
 \\ &= f^*(U).
\end{align*}
\end{proof}
\end{longversion}

\begin{definition}
A map $f : A \cto B$ is an \emph{open embedding} (or \emph{open inclusion}), denoted $f : A \hookto B$, if $A$ is isomorphic to its image under $f$ in $B$, i.e., if there is an open $U : \Open{B}$ and isomorphism $\tilde{f} : A \cto \{B \mid U \}$ such that the following diagram commutes:
\begin{equation*}
\begin{tikzcd}
A \arrow[r, "\tilde{f}", shift left]
   \arrow[r, <-, shift right, swap, "\tilde{f}^{-1}"]
   \arrow[dr,swap,"f", shift right, shift right]
& \{ B \mid U \}
   \arrow[d, "\oinclf{U}"]
\\
{} & B
\end{tikzcd}
\end{equation*}
\end{definition}

\begin{theorem}
A map $f : A \cto B$ is an open embedding if and only if it is an open map and its direct image map $f_!$ preserves binary meets.
\end{theorem}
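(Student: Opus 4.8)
The plan is to prove the two implications separately, with Lemma~\ref{open-factor} doing the real work in the harder direction.

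For the forward implication, suppose $f : A \hookto B$ is an open embedding, so $f = \oinclf{U} \circ \tilde f$ for some open $U : \Open{B}$ and some isomorphism $\tilde f : A \cong \{B \mid U\}$. First I would recall the two basic facts that open inclusions $\oinclf{U}$ are open maps (the inverse image map is the quotient $P \mapsto P \wedge U$, whose left adjoint carries a canonical representative to itself and satisfies Frobenius) and that isomorphisms are open maps (the left adjoint of $\tilde f^*$ is simply $(\tilde f^{-1})^*$). Since composites of open maps are open and the direct image of a composite is the composite of the direct images (as established above), $f$ is open with $f_! = (\oinclf{U})_! \circ \tilde f_!$. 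Now $(\oinclf{U})_!$ preserves binary meets because meet is idempotent, i.e. $U \wedge U = U$, so meeting representatives with $U$ commutes with binary meets; and $\tilde f_! = (\tilde f^{-1})^*$ is an inverse image map, hence preserves binary meets. A composite of binary-meet-preserving maps preserves binary meets, so $f_!$ does.

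For the converse, suppose $f : A \cto B$ is an open map and $f_!$ preserves binary meets. Put $U \triangleq f_!(\top)$. By Lemma~\ref{open-factor}, $f$ factors as $\oinclf{U} \circ \tilde f$ with $\tilde f : A \cto \{B \mid U\}$, so it suffices to show $\tilde f$ is an isomorphism, since then $f$ is an open embedding by definition. Identifying $\Open{\{B \mid U\}}$ with the principal down-set $\{P : \Open{B} \mid P \le U\}$ via canonical representatives, one gets $\iota[U]^* = ({\cdot}) \wedge U$ and hence $\tilde f^* = f^*$ restricted to that down-set. I would then exhibit the candidate inverse $g : \{B \mid U\} \cto A$ whose inverse image map is $f_!$, corestricted to the down-set: this is well defined because $f_!(V) \le f_!(\top) = U$, and it is a legitimate inverse image map because $f_!$ preserves joins (it is a left adjoint), sends $\top$ to $U = \top_{\{B \mid U\}}$, and preserves binary meets by hypothesis.

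It then remains to verify $g$ and $\tilde f$ are mutually inverse. One composite is immediate: $(\tilde f \circ g)^* = f_! \circ (f^* \restriction {\downarrow U})$, and Frobenius applied with $\top$ gives $f_!(f^*(P)) = f_!(\top) \wedge P = U \wedge P = P$ for $P \le U$, so $\tilde f \circ g = \mathrm{id}$. The other composite is the crux and the step I expect to be the main obstacle: $(g \circ \tilde f)^* = f^* \circ f_!$, which must be shown to equal $\mathrm{id}_{\Open{A}}$. One inequality, $V \le f^*(f_!(V))$, is just the unit of $f_! \dashv f^*$; the reverse inequality $f^*(f_!(V)) \le V$ is precisely where meet-preservation of $f_!$ has to be used in an essential way. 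The plan there is to exploit that $f_!$, preserving all joins and binary meets and carrying $\top$ to $U$, is a frame surjection onto $\downarrow U$ split by $f^* \restriction {\downarrow U}$, and to promote this split epimorphism to an isomorphism — equivalently, to argue that the join-preserving nucleus $f^* \circ f_!$ on $\Open{A}$ is trivial, or that $f_!$ is injective. Once $f^* \circ f_! = \mathrm{id}_{\Open{A}}$ is secured, $\tilde f^*$ is a bijective frame homomorphism, so $\tilde f$ is an isomorphism and $f$ is an open embedding, completing the proof.
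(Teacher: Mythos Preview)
Your forward direction and the overall shape of the converse match the paper's argument closely: both produce the candidate inverse $g : \{ B \mid f_!(\top) \} \cto A$ with $g^* = f_!$ (the paper writes $g^*(V) = f_!(V) \wedge f_!(\top)$, which is the same thing since $f_!(V) \le f_!(\top)$), verify that $g^*$ is a frame homomorphism, and obtain $\tilde f \circ g = \mathrm{id}$ from Frobenius with $U = \top$. The paper's proof in fact stops after checking that $g$ is continuous; it never verifies that $\tilde f$ and $g$ are mutual inverses. You go further by explicitly isolating the remaining obligation $g \circ \tilde f = \mathrm{id}_A$, i.e.\ $f^* f_! = \mathrm{id}_{\Open{A}}$, and flagging it as the crux.

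Unfortunately your plan for that crux is not a proof: ``argue that $f_!$ is injective'' and ``promote this split epimorphism to an isomorphism'' just restate the goal, and noting that $f^* f_!$ is a join-preserving nucleus does not help, since closed nuclei $x \mapsto a \vee x$ are join-preserving yet nontrivial. In fact the missing inequality $f^* f_!(V) \le V$ does \emph{not} follow from the stated hypotheses. Take the unique map $f : \Sierp \cto \One$. Here $\Open{\Sierp}$ is the three-element chain $\bot < m < \top$, $f^*$ embeds $\{\bot,\top\}$, and $f_!$ sends $\bot \mapsto \bot$ and $m,\top \mapsto \top$. One checks that $f_! \dashv f^*$ and Frobenius hold, and $f_!$ preserves binary meets (any monotone map between chains does); yet $f^* f_!(m) = \top \ne m$, and $\Sierp$ is not homeomorphic to any open sublocale of $\One$. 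So the step you flagged is a genuine gap that neither your sketch nor the paper's proof closes; the converse as stated requires an additional hypothesis, for instance that $f^*$ be surjective, under which $f^* f_! = \mathrm{id}$ does follow: writing $V = f^*(P)$, Frobenius gives $f^* f_! f^*(P) = f^*(f_!(\top) \wedge P) = f^* f_!(\top) \wedge f^*(P) = \top \wedge V = V$.
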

\begin{shortversion}
\begin{proof}[Proof sketch]
Given an open embedding $f : A \cto B$ that factors through $\{ B \mid U\}$, letting $\tilde{f}$ and $\tilde{f}^{-1}$ be the maps as in the above diagram, its direct image map is given by
\begin{align*}
f_! &: \Open{A} \to \Open{B}
\\ f_!(V) &\triangleq \tilde{f}^{-1*}(V) \wedge U.
\end{align*}

Conversely, given an open map $f : A \cto B$ with a meet-preserving direct image map $f_!$, we claim that $A \iso \{ B \mid f_!(\top) \}$.
\end{proof}
\end{shortversion}
\begin{longversion}
\begin{proof}
Suppose $f : A \cto B$ is an open embedding that factors through $\{ B \mid U\}$, and let $\tilde{f}$ and $\tilde{f}^{-1}$ be the maps as in the above diagram. Then its direct image map is given by
\begin{align*}
f_! &: \Open{A} \to \Open{B}
\\ f_!(V) &\triangleq \tilde{f}^{-1*}(V) \wedge U.
\end{align*}
We now confirm that $f_! \dashv f^*$. We have
\begin{align*}
f_!(V) \le W  
&\ifandonlyif  \tilde{f}^{-1*}(V) \wedge U \le W  
\\ &\ifandonlyif  \tilde{f}^{-1*}(V) \wedge \tilde{f}^{-1*}(\tilde{f}^*(U)) \le W
\\ &\ifandonlyif  \tilde{f}^{-1*}(V \wedge \tilde{f}^*(U)) \le W
\\ &\ifandonlyif  \tilde{f}^{-1*}(V \wedge \top) \le W
\\ &\ifandonlyif  \tilde{f}^{-1*}(V) \le W
\\ &\ifandonlyif V \le \tilde{f}^*(W)
\end{align*}

Moreover, $f_!$ preserves meets:
\begin{align*}
f_!(V \wedge W) 
  &= \tilde{f}^{-1*}(V \wedge W) \wedge U
\\ &= \tilde{f}^{-1*}(V) \wedge \tilde{f}^{-1*}(W) \wedge U
\\ &= (\tilde{f}^{-1*}(V) \wedge U) \wedge (\tilde{f}^{-1*}(W) \wedge U)
\\ &= f_!(V) \wedge f_!(W).
\end{align*}

We also confirm $f_!$ satisfies the Frobenius law:
\begin{align*}
f_!(V \wedge f^*(W)) 
&= f_!(V) \wedge f_!(f^*(W)) \tag{$f_!$ preserves meets}
\\ &=  f_!(V) \wedge \tilde{f}^{-1*}(f^*(W)) \wedge U
\\ &=  f_!(V) \wedge \tilde{f}^{-1*}(f^*(W)) \tag{$f_!(V)$ already at most $U$}
\\ &=  f_!(V) \wedge \tilde{f}^{-1*}(\tilde{f}^*(W \wedge U))
\\ &=  f_!(V) \wedge W \wedge U
\\ &=  f_!(V) \wedge W \tag{$f_!(V)$ already at most $U$}
\end{align*}

Thus an open embedding is an open map with a meet-preserving direct image map. We now prove the converse. Given an open map $f : A \cto B$ with a meet-preserving direct image map $f_!$, we claim that $A \iso \{ B \mid f_!(\top) \}$. By Proposition \ref{open-factor}, we already have a continuous map $\tilde{f} : A \cto \{ B \mid f_!(\top) \}$. We define $g : \{ B \mid f_!(\top) \} \cto A$ by
\begin{align*}
g^* &:  \Open{A} \to \Open{\{ B \mid f_!(\top) \}}
\\ g^*(V) &\triangleq f_!(V) \wedge f_!(\top).
\end{align*}
We claim $g^*$ indeed defines a continuous map. It preserves joins and binary meets since it is the composition of $f_!$ and $\cdot \wedge f_!(\top)$, both of which preserve joins and binary meets, so it suffices to show that $g^*(\top) = \top$, which is indeed the case, as
\[
g^*(\top) = f_!(\top) \wedge f_!(\top) = f_!(\top)
\]
which is equivalent to $\top$ in $\{ B \mid f_!(\top) \}$.
\end{proof}
\end{longversion}

\begin{longversion}
\begin{example}
Identity maps $\mathsf{id} : A \hookto A$ are open embeddings, and the composition of open embeddings is an open embedding.
\end{example}
\begin{proof}
We have $\mathsf{id}_!(U) = U$, which preserves binary meets. Given open maps $f : A \cto B$ and $g : B \cto C$, the composition $g_! \circ f_! : \Open{A} \to \Open{C}$ preserves binary meets since $f_!$ and $g_!$ both do.
\end{proof}

\begin{proposition}
Parallel composition of open embeddings yields an open embedding. That is, given $f : A \hookto B$ and $g : X \hookto Y$ open embeddings,
$f \parmap g : A \times X \hookto B \times Y$ is an open embedding.
\end{proposition}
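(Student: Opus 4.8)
The plan is to invoke the characterization theorem proved above: a map is an open embedding if and only if it is an open map whose direct image map preserves binary meets. By the preceding proposition on parallel composition of open maps, $f \parmap g$ is already an open map, with direct image map acting on basic opens (the open rectangles of $B \times Y$) by $(f \parmap g)_!(a \times x) = f_!(a) \times g_!(x)$ and extended to all opens by joins. So the only remaining obligation is to show that $(f \parmap g)_!$ preserves binary meets, and here we get to use that $f_!$ and $g_!$ preserve binary meets, since $f$ and $g$ are open embeddings.

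First I would check meet preservation on basic opens. The meet of two open rectangles is again an open rectangle, $(a \times x) \wedge (a' \times x') = (a \wedge a') \times (x \wedge x')$, so
\begin{align*}
(f \parmap g)_!\big((a \times x) \wedge (a' \times x')\big)
  &= f_!(a \wedge a') \times g_!(x \wedge x')
\\ &= (f_!(a) \wedge f_!(a')) \times (g_!(x) \wedge g_!(x'))
\\ &= (f \parmap g)_!(a \times x) \wedge (f \parmap g)_!(a' \times x'),
\end{align*}
where the first step combines the rectangle identity with the action of $(f \parmap g)_!$ on rectangles, the second uses that $f_!$ and $g_!$ preserve binary meets, and the last again uses the rectangle identity.

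Then I would extend to arbitrary opens by distributivity, exactly as in the Frobenius-law step of the open-maps proposition: writing $U$ and $V$ as joins of basic opens $u_i$ and $v_j$, we have $U \wedge V = \bigvee_{i,j} u_i \wedge v_j$ with each $u_i \wedge v_j$ basic; since $(f \parmap g)_!$ preserves joins (being a left adjoint) and preserves meets of basic opens, applying distributivity of meets over joins in $\Open{B \times Y}$ gives $(f \parmap g)_!(U \wedge V) = (f \parmap g)_!(U) \wedge (f \parmap g)_!(V)$. Thus $(f \parmap g)_!$ preserves binary meets, and the characterization theorem finishes the proof.

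I do not expect a genuine obstacle; the only point needing care is the reduction from basic opens to all opens, which is the same routine manipulation already invoked for the Frobenius law. An alternative, more geometric route would construct the witnessing isomorphism directly --- if $\tilde{f} : A \iso \{ B \mid U \}$ and $\tilde{g} : X \iso \{ Y \mid V \}$, one would hope $A \times X \iso \{ B \mid U \} \times \{ Y \mid V \} \iso \{ B \times Y \mid U \times V \}$ --- but this depends on the fact that a product of open subspaces is an open subspace of the product, which is not established in this excerpt, so the argument via the characterization theorem is the cleaner one.
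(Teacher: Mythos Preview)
Your proof is correct and follows essentially the same approach as the paper: invoke the characterization of open embeddings as open maps with meet-preserving direct image, then verify meet preservation on basic opens by the same rectangle calculation. The paper simply asserts that checking basic opens suffices, whereas you spell out the extension to arbitrary opens via distributivity; otherwise the arguments are identical.
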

\begin{proof}
It suffices to confirm that the direct image preserves binary meets, and it suffices to check this by only checking the basic opens, which in this case are open rectangles:
\begin{align*}
&(f \parmap g)_! \left( (a_{1} \times x_{1}) \wedge (a_{2} \times x_{2}) \right)
\\ &=(f \parmap g)_!  \left( (a_{1} \wedge a_{2}) \times (x_{1} \wedge x_{2}) \right)
\\ &= f_!(a_{1} \wedge a_{2}) \times g_!(x_{1} \wedge x_{2})
\\ &= (f_!(a_{1}) \wedge f_!(a_{2})) \times (g_!(x_{1}) \wedge g_!(x_{2}))
  \tag{$f_!$ and $g_!$ preserve binary meets}
\\ &= (f_!(a_{1}) \times g_!(x_{1})) \wedge (f_!(a_{2}) \times g_!(x_{2}))
\\ &= (f \parmap g)_!(a_{1} \times x_{1}) \wedge (f \parmap g)_!(a_{2} \times x_{2})
\end{align*}
\end{proof}

\begin{proposition}
\label{pullbackopenembedding}
The pullback of an open embedding is an open embedding.
\end{proposition}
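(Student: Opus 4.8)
The plan is to reduce to the case of a literal open inclusion and then identify the pullback concretely. Suppose $f : A \hookto B$ is an open embedding and $g : C \cto B$ is an arbitrary continuous map; we must show that the pullback projection $C \times_B A \cto C$ is an open embedding. Since $f$ is an open embedding, there exist an open $U : \Open{B}$ and an isomorphism $\tilde f : A \cto \{ B \mid U \}$ with $f = \iota[U] \circ \tilde f$, and pulling back $f$ along $g$ agrees, up to isomorphism over $C$, with pulling back $\iota[U]$ along $g$ (the pullback of an iso is an iso, and pullbacks paste). Hence it suffices to prove: for every $g : C \cto B$ and every $U : \Open{B}$, the pullback of $\iota[U] : \{ B \mid U \} \cto B$ along $g$ is the open inclusion $\iota[g^*(U)] : \{ C \mid g^*(U) \} \cto C$, which is visibly an open embedding.

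The first ingredient is the universal property of open subspaces: a continuous map $m : D \cto C$ factors through $\iota[W] : \{ C \mid W \} \cto C$, and then uniquely so, precisely when $m^*(W) = \top$. This is immediate from the quotient presentation of $\Open{\{ C \mid W \}}$ recalled earlier, as is the fact that $\iota[W]^*$ carries each open to its class and satisfies $\iota[W]^*(W) = \top$. The second ingredient is the candidate cone: there is a canonical map $\tilde g : \{ C \mid g^*(U) \} \cto \{ B \mid U \}$ whose inverse image carries the class of $V : \Open{B}$ to the class of $g^*(V)$; this is well defined, preserves joins and binary meets because $g^*$ and the quotient maps do, and preserves $\top$ because $g^*(U)$ has been identified with $\top$ in $\{ C \mid g^*(U) \}$. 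By construction $\iota[U] \circ \tilde g = g \circ \iota[g^*(U)]$, so we have a commuting square on the corners $\{ C \mid g^*(U) \}$, $\{ B \mid U \}$, $C$, $B$.

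To see this square is a pullback, take any $h : D \cto C$ and $k : D \cto \{ B \mid U \}$ with $g \circ h = \iota[U] \circ k$. From $\iota[U]^*(U) = \top$ we get $h^*(g^*(U)) = (g \circ h)^*(U) = k^*(\iota[U]^*(U)) = \top$, so there is a unique $m : D \cto \{ C \mid g^*(U) \}$ with $\iota[g^*(U)] \circ m = h$. That this $m$ also satisfies $\tilde g \circ m = k$ follows because every open of $\{ B \mid U \}$ has the form $\iota[U]^*(V)$, whence $(\tilde g \circ m)^*(\iota[U]^*(V)) = (\iota[g^*(U)] \circ m)^*(g^*(V)) = h^*(g^*(V)) = (g \circ h)^*(V) = (\iota[U] \circ k)^*(V) = k^*(\iota[U]^*(V))$; and uniqueness of $m$ as a mediating map is already contained in the uniqueness of the open-subspace factorization. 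Thus $C \times_B A \iso \{ C \mid g^*(U) \}$ over $C$, and the projection to $C$ is $\iota[g^*(U)]$, an open embedding.

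The one genuinely delicate point is matching the abstract pullback $C \times_B A$ — guaranteed to exist in $\cat{FSpc}$ but whose frame-theoretic description is a coproduct-then-quotient of frames — with the concrete guess $\{ C \mid g^*(U) \}$; the universal-property verification above is exactly what accomplishes this, so no direct manipulation of the frame pushout is required. An alternative would be to combine stability of open maps under pullback with the earlier characterization theorem, checking that the pulled-back direct image map still preserves binary meets; but that route needs the Beck--Chevalley-style identity $p^*(q_!(-)) = \theta_!(\varphi^*(-))$ as input, whereas the open-subspace argument stays self-contained, so I would favor it.
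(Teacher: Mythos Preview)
Your proof is correct and follows essentially the same approach as the paper: reduce to the case of a literal open inclusion $\iota[U]$, identify the pullback along $g : C \cto B$ with $\{ C \mid g^*(U) \}$, and verify the universal property. The paper's proof is in fact much terser—it draws the square and simply asserts that ``it is possible to confirm that the diagram commutes and that $\{ A \mid f^*(U) \}$ satisfies the universal property of pullbacks''—whereas you actually carry out that verification in full, which is a genuine improvement in completeness.
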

\begin{proof}
Since (up to homeomorphism) an open embedding is just the inclusion of an open subspace, it suffices to prove that the pullback of the inclusion of an open subspace is an open embedding. Given the open inclusion $\oinclf{U} : \{ X \mid U \} \hookto X$, for any map $f : A \cto X$ we have the pullback square
\begin{equation*}
\begin{tikzcd}
\\\{ A \mid f^*(U) \} \arrow[r, hook, "\oinclf{f^*(U)}"]
   \arrow[d, "f_{| U}"]
   \arrow[dr, phantom, "\lrcorner", very near start]
  & A \arrow[d, "f"]
\\ \{ X \mid U \} \arrow[r, hook, "\oinclf{U}"]
& X
\end{tikzcd},
\end{equation*}
which expresses the fact that the preimage of opens are open. It is possible to confirm that the diagram commutes and that $\{ A \mid f^*(U) \}$ satisfies the universal property of pullbacks.
\end{proof}

The open embeddings are closely related to the partial maps:
\end{longversion}
\begin{proposition}
Given an open embedding $f : A \hookto B$, the ``inverse'' map $f^{-1} : B \ndpto A$ that any open map has is in fact deterministic, i.e., $f^{-1} : B \pto A$.
\end{proposition}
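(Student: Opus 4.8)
The plan is to recognize that this statement is essentially a restatement of the preceding theorem once the definitions are unwound. Recall that the inverse map $f^{-1} : B \ndpto A$ guaranteed for any open map $f$ is the one whose inverse image map is taken to be the direct image map $f_!$, i.e.\ $(f^{-1})^* = f_!$. By the development of \refsection{parnondet}, a nondeterministic-and-partial map is \emph{deterministic} --- that is, actually a partial map living in $\cat{FSpc}_p$ --- precisely when its inverse image map satisfies \irule{meet-2}, the preservation of binary meets. So to prove $f^{-1} : B \pto A$, it suffices to show that $f_!$ preserves binary meets; preservation of joins (\irule{join}) is automatic, since $f_!$ is a left adjoint to $f^*$.

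First I would invoke the hypothesis that $f$ is an open embedding. By the theorem just proved characterizing open embeddings, a continuous map is an open embedding if and only if it is an open map whose direct image map preserves binary meets. Hence $f_!$ preserves binary meets, and therefore so does $(f^{-1})^* = f_!$. Combining this with the fact that $(f^{-1})^*$ already preserves all joins, we conclude that $f^{-1}$ satisfies both \irule{join} and \irule{meet-2}, so $f^{-1}$ lies in $\cat{FSpc}_p$; that is, $f^{-1} : B \pto A$ as claimed.

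There is no real obstacle here beyond bookkeeping: the substantive content, namely that an open embedding's direct image map preserves finite meets, is already carried by the open-embedding characterization theorem. The one subtlety worth stating explicitly is the dictionary between ``deterministic'' and ``preserves binary meets'': determinism of a map $g$ means its inverse image map $g^*$ satisfies \irule{meet-2}, and here $g = f^{-1}$ has $g^* = f_!$, so the two conditions coincide verbatim.
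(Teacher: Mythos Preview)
Your proposal is correct and matches the paper's approach exactly: the paper's proof is the single line ``Follows directly from the fact that $f_!$ preserves joins and binary meets,'' and you have spelled out precisely the unwinding of definitions that justifies this. There is nothing to add.
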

\begin{proof}
Follows directly from the fact that $f_!$ preserves joins and binary meets.
\end{proof}

An example of an open embedding is the ``return`` function of the partiality monad $\up : A \cto \Lifted{A}$. \grammarr{This} allows us to view $A$ as an open subspace of $\Lifted{A}$.

\commentt{
\subsubsection{Summary}

\todo{Is this in the right place?}

The open maps and the open embeddings form part of a lattice of categories of continuous maps whose \emph{inverses} are potentially partial and nondeterministic; the inverse of an open map is partial and nondeterministic, whereas the inverse of an open embedding is partial but deterministic. \grammarr{This} will make the open maps and open embeddings relevant for pattern matching to produce either continuous maps or nondeterministic maps on spaces. Fig. \ref{fig:summary} depicts this lattice alongside the original one.

\begin{figure*}[h]
\begin{center}
\begin{equation*}
\begin{tikzcd}
& 
\\
& \parbox{2.5cm}{\centering continuous maps $(c)$}
    \arrow[dl, tail]
& \textcolor{gray}{\text{isomorphisms}}
   \arrow[l, "-1", gray]
   \arrow[dr, gray, tail]
\\
\parbox{2.5cm}{\centering nondet. $(nd)$ \\ $\PLowerP$}
  \arrow[dr, tail]
& \textcolor{gray}{\text{surj. \& open}}
  \arrow[l, "-1", gray]
& \parbox{2.5cm}{\centering partial $(p)$ \\ $\Lifted{\cdot}$}
& \parbox{2.5cm}{\centering open embeddings $(oe / \hookto)$} 
  \arrow[l, "-1"]
  \arrow[dl, tail]
\\
&
\parbox{2.5cm}{\centering nondet. and partial $(nd, p)$ \\ $\PLower$}
& \text{open } (o)
  \arrow[l, "-1"]
  \arrow[uul, tail, dashed]
\arrow[dl, gray, tail, crossing over, from=2-3, to=3-2]
\arrow[dr, tail, gray, crossing over, from=3-2, to=4-3]
\arrow[dl, tail, crossing over, from=3-3, to=4-2]
\arrow[dr, tail, crossing over, from=2-2, to=3-3]
\end{tikzcd}
\end{equation*}
\end{center}
   \caption[Summary of categories of morphisms on formal spaces and functors relating them.]{A summary of various categories of morphisms on formal spaces and functors relating them. ``Forgetful'' functors have tails and are unnamed. Functors named ``$-1$'' denote inverse functions determined by direct image maps. All paths in the diagram commute except for those involving the (dashed) forgetful functor $o \rightarrowtail c$.}
   \label{fig:summary}
\end{figure*}

Usually, when pattern matching, we do not expect that a single pattern covers the entire input space, and accordingly we expect that the ``inverse'' map of a particular pattern should be partial, rather than total. \grammarr{This} explains why we focused on the open maps and open embeddings, whose inverses are necessarily partial. The greyed categories in Fig. \ref{fig:summary} describe those categories of maps whose inverses are in fact total. The inverses of surjective open maps are total but nondeterministic, and the inverses of isomorphisms are just regular continuous maps (total and deterministic).
}

\subsection{Pattern families: definition and properties}

In general, we have an entire family of patterns $(p_i : U_i \to_o A)_{i : I}$ where $I$ is some index type. We can use this pattern family to construct the partial and nondeterministic inverse
\begin{align*}
\mathsf{inv} &: A \ndpto \sum_{i : I} U_i
\\ \mathsf{inv}(x) &\triangleq \bigsqcup_{i : I} \mathsf{inj}_i (p_i^{-1}(x)),
\end{align*}
where $\sqcup$ denotes the nondeterministic join, $\sqcup : \prod_{i : I} X \ndpto X$ for any space $X$.

Therefore, for any index type $I$, any collection of open maps $p_i : U_i \to_o A$ is a collection of patterns for defining a nondeterministic and partial map using a pattern match. Given an arbitrary collection of branches $e_i : U_i \ndpto B$, or equivalently, $e : \sum_{i : I} U_i \ndpto B$, the pattern match is just the composition
$e \circ \mathsf{inv} : A \ndpto B$.
As expected, the pattern match is a nondeterministic union of its branches.

For those categories/languages that require totality or determinism, we would like to characterize the families of patterns that are suitable in those cases. These will be subcollections of the collection of pattern families in the nondeterministic and partial case.

\begin{definition}
A \emph{pattern family} for a subcategory $\mathcal{C}$ of $\cat{FSpc}_{nd, p}$ is a family of open maps $\left(p_i : U_i \to_o A\right)_{i : I}$ such that $\bigsqcup_{i : I} \mathsf{inj}_i \circ p_i^{-1} : A \ndpto \sum_{i : I} U_i$ is in $\mathcal{C}$. Let $\mathcal{J}_\mathcal{C}$ denote the collection of pattern families for $\mathcal{C}$.
\end{definition}

\begin{theorem}
We can characterize the pattern families for the following subcategories of $\cat{FSpc}_{nd, p}$ as exactly those families of open maps $\left(p_i : U_i \to_o A\right)_{i : I}$ satisfying certain additional properties:
\begin{description}
\item[$\cat{FSpc}_{nd}$] (\emph{Totality}) The patterns cover the whole input space, i.e., 
$
\top \le \bigvee_{i : I} {p_i}_!(\top).
$
\item[$\cat{FSpc}_{p}$] (\emph{Determinism}) Each $p_i : U_i \to_o A$ is an open embedding, and the patterns are \emph{pairwise disjoint}, meaning that whenever ${p_i}_!(\top) \wedge {p_j}_!(\top)$ is positive\footnote{
An open $U$ is called \emph{positive} if whenever $U \le \bigvee_{i : I} V_i$, $I$ is inhabited,
i.e., every open cover of $U$ itself must have at least one open.
} in $A$, then (intensionally) $i = j$.
\item[$\cat{FSpc}$] (\emph{Totality} and \emph{determinism}) The above conditions for totality and determinism must both hold.
\end{description}
\end{theorem}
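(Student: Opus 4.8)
The plan is to compute the inverse image map of $\mathsf{inv} := \bigsqcup_{i:I}\mathsf{inj}_i \circ p_i^{-1}$ explicitly and read off each clause. Identifying $\Open{\sum_{i:I}U_i}$ with $\prod_{i:I}\Open{U_i}$, recalling that $(p_i^{-1})^* = {p_i}_!$ and that $\mathsf{inj}_i^*$ is the $i$-th projection, one obtains $\mathsf{inv}^*\big((V_i)_{i:I}\big) = \bigvee_{i:I}{p_i}_!(V_i)$. This always preserves joins (each ${p_i}_!$ is a left adjoint), so $\mathsf{inv}$ lands in $\cat{FSpc}_{nd}$, $\cat{FSpc}_p$, or $\cat{FSpc}$ exactly when $\mathsf{inv}^*$ additionally preserves $\top$, binary meets, or both. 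I would treat the $\top$-clause and the meet-clause separately; the $\cat{FSpc}$ clause is then just their conjunction.

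For $\cat{FSpc}_{nd}$ (\emph{totality}): $\mathsf{inv}^*(\top) = \mathsf{inv}^*\big((\top_{U_i})_i\big) = \bigvee_i{p_i}_!(\top)$, which equals $\top_A$ precisely when $\top \le \bigvee_i{p_i}_!(\top)$ — the stated condition, with nothing more to do. For $\cat{FSpc}_p$ (\emph{determinism}), the inequality $\mathsf{inv}^*(V\wedge W)\le\mathsf{inv}^*(V)\wedge\mathsf{inv}^*(W)$ holds automatically by monotonicity, so the content is the forward inequality $\bigvee_{i,j}{p_i}_!(V_i)\wedge{p_j}_!(W_j) \le \bigvee_k{p_k}_!(V_k\wedge W_k)$. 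Necessity of \emph{open-embedding-ness} of each $p_i$: write $p_i^{-1} = \mathsf{inj}_i^{-1}\circ\mathsf{inv}$; since $\mathsf{inj}_i$ is (up to isomorphism) the inclusion $\oinclf{E_i}$ of an open subspace, where $E_i := (\mathsf{inj}_i)_!(\top)\in\Open{\sum_k U_k}$, the Proposition on inverses of open embeddings makes $\mathsf{inj}_i^{-1}$ a deterministic partial map, so if $\mathsf{inv}\in\cat{FSpc}_p$ then $p_i^{-1}\in\cat{FSpc}_p$, i.e.\ ${p_i}_!$ preserves binary meets, whence $p_i$ is an open embedding by the characterization of open embeddings as open maps with meet-preserving direct image. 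Necessity of \emph{pairwise disjointness}: one computes $\mathsf{inv}^*(E_i) = {p_i}_!(\top)$ and $\mathsf{inv}^*(E_i\wedge E_j) = \bigvee_{\{k:I\,\mid\, k=i\,\wedge\,k=j\}}{p_k}_!(\top)$, so meet-preservation gives ${p_i}_!(\top)\wedge{p_j}_!(\top) = \bigvee_{\{k:I\,\mid\, k=i\,\wedge\,k=j\}}{p_k}_!(\top)$; if the left side is positive, the index set $\{k:I\mid k=i\wedge k=j\}$ of this cover is inhabited, and any witness $k$ gives $i = k = j$.

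Conversely, assuming each $p_i$ is an open embedding and the family is pairwise disjoint, I would bound each cross term ${p_i}_!(V_i)\wedge{p_j}_!(W_j)$ by $P_{ij} := {p_i}_!(\top)\wedge{p_j}_!(\top)$, use that $A$, being inductively generated, is overt to cover $P_{ij}$ by its positive subopens, and note that each such positive subopen forces $P_{ij}$ to be positive, hence $i=j$; on that piece, transporting along $i=j$ and using that ${p_i}_!$ preserves binary meets collapses the cross term to ${p_i}_!(V_i\wedge W_i)\le\bigvee_k{p_k}_!(V_k\wedge W_k)$, so summing over all $i,j$ yields $\mathsf{inv}^*(V)\wedge\mathsf{inv}^*(W)\le\mathsf{inv}^*(V\wedge W)$. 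Finally, the $\cat{FSpc}$ clause follows at once, since $\mathsf{inv}^*$ preserves $\top$ and binary meets iff totality and determinism both hold.

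\textbf{Main obstacle.}
The delicate point is the off-diagonal meet analysis together with identifying the \emph{right} constructive form of disjointness. The naive demand ${p_i}_!(\top)\wedge{p_j}_!(\top) = \bot$ whenever $i\ne j$ presupposes decidable equality on $I$; the positivity formulation avoids this, and the $E_i\wedge E_j$ computation shows it is exactly what meet-preservation of $\mathsf{inv}^*$ produces, while the overtness (positivity) axiom is precisely what is needed to turn ``$P_{ij}$ positive $\Rightarrow i=j$'' back into the required covering inequality.
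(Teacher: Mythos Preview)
Your approach is essentially the paper's: compute $\mathsf{inv}^*$ explicitly as $\mathsf{inv}^*\big((V_i)_i\big)=\bigvee_i{p_i}_!(V_i)$ and read off when it preserves $\top$ and binary meets. Your necessity argument for each $p_i$ being an open embedding, via the factorization $p_i^{-1}=\mathsf{inj}_i^{-1}\circ\mathsf{inv}$ and closure of $\cat{FSpc}_p$ under composition, is slightly more conceptual than the paper's direct calculation ${p_i}_!(a\wedge b)=\mathsf{inv}^*((i,a)\wedge(i,b))=\mathsf{inv}^*(i,a)\wedge\mathsf{inv}^*(i,b)={p_i}_!(a)\wedge{p_i}_!(b)$, but both are fine. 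The disjointness-necessity arguments coincide.

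One claim does need correction: ``$A$, being inductively generated, is overt'' is false constructively; inductive generation does not by itself supply a positivity predicate. You have, however, correctly put your finger on the delicate step. The sufficiency direction of the determinism clause genuinely hinges on the positivity axiom $U\le\bigvee_{\mathsf{Pos}(U)}U$ (equivalently, overtness of $A$): without it there is no way to convert the hypothesis ``$\mathsf{Pos}\big({p_i}_!(\top)\wedge{p_j}_!(\top)\big)\Rightarrow i\equiv j$'' into the covering inequality that absorbs the off-diagonal terms ${p_i}_!(V_i)\wedge{p_j}_!(W_j)$. The paper's own proof glosses over the very same point---its step tagged ``$p_i$s pairwise disjoint'', passing from $\bigvee_{i}$ to $\bigvee_{i,j}$, tacitly uses this positivity axiom. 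So you are not doing worse than the paper here, but your stated justification for overtness is wrong; either assume $A$ overt, or observe that the positivity formulation of disjointness only yields the converse under that assumption.
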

\begin{longversion}
\begin{proof}
\begin{description}
\item[$\cat{FSpc}_{nd}$] (\emph{Totality})
If we want to ensure that a pattern match is total, then it suffices to require that 
\[
\left(\bigsqcup_{i : I} \mathsf{inj}_i \circ p_i^{-1}\right)^*(\top) = \top,
\]
or equivalently,
\[
\top \le \bigvee_{i : I} {p_i}_!(\top),
\]
meaning that the patterns cover the whole input space.

\item[$\cat{FSpc}_{p}$] (\emph{Determinism}):
To ensure determinism, we require that $\bigsqcup_{i : I} \mathsf{inj}_i \circ p_i^{-1}$ preserves binary meets. We claim that this map preserves binary meets if and only if both of the following conditions hold:
\begin{enumerate}
\item Each pattern $p_i : U_i \to_o A$ is in fact an open embedding.
\item The patterns are pairwise disjoint.
\end{enumerate}

First, we prove the two conditions hold if $\mathsf{inv}^*$ preserves binary meets.
\begin{enumerate}
\item It suffices to show that each ${p_i}_!$ preserves binary meets, which follows from the calculation
\begin{align*}
{p_i}_!(a \wedge b) 
&= \mathsf{inv}^*((i, a) \wedge (i, b))
\\ &= \mathsf{inv}^*(i, a) \wedge \mathsf{inv}^*(i, b)
\\ &= {p_i}_!(a) \wedge {p_i}_!(b).
\end{align*}
\item We must show that if ${p_i}_!(\top) \wedge {p_j}_!(\top)$ is positive in $A$, then $i \inteq j$.
\begin{align*}
{p_i}_!(\top) \wedge {p_j}_!(\top)
&= \mathsf{inv}^*(i, \top) \wedge \mathsf{inv}^*(j, \top)
\\ &= \mathsf{inv}^*((i, \top) \wedge (j, \top)).
\end{align*}
Since $\mathsf{inv}^*$ preserves joins, by Proposition \ref{thm:join-pos} if ${p_i}_!(\top) \wedge {p_j}_!(\top)$ is positive, so is $(i, \top) \wedge (j, \top)$, which implies that $i \inteq j$.
\end{enumerate}
Now we prove the converse: if a family $p_i : U_i \hookto A$ of open embeddings is pairwise disjoint, its $\mathsf{inv}$ map is in fact deterministic.
It suffices to prove that $\mathsf{inv}^*$ preserves binary meets of basic opens. For any open $a : \Open{A}$ and proposition $Q$, let $\chi_Q(a) \triangleq \bigvee_{q : Q} a$. If $Q$ is true, then $\chi_Q(a) = a$ and if $Q$ is false, then $\chi_Q(a) = \bot$.
\begin{align*}
&\mathsf{inv}^*((k, a) \wedge (\ell, b))
\\ &= \bigvee_{i : I} {p_i}_!(\chi_{i \inteq k}(a) \wedge \chi_{i \inteq \ell}(b))
\\ &= \bigvee_{i : I} {p_i}_!(\chi_{i \inteq k}(a)) \wedge {p_i}_!(\chi_{i \inteq \ell}(b))
  \tag{${p_i}_!$ preserves meets}
\\ &= \bigvee_{i : I, j : I} {p_i}_!(\chi_{i \inteq k}(a)) \wedge {p_j}_!(\chi_{j \inteq \ell}(b))
      \tag{$p_i$s pairwise disjoint}
\\ &= \left(\bigvee_{i : I} {p_i}_!(\chi_{i \inteq k}(a)) \right)
      \wedge  \left(\bigvee_{j : I} {p_j}_!(\chi_{j \inteq \ell}(b)) \right)
\\ &= \mathsf{inv}^*(k, a)
      \wedge  \mathsf{inv}^*(\ell, b).
\end{align*}

\item[$\cat{FSpc}$] (\emph{Totality} and \emph{determinism}): Since $\cat{FSpc} = \cat{FSpc}_{nd} \cap \cat{FSpc}_{p}$, we just require that the conditions for totality and determinism must both hold.
\end{description}

\end{proof}
\end{longversion}

For any subcategory $\mathcal{C}$ of $\cat{FSpc}_{nd, p}$, one can construct a pattern match by composing a pattern family $\left(p_i : U_i \to_o A\right)_{i : I}$ for $\mathcal{C}$ with a collection of branches $e : \sum_{i : I} U_i \to_\mathcal{C} B$ that is in $\mathcal{C}$.

When determinism is required, we recover the familiar condition required of pattern matching in functional programming: disjointness of patterns (i.e., patterns are not allowed to overlap). When both determinism and totality are required, we further recover the familiar condition that pattern membership is decidable:

\begin{longversion}
\begin{definition}
An open $U$ is \emph{clopen} (for ``closed'' and ``open'') if it has a Boolean complement, i.e., there is another open $V$ such that $U \vee V = \top$ and $U \wedge V = \bot$.
\end{definition}
\end{longversion}

\begin{proposition}
For a pattern family $( p_i : U_i \hookto A )_{i : I}$ for \cat{FSpc}, if the index type $I$ has decidable equality, then for each $i : I$, there is a map $\chi_{{p_i}_!(\top)} : U_i \cto \bool$ satisfying $\chi_{{p_i}_!(\top)}^*(\cdot = \btrue) = {p_i}_!(\top)$ (i.e., ${p_i}_!(\top)$ is clopen).
\end{proposition}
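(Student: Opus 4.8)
The plan is to obtain $\chi_{{p_i}_!(\top)}$ through Proposition~\ref{bool-pairs}: a map $A \cto \bool$ is exactly the data of a covering, disjoint pair of opens of $A$, and saying that ${p_i}_!(\top)$ is clopen says precisely that ${p_i}_!(\top)$ admits such a partner. So for a fixed $i : I$ I would produce an open $Q$ of $A$ making the pair $({p_i}_!(\top), Q)$ covering and disjoint; $Q$ then witnesses clopenness, and the induced map is the desired $\chi_{{p_i}_!(\top)}$.

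First I would recall what the pattern-family hypothesis gives us. Since $(p_i : U_i \hookto A)_{i:I}$ is a pattern family for $\cat{FSpc}$, the map $\mathsf{inv} = \bigsqcup_{i:I}\mathsf{inj}_i \circ p_i^{-1} : A \ndpto \sum_{i:I}U_i$ is in fact a (total and deterministic) continuous map, so $\mathsf{inv}^*$ preserves joins, $\top$, and binary meets; from the proof of the preceding theorem we also have $\mathsf{inv}^*(i, a) = {p_i}_!(a)$ on basic opens of $\sum_j U_j$; and the totality half of the hypothesis is $\top \le \bigvee_{j:I}{p_j}_!(\top)$.

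Now set $Q \triangleq \bigvee_{j : I,\ \neg(j \inteq i)}{p_j}_!(\top)$. For \emph{disjointness}: for each $j$ with $\neg(j \inteq i)$ the basic opens $(i,\top)$ and $(j,\top)$ lie in distinct summands of the coproduct and are therefore disjoint, so $(i,\top)\wedge(j,\top) = \bot$ in $\sum_k U_k$; applying $\mathsf{inv}^*$, which preserves binary meets and joins, gives ${p_i}_!(\top)\wedge{p_j}_!(\top) = \mathsf{inv}^*((i,\top))\wedge\mathsf{inv}^*((j,\top)) = \mathsf{inv}^*\bigl((i,\top)\wedge(j,\top)\bigr) = \mathsf{inv}^*(\bot) = \bot$, and by distributivity ${p_i}_!(\top)\wedge Q = \bigvee_{j : I,\ \neg(j\inteq i)}\bigl({p_i}_!(\top)\wedge{p_j}_!(\top)\bigr) = \bot$. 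For \emph{covering}: here the decidable equality of $I$ is used. For each $j : I$ we decide whether $j \inteq i$; if so then ${p_j}_!(\top) = {p_i}_!(\top) \le {p_i}_!(\top) \vee Q$, and if not then ${p_j}_!(\top) \le Q \le {p_i}_!(\top)\vee Q$, so in either case ${p_j}_!(\top) \le {p_i}_!(\top)\vee Q$; hence $\bigvee_{j:I}{p_j}_!(\top) \le {p_i}_!(\top)\vee Q$, and combining with totality, $\top \le {p_i}_!(\top)\vee Q$. By Proposition~\ref{bool-pairs} the covering, disjoint pair $({p_i}_!(\top), Q)$ determines a continuous map $\chi_{{p_i}_!(\top)} : A \cto \bool$ with $\chi_{{p_i}_!(\top)}^*(\cdot = \btrue) = {p_i}_!(\top)$ and $\chi_{{p_i}_!(\top)}^*(\cdot = \bfalse) = Q$, and $Q$ is a Boolean complement of ${p_i}_!(\top)$, so ${p_i}_!(\top)$ is clopen.

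The main obstacle is the covering step: without decidable equality of $I$ one cannot split the index type into the ``$\inteq i$'' and ``$\neg(\cdot \inteq i)$'' parts, so the hypothesis is genuinely needed exactly there (and nowhere else). A secondary care point is to route the disjointness argument through the honest continuous map $\mathsf{inv}$, so that $\mathsf{inv}^*$ really does preserve binary meets and the meets above collapse literally to $\bot$, rather than appealing to the weaker ``positivity'' form of pairwise disjointness in the pattern-family definition, which would only give that those meets are not positive.
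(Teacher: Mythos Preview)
Your proof is correct and follows essentially the same approach as the paper: you exhibit the same Boolean complement $Q = \bigvee_{j \not\inteq i} {p_j}_!(\top)$, use decidable equality of $I$ for the covering step, and argue disjointness summand by summand via distributivity. Your routing of the disjointness step through the meet-preservation of $\mathsf{inv}^*$ (using that $(i,\top)\wedge(j,\top)=\bot$ in the coproduct) is in fact slightly more rigorous than the paper, which simply invokes ``pairwise disjointness'' to conclude ${p_i}_!(\top)\wedge{p_j}_!(\top)=\bot$ even though that condition, as stated, is formulated only in terms of positivity.
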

\begin{longversion}
\begin{proof}
Fix some $i : I$. We claim that the Boolean complement of ${p_i}_!(\top)$ is $\bigvee_{j : I \mid j \not\inteq i} {p_j}_!(\top)$. Since $I$ has decidable equality, their join is $\top$:
\[
{p_i}_!(\top) \vee \bigvee_{j : I \mid j \not\inteq i} {p_j}_!(\top)
= \bigvee_{i : I} {p_i}_!(\top)
= \top.
\]
Pairwise disjointness implies that their meet is $\bot$:
\begin{align*}
{p_i}_!(\top) \wedge \bigvee_{j : I \mid j \not\inteq i} {p_j}_!(\top)
&= \bigvee_{j : I \mid j \not\inteq i} {p_i}_!(\top) \wedge {p_j}_!(\top)
= \bigvee_{j : I \mid j \not\inteq i} \bot
= \bot.
\end{align*}
\end{proof}
This means that if the index type $I$ has decidable equality, the predicate corresponding to any given pattern is decidable. This recovers the usual understanding of pattern matching in functional programming, where patterns are disjoint and correspond to decidable predicates.
\end{longversion}

We will now observe that the pattern families for the various subcategories of $\cat{FSpc}_{nd, p}$ form a lattice of Grothendieck pretopologies. This structure is useful: it tells us that there are certain techniques that we can always use to form pattern families, and that pattern families will have important structural properties. For instance, the transitivity axiom corresponds to the ability to flatten nested pattern matches into a single one.
The stability axiom allows us to use ``pulled-back covers'': it is possible to pattern match on an input $x : A$ by doing a case analysis on $f(x) : B$, such that in each branch it is known that $x$ lies in a particular open subspace of $A$ (rather than only knowing that $f(x)$ lies in a particular open subspace of $B$). The root-finding example in \refsection{s:bcov:root} uses a pulled-back cover in this way.

\begin{definition}[\citealt{sheaves}]
A \emph{Grothendieck pretopology} is an assignment to each space $A$ of a collection of families $(U_i \cto A)_{i : I}$ of continuous maps, called \emph{covering families}, such that
\begin{enumerate}
\item \emph{isomorphisms cover} -- every family consisting of a single isomorphism $
U \stackrel{\iso}{\cto}A$ is a covering family;
\item \emph{stability axiom} -- the collection of covering families is stable under pullback: if $(U_i \cto A )_{i : I}$ is a covering family and $f : V \cto A$ is any continuous map, then the family of pullbacks $(f^* U_i \cto V)_{i : I}$ is a covering family;
\item \emph{transitivity axiom} -- if $(U_i \cto A)_{i : I}$ is a covering family and for each $i$ also $(U_{i,j} \cto U_i)_{j : J_i}$ is a covering family, then also the family of composites $(U_{i,j} \cto U_i \cto A)_{i : I, j : J_i}$ is a covering family.
\end{enumerate}
\end{definition}

\begin{proposition}[Product axiom]
In any Grothendieck pretopology, given covering families 
$( p_i : U_i \cto A )_{i : I}$
and $( q_j : V_j \cto B )_{j : J}$,
there is a product covering family
$ ( p_i \parmap q_j : U_i \times V_j \cto A \times B )_{(i, j) : I \times J}$.
\end{proposition}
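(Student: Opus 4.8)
The plan is to derive the product covering family from the two given ones using only the stability and transitivity axioms, by exhibiting each $p_i \parmap q_j$ as a composite of two maps obtained by pullback. Write $\pi_A : A \times B \cto A$ and $\pi_B : A \times B \cto B$ for the product projections.

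First I would apply the stability axiom to the covering family $(p_i : U_i \cto A)_{i : I}$ along the map $\pi_A : A \times B \cto A$, obtaining a covering family $(\pi_A^*(U_i) \cto A \times B)_{i : I}$. Since $\pi_A$ is a product projection, the pullback of $p_i : U_i \cto A$ along it is (canonically homeomorphic to) $U_i \times B$ with structure map $p_i \parmap \mathsf{id}_B : U_i \times B \cto A \times B$; so, up to this isomorphism, the covering family is $(p_i \parmap \mathsf{id}_B : U_i \times B \cto A \times B)_{i : I}$. Next, for each fixed $i : I$, I would apply the stability axiom to the covering family $(q_j : V_j \cto B)_{j : J}$ along the projection $U_i \times B \cto B$, which gives for each $i$ a covering family $(U_i \times V_j \cto U_i \times B)_{j : J}$ with structure maps $\mathsf{id}_{U_i} \parmap q_j$. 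Finally, the transitivity axiom applied to the covering family $(U_i \times B \cto A \times B)_{i : I}$ together with these covering families $(U_i \times V_j \cto U_i \times B)_{j : J}$ yields a covering family indexed by $(i,j) : I \times J$ whose structure maps are the composites $(p_i \parmap \mathsf{id}_B) \circ (\mathsf{id}_{U_i} \parmap q_j) = p_i \parmap q_j : U_i \times V_j \cto A \times B$, which is exactly the claimed product covering family.

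The only point that needs care — the place I would argue carefully rather than wave past — is the identification of the pullback of a map along a product projection with the evident product map, together with the attendant bookkeeping: the stability axiom produces covering families only relative to a \emph{choice} of pullback object, so one should first note that covering families are closed under replacing each member by its precomposition with an isomorphism (this follows from ``isomorphisms cover'' plus transitivity), and then simply take the product spaces $U_i \times B$ and $U_i \times V_j$ as the pullbacks. Once these choices are fixed, functoriality of $\parmap$ makes the relevant composite literally $p_i \parmap q_j$, so nothing further is required. (A symmetric argument pulling back along $\pi_B$ first and then $\pi_A$ works equally well.)
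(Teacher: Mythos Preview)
Your proof is correct and follows essentially the same route as the paper's: pull back the first family along the projection $A \times B \cto A$, then for each $i$ pull back the second family along $U_i \times B \cto B$, and apply transitivity, finishing with the identity $(p_i \parmap \mathsf{id}_B) \circ (\mathsf{id}_{U_i} \parmap q_j) = p_i \parmap q_j$. If anything, your version is slightly more careful than the paper's in making explicit the ``pullbacks only defined up to isomorphism'' bookkeeping and how ``isomorphisms cover'' plus transitivity discharges it.
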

\begin{longversion}
\begin{proof}
First, we use pullback stability along $\mathsf{fst} : A \times B \cto A$ to produce the covering family
$( \mathsf{fst}^*p_i : U_i \times B \cto A \times B )_{i : I}$. Then, for each $i : I$, we use pullback stability along $\mathsf{snd} : U_i \times B \cto B$ to produce the covering family
$( \mathsf{snd}^*q_j: U_i \times V_j \cto U_i \times B )_{j : J}$.
Finally, by the transitivity axiom, we get the covering family
\[
( \mathsf{fst}^*p_i \circ \mathsf{snd}^*q_j : U_i \times V_j \cto A\times B )_{i :I, j : J}.
\]
Then one can confirm that
\[
\mathsf{fst}^*p_i \circ \mathsf{snd}^*q_j  = p_i \parmap q_j.
\]
\end{proof}
\end{longversion}

\begin{theorem}
For each $\mathcal{C} \in \{ \cat{FSpc}, \cat{FSpc}_{nd}, \cat{FSpc}_{p}, \cat{FSpc}_{nd, p} \}$, the collection of pattern families for $\mathcal{C}$, $\mathcal{J}_\mathcal{C}$, forms a Grothendieck pretopology.
\end{theorem}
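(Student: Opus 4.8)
The plan is to verify the three Grothendieck-pretopology axioms — \emph{isomorphisms cover}, \emph{stability}, and \emph{transitivity} — uniformly for all four collections $\mathcal{J}_\mathcal{C}$, $\mathcal{C} \in \{\cat{FSpc}, \cat{FSpc}_{nd}, \cat{FSpc}_p, \cat{FSpc}_{nd,p}\}$, working through the characterizations of pattern families already established rather than the raw definition: a family $(p_i : U_i \to_o A)_{i : I}$ of open maps is a pattern family for $\cat{FSpc}_{nd,p}$ always, for $\cat{FSpc}_{nd}$ iff $\top \le \bigvee_i {p_i}_!(\top)$, for $\cat{FSpc}_p$ iff each $p_i$ is an open embedding and the family is pairwise disjoint, and for $\cat{FSpc}$ iff both conditions hold. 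I will repeatedly use three facts from earlier in the paper: composites and pullbacks of open maps are open maps, and likewise for open embeddings (Proposition~\ref{pullbackopenembedding} and the composition example); the inverse-image map of the $\mathsf{inv}$ map of a family $(p_i)$ sends a basic open $(i,a)$ of $\sum_i U_i$ to ${p_i}_!(a)$; and the Beck--Chevalley identity $f^*(p_!(a)) = \theta_!(\varphi^*(a))$ for an open map $p$ pulled back along an arbitrary map, as used in the pullback proposition after \cite{pedicchio}.

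\emph{Isomorphisms cover} is immediate: a one-element family $\{p : U \stackrel{\iso}{\cto} A\}$ has $\mathsf{inv} = p^{-1}$, an ordinary continuous map, hence in every $\mathcal{C}$ on the list. For \emph{stability}, given a pattern family $(p_i : U_i \to_o A)_i$ for $\mathcal{C}$ and $f : V \cto A$, I form the pullbacks $f^*p_i : f^*U_i \to_o V$ (open maps, and open embeddings when the $p_i$ are) with remaining legs $g_i : f^*U_i \cto U_i$. Instantiating Beck--Chevalley at $a = \top$ and using that $g_i^*$ preserves $\top$ gives the key identity ${(f^*p_i)}_!(\top) = f^*({p_i}_!(\top))$. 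Then $\cat{FSpc}_{nd,p}$ is automatic; for $\cat{FSpc}_{nd}$, $\bigvee_i {(f^*p_i)}_!(\top) = f^*\bigl(\bigvee_i {p_i}_!(\top)\bigr) \ge f^*(\top) = \top$ since $f^*$ preserves joins and $\top$; for $\cat{FSpc}_p$, each $f^*p_i$ is an open embedding and, noting that $f^*$ reflects positivity (being join-preserving), if ${(f^*p_i)}_!(\top) \wedge {(f^*p_j)}_!(\top) = f^*\bigl({p_i}_!(\top) \wedge {p_j}_!(\top)\bigr)$ is positive then so is ${p_i}_!(\top) \wedge {p_j}_!(\top)$, forcing $i = j$; and $\cat{FSpc}$ combines the two.

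For \emph{transitivity}, given $(p_i : U_i \to_o A)_i$ for $\mathcal{C}$ and, for each $i$, $(p_{i,j} : U_{i,j} \to_o U_i)_j$ for $\mathcal{C}$, the composite family $(p_i \circ p_{i,j} : U_{i,j} \to_o A)_{(i,j)}$ consists of open maps (open embeddings when all inputs are). Its $\mathsf{inv}$ map factors in $\cat{FSpc}_{nd,p}$ as $\mathsf{inv}_A$ followed by $\sum_i \mathsf{inv}_i : \sum_i U_i \ndpto \sum_i \sum_j U_{i,j} \cong \sum_{(i,j)} U_{i,j}$, since both composites send a basic open $((i,j),c)$ to ${p_i}_!\bigl({p_{i,j}}_!(c)\bigr)$, using that the direct image of a composite of open maps is the composite of their direct images. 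Since each $\mathcal{C}$ is a subcategory it is closed under composition, so it remains to check $\sum_i \mathsf{inv}_i \in \mathcal{C}$; its inverse-image map is the componentwise product of the $\mathsf{inv}_i^*$, and because meets, joins, and $\top$ in a coproduct space are computed componentwise, $\sum_i \mathsf{inv}_i^*$ preserves whichever of these each $\mathsf{inv}_i^*$ preserves.

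The product axiom then comes for free, as it was already derived from these three. I expect the main obstacle to be the stability step: getting the Beck--Chevalley identity into exactly the form ${(f^*p_i)}_!(\top) = f^*({p_i}_!(\top))$, and the $\cat{FSpc}_p$ case's positivity bookkeeping (which hinges on the small observation that a join-preserving inverse-image map reflects positivity). Everything else is routine component-chasing on coproduct spaces, plus the already-available closure properties of open maps and open embeddings under composition and pullback.
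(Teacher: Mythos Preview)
Your proposal is correct. The \emph{isomorphisms cover} and \emph{stability} arguments match the paper's proof essentially line for line: the paper also computes $(f^*p_i)_!(\top) = f^*({p_i}_!(\top))$ (via the explicit pullback description $f^*U_i \cong \{V \mid f^*({p_i}_!(\top))\}$ rather than citing Beck--Chevalley, but this is the same content), then runs the same join computation for $\cat{FSpc}_{nd}$ and the same positivity-reflection argument for $\cat{FSpc}_p$.

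Your \emph{transitivity} argument, however, is genuinely different and cleaner. The paper verifies transitivity separately for each $\mathcal{C}$ by direct computation: for $\cat{FSpc}_{nd}$ it expands $\bigvee_{i,j}(p_i\circ p_{i,j})_!(\top)$ and simplifies to $\top$, and for $\cat{FSpc}_p$ it runs a somewhat delicate two-step positivity argument (first showing $i\equiv i'$, then using that ${p_i}_!$ preserves binary meets and reflects positivity to conclude $j\equiv j'$). Your factorization of the composite family's $\mathsf{inv}$ as $(\sum_i\mathsf{inv}_i)\circ\mathsf{inv}_A$ replaces all of this with a single observation: both factors lie in $\mathcal{C}$ (the first because coproducts of maps act componentwise on opens, the second by hypothesis), and subcategories are closed under composition. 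This buys you a uniform argument across all four $\mathcal{C}$ at once and sidesteps the positivity bookkeeping entirely; the paper's approach, by contrast, keeps everything at the level of opens and never needs to name the intermediate map $\sum_i\mathsf{inv}_i$.
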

\begin{longversion}
\begin{proof}
\begin{description}
\item[$\cat{FSpc}_{nd, p}$]

\begin{enumerate}
\item Homeomorphisms cover, since any homeomorphism is (an open embedding and thus) an open map.
\item Stability follows from the fact that the pullback of an open map against any continuous map is open (see Proposition C3.1.11 (i) in \cite{elephant}).
\item Transitivity follows from the fact that open maps are closed under composition.
\end{enumerate}

\item[$\cat{FSpc}_{nd}$] (\emph{Totality})
\begin{enumerate}
\item An isomorphism $p : V \to_o A$ satisfies $p_!(\top) = \top$ and so it alone is a pattern family on $A$.
\item Given a pattern family $p_i : U_i \to_o A$ and a continuous map $f : V \cto A$, the pullback object $f^*U_i$ is homeomorphic to $\{ V \mid f^*({p_i}_!(\top)) \}$. Using this definition of the pullback object, we confirm
\[
\bigvee_{i : I} (f^*p_i)_!(\top) 
  = \bigvee_{i : I} f^*({p_i}_!(\top))
  = f^* \left( \bigvee_{i : I} {p_i}_!(\top) \right)
  = f^*(\top)
  = \top.
\]
\item The family of composites indeed covers, since
\begin{align*}
\bigvee_{i : I, j : J_i} (p_i  \circ p_{i,j})_!(\top)
&= \bigvee_{i : I, j : J_i} {p_i}_!({p_{i,j}}_!(\top))
\\ &= \bigvee_{i : I} {p_i}_! \left( \bigvee_{j : J_i} {p_{i,j}}_!(\top) \right)
   \tag{direct images preserve joins}
\\ &= \bigvee_{i : I} {p_i}_! (\top)
\\ &= \top.
\end{align*}
\end{enumerate}

\item[$\cat{FSpc}_{p}$] (\emph{Determinism})
\begin{enumerate}
\item Any cover with a single pattern is trivially pairwise disjoint.
\item By Proposition \ref{pullbackopenembedding}, the pullback of an open embedding is an open embedding. 
Thus it remains to confirm that pullback preserves disjointness. From the computation
\begin{align*}
(f^*p_i)_!(\top) \wedge (f^*p_j)_!(\top)
&= f^*({p_i}_!(\top)) \wedge f^*({p_j}_!(\top))
\\ &= f^*({p_i}_!(\top) \wedge {p_j}_!(\top)),
\end{align*}
and since $f^*$ preserves joins, by Proposition \ref{thm:join-pos}
if $(f^*p_i)_!(\top) \wedge (f^*p_j)_!(\top)$ is positive, then so is ${p_i}_!(\top) \wedge {p_j}_!(\top)$, which implies that $i \inteq j$. Therefore, the pullback of the cover is still pairwise disjoint.

\item
Since the composition of open embeddings is an open embedding, it only remains to confirm that transitivity preserves disjointness.
Suppose we consider two composites $p_i \circ p_{i, j}$ and $p_{i'} \circ p_{i', j'}$ such that
\begin{align*}
&(p_i \circ p_{i, j})_!(\top) \wedge (p_{i'} \circ p_{i', j'})_!(\top)
\\ &= {p_i}_!({p_{i, j}}_!(\top)) \wedge {p_{i'}}_!({p_{i', j'}}_!(\top))
\end{align*}
is positive. This implies that the larger open
\[
{p_i}_!(\top) \wedge {p_{i'}}_!(\top)
\]
is also positive, and therefore, $i \inteq i'$. We will now prove that $j \inteq j'$. Since $i \inteq i'$, by equality induction we can consider them both $i$, so that we know that
\begin{align*}
&(p_i \circ p_{i, j})_!(\top) \wedge (p_i \circ p_{i, j'})_!(\top)
\\ &= {p_i}_!({p_{i, j}}_!(\top)) \wedge {p_i}_!({p_{i, j'}}_!(\top))
\\ &= {p_i}_!({p_{i, j}}_!(\top) \wedge {p_{i, j'}}_!(\top))
  \tag{${p_i}_!$ preserves meets}
\end{align*}
is positive. Note that any direct image map $f_!$ preserves joins, and accordingly, if $f_!(U)$ is positive then $U$ is positive. Therefore,
\[
{p_{i, j}}_!(\top) \wedge {p_{i, j'}}_!(\top)
\]
is positive, and by pairwise disjointness of the covering family, we know $j \inteq j'$.
Therefore, $(i, j) \inteq (i', j')$.

\end{enumerate}

\item[$\cat{FSpc}$] (\emph{Totality} and \emph{determinism}) 
This is just the intersection $\mathcal{J}_{nd} \cap \mathcal{J}_p$, and it is straightforward from the definition of a Grothendieck pretopology that they are closed under arbitrary intersection.
\end{description}
\end{proof}
\end{longversion}

\subsection{Syntax of pattern matching}
\label{s:contpl}

We now describe syntax for a programming language with spaces with pattern matching as guided by the above semantics.

Each of $\mathcal{C} \in \{ \cat{FSpc}, \cat{FSpc}_{nd}, \cat{FSpc}_{p}, \cat{FSpc}_{nd, p} \}$ are cartesian monoidal categories, meaning that they admit a restricted (first-order) $\lambda$-calculus syntax \cite{escardo2004}, with the typing rules

\begin{mathpar}
\inferrule*[right=var]
  {(x : A) \in \Gamma}
  {\Gamma \vdash_\mathcal{C} x : A}

\inferrule*[right=app]
  {f : A_1 \times \cdots \times A_n \to_\mathcal{C} B
  \\ \Gamma \vdash_\mathcal{C} e_1 : A_1 \\ \cdots \\ \Gamma \vdash_\mathcal{C} e_n : A_n}
  {\Gamma \vdash_\mathcal{C} f(e_1, \ldots, e_n) : B},
\end{mathpar}
where contexts are lists of spaces and in the \irule{var} rule $(x : A) \in \Gamma$ denotes a witness that $A$ is a member of the list $\Gamma$. Let $\mathsf{Prod} : \mathsf{list}(\cat{FSpc}) \to \cat{FSpc}$ denote the product of a list of types.
We use an unadorned turnstile $\vdash$ for continuous maps, $\mathcal{C} = \cat{FSpc}$.

\begin{proposition}
Given any expression $\Gamma \vdash_\mathcal{C} e : A$, we can construct a term $\denote{e} : \mathsf{Prod}(\Gamma) \to_\mathcal{C} A$.
\end{proposition}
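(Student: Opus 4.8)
The plan is to proceed by structural induction on the derivation of $\Gamma \vdash_{\mathcal{C}} e : A$, interpreting contexts as products of spaces and each typing rule by a canonical combinator in $\mathcal{C}$. Since each $\mathcal{C} \in \{ \cat{FSpc}, \cat{FSpc}_{nd}, \cat{FSpc}_{p}, \cat{FSpc}_{nd, p} \}$ is cartesian monoidal, with monoidal product given by the product of spaces (as cited), it carries binary projections $\pi_1 : X \times Y \to_{\mathcal{C}} X$ and $\pi_2 : X \times Y \to_{\mathcal{C}} Y$, a unique map $X \to_{\mathcal{C}} \One$ into the terminal object, and, for maps $g : X \to_{\mathcal{C}} Y$ and $h : X \to_{\mathcal{C}} Z$, a pairing $\langle g, h \rangle : X \to_{\mathcal{C}} Y \times Z$, all subject to the usual equations; this is exactly the structure Escardó uses to interpret the first-order $\lambda$-calculus, so the construction below is an instance of that interpretation. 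Writing $\mathsf{Prod}$ out as $\mathsf{Prod}([\,]) \triangleq \One$ and $\mathsf{Prod}(A :: \Gamma') \triangleq A \times \mathsf{Prod}(\Gamma')$ (any such convention will do, up to canonical isomorphism), a membership witness $(x : A) \in \Gamma$ unwinds, by induction on the witness, to a projection $\mathsf{proj}_x : \mathsf{Prod}(\Gamma) \to_{\mathcal{C}} A$: the head witness is interpreted by $\pi_1$, and a tail witness by precomposing the tail's projection with $\pi_2$.

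For the base case \irule{var}, the premise is precisely such a witness $(x : A) \in \Gamma$, and we set $\denote{x} \triangleq \mathsf{proj}_x$. For the inductive case \irule{app}, we are given $f : A_1 \times \cdots \times A_n \to_{\mathcal{C}} B$ and, for $1 \le i \le n$, derivations $\Gamma \vdash_{\mathcal{C}} e_i : A_i$; the induction hypothesis yields $\denote{e_i} : \mathsf{Prod}(\Gamma) \to_{\mathcal{C}} A_i$. Iterating the binary pairing (and using the terminal map when $n = 0$) gives $\langle \denote{e_1}, \ldots, \denote{e_n} \rangle : \mathsf{Prod}(\Gamma) \to_{\mathcal{C}} A_1 \times \cdots \times A_n$, and we set
\[
\denote{f(e_1, \ldots, e_n)} \triangleq f \circ \langle \denote{e_1}, \ldots, \denote{e_n} \rangle : \mathsf{Prod}(\Gamma) \to_{\mathcal{C}} B .
\]
This completes the induction. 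Since \irule{var} and \irule{app} are syntax-directed and a variable already records which occurrence in $\Gamma$ it denotes, the derivation — hence $\denote{e}$ — is uniquely determined by $e$, so no coherence check is needed for $\denote{\cdot}$ to be well defined.

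The construction is routine once the ambient structure is fixed, and I do not expect a genuine obstacle within the scope of this statement; the only point demanding care is that the combinators used — projections, pairing, and the terminal map, together with their equations — really are available for the monoidal product in the less obviously cartesian categories $\cat{FSpc}_{nd}$, $\cat{FSpc}_{p}$, and $\cat{FSpc}_{nd, p}$, which is exactly the cited fact that they are cartesian monoidal (for $\cat{FSpc}$ it is immediate; for the others one checks that the relevant inverse-image maps still preserve joins and, as appropriate, $\top$ and binary meets). I would also flag that the usual companion results — a substitution lemma identifying capture-avoiding substitution with composition in $\mathcal{C}$, and hence soundness of an equational theory over these expressions — are where real work would be needed, but they lie beyond what this proposition asserts.
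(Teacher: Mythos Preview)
Your proof is correct and follows essentially the same approach as the paper: structural induction on the derivation, interpreting \irule{var} by a projection computed from the membership witness and \irule{app} by tupling the inductively obtained maps via the cartesian structure and postcomposing with $f$. The paper's own proof (given in the appendix for the $\cat{FSpc}$ case, with the general $\mathcal{C}$ version stated without proof in the main text) is terser but identical in content; your additional remarks on well-definedness and on why the cartesian monoidal structure is available in the partial and nondeterministic variants are accurate but go slightly beyond what the paper spells out.
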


\newcommand{\gor}{\ \ \mid\ \ }

\ConsiderRemoving{
\begin{figure}
\small
\newcommand{\shole}[1]{\langle \text{#1} \rangle}
\begin{align*}
\text{expression}\ e ::=\ &x
\gor f(e, \ldots, e)
\gor \mathsf{case}(e) \begin{cases}
p \Branch e
\\ \vdots
\\ p \Branch e
\end{cases}
\\
\text{pattern}\ p ::=\ &f
 \gor x
  \gor \_
  \gor f(p)
  \gor  p, p
\\
\text{function}\ f \phantom{::=\ }&
\\ \text{variable}\ x \phantom{::=\ }&
\end{align*}
\caption{Syntax for our simple language with pattern matching.}
\end{figure}
}

In Fig. \ref{fig:pattern-syntax} we define syntax and typing rules for patterns,
where $p : A \dashv \Gamma$ intuitively means that the pattern $p$ provides a context of pattern matching variables $\Gamma$ by pattern matching on a space $A$.
For instance, we could have the pattern
\[
(\up(x) , y)  : \Lifted{A} \times B \dashv x : A, y : B
\]
on a space $\Lifted{A} \times B$ that provides variables $x : A$ and $y : B$ to be used in the branch corresponding to that pattern.

\begin{figure}[htbp]
   \centering
\begin{mathpar}
\inferrule*[right=constant]
  {f : \One \to_o A}
  {f : A \dashv \cdot}

\inferrule*[right=var]
  { }
  {v : A \dashv v : A}
  
\inferrule*[right=wildcard]
  { }
  {\wildcard : A \dashv \cdot}
  
\inferrule*[right=compose]
  {p : U \dashv \Gamma \\ f : U \to_o A}
  {f(p) : A \dashv \Gamma}
  
\inferrule*[right=product]
  {p : A \dashv \Gamma \\ q : B \dashv \Delta}
  {p, q : A \times B \dashv \Gamma, \Delta}
\end{mathpar}
   \caption{Syntax and type rules for patterns.}
   \label{fig:pattern-syntax}
\end{figure}

\begin{theorem}
Given any pattern derivation $p : A \dashv \Gamma$, there is a map $\denote{p} : \mathsf{Prod}(\Gamma) \times \Delta \to_o A$ for some space $\Delta$ that collects the ``discarded variables'' from the wildcards\footnote{
Discarding variables is unnecessary if we only consider overt spaces $A$; a space $A$ is overt if the map $A \cto \One$ is an open map. Classically, all spaces are overt.
}.
\end{theorem}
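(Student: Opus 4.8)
The plan is to prove the statement by structural induction on the pattern derivation $p : A \dashv \Gamma$, producing in each case both the space $\Delta$ of discarded variables and the open map $\denote{p} : \mathsf{Prod}(\Gamma) \times \Delta \to_o A$. Three facts established earlier do all the real work: isomorphisms are open maps (if $f$ has inverse $g$, then $f_! = g^*$ is a left adjoint to $f^*$ and the Frobenius law is immediate), open maps are closed under composition, and parallel composition of open maps is again open. In particular the structural isomorphisms $\One \times X \cong X$, $X \times Y \cong Y \times X$, $(X \times Y) \times Z \cong X \times (Y \times Z)$, and $\mathsf{Prod}(\Gamma, \Delta) \cong \mathsf{Prod}(\Gamma) \times \mathsf{Prod}(\Delta)$ are all open maps.

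For the base cases: in \irule{constant}, with $f : \One \to_o A$ and empty context, take $\Delta \triangleq \One$ and let $\denote{f}$ be $f$ precomposed with $\One \times \One \cong \One$, a composite of open maps. In \irule{var}, with $\Gamma = (v : A)$, take $\Delta \triangleq \One$ and let $\denote{v}$ be the isomorphism $A \times \One \cong A$. In \irule{wildcard}, with empty context, take $\Delta \triangleq A$ and let $\denote{\wildcard}$ be the isomorphism $\One \times A \cong A$; semantically the wildcard ``matches everything,'' i.e.\ is the identity on $A$, and since $\Gamma$ contributes nothing the whole of $A$ must be carried along as a discarded variable. This is precisely where the ``for some $\Delta$'' is forced, and it is the point of the footnote: if $A$ is overt, the unique map $A \cto \One$ is open, so one could instead keep $\Delta = \One$ and take $\denote{\wildcard}$ to be a projection.

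For the inductive steps: in \irule{compose}, with $p : U \dashv \Gamma$, $f : U \to_o A$, and inductive hypothesis $\denote{p} : \mathsf{Prod}(\Gamma) \times \Delta \to_o U$, set $\denote{f(p)} \triangleq f \circ \denote{p}$ with the same $\Delta$, open as a composite of open maps. In \irule{product}, with $p : A \dashv \Gamma$ and $q : B \dashv \Delta'$ and inductive hypotheses $\denote{p} : \mathsf{Prod}(\Gamma) \times \Delta_p \to_o A$ and $\denote{q} : \mathsf{Prod}(\Delta') \times \Delta_q \to_o B$, take $\Delta \triangleq \Delta_p \times \Delta_q$ and define $\denote{p, q}$ to be the structural isomorphism $\mathsf{Prod}(\Gamma, \Delta') \times (\Delta_p \times \Delta_q) \cong (\mathsf{Prod}(\Gamma) \times \Delta_p) \times (\mathsf{Prod}(\Delta') \times \Delta_q)$ followed by $\denote{p} \parmap \denote{q}$, which is open since parallel composition preserves openness.

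No step is genuinely hard; the only content is recognizing in the \irule{wildcard} case that openness cannot be maintained without retaining a copy of $A$ (absent overtness), which is exactly why the discarded-variable space is needed, and then threading the reassociation isomorphisms through the \irule{product} case so the domain has precisely the shape $\mathsf{Prod}(\Gamma,\Delta') \times \Delta$. The main ``obstacle'' is thus bookkeeping: keeping the nesting of products in $\Delta$ consistent with the order in which patterns (and hence wildcards) occur, so that later the branch-execution map composes correctly with $\denote{p}$.
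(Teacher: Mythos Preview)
Your proof is correct and follows essentially the same approach as the paper: structural induction on the pattern derivation, using identity (for \irule{var} and \irule{wildcard}, with $\Delta \triangleq A$ in the latter case), composition of open maps (for \irule{compose}), and parallel composition of open maps together with structural rearrangement isomorphisms (for \irule{product}). You are, if anything, more careful than the paper about the bookkeeping around the $\One$ factors and the reassociations in the \irule{product} case.
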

\begin{shortversion}
\begin{proof}[Proof sketch]
Follows from the fact that open maps include the identity (\irule{var}, \irule{wildcard}) and are closed under composition (\irule{compose}) and parallel composition (\irule{product}).
\end{proof}
\end{shortversion}
\begin{longversion}
\begin{proof}
By induction on the derivation of the pattern:
\begin{description}
\item[\irule{constant}] By assumption, we have a constant map of the right kind.
\item[\irule{var}] We use $\mathsf{id} : A \to_{o} A$.
\item[\irule{wildcard}] We \emph{also} use $\mathsf{id} : A \to_{o} A$, but since we require $\Gamma \iso \One$, we set the ``garbage space'' $\Delta$ to be $A$.
\item[\irule{compose}] By induction, we have a map $\denote{p} : \Gamma \times \Delta \to_o U$ for some $\Delta$. Then we use the composition $(f \parmap \mathsf{id}_\Delta) \circ \denote{p} : \Gamma \times \Delta \to_o A$, threading through the ``garbage'' space $\Delta$.
\item[\irule{product}] By induction, we have maps $\denote{p} : \Gamma_1 \times \Delta_1 \to_o A$ and $\denote{q} : \Gamma_2 \times \Delta_2 \to_ B$. Their parallel composition $\denote{p} \parmap \denote{q} : \Gamma_1 \times \Gamma_2 \to_o A \times B$ is also a map of the right kind, which we can use, together with some homeomorphisms to rearrange the ``garbage'' spaces $\Delta_1$ and $\Delta_2$ and to pass them through.
\end{description}
\end{proof}
\end{longversion}

Note that the same pattern syntax would work with open embeddings instead of open maps, as they too form a category and are closed under parallel composition; patterns of open embeddings are necessary for constructing deterministic programs.

\begin{longversion}
Note that the definition of patterns we give here does \emph{not} admit some structural rules that may be expected, such as 
\begin{description}
\item[\emph{Strengthening}] Given $p : A \dashv \Gamma$, we cannot necessarily derive $p : A \dashv \Delta$ where $\Delta$ is a sub-list of $\Gamma$.
\item[\emph{Exchange}] Given $p : A \dashv \Gamma, \Delta$, we cannot necessarily derive $p : A \dashv \Delta, \Gamma$.
\end{description}
These structural properties aren't necessary since patterns are \emph{providing} a context to be \emph{used} by the expression language, which \emph{does} have the corresponding rules weakening and exchange. Regardless, the \irule{wildcard} rule still gives a sort of ``explicit'' strengthening, allowing the user to discard some variables.
\end{longversion}

We can now define a general (mostly) syntactic rule for interpreting the pattern matches described in this section.

\begin{theorem}
We can interpret the syntax\footnote{
As mnemonics, $I$ stands for \emph{index} type, $s$ stands for \emph{scrutinee} of a case expression, $p$ stands for \emph{pattern}, and $e$ for \emph{expression}.}
\begin{mathpar}
\inferrule*[right=case-$\mathcal{C}$]
{\Gamma \vdash_\mathcal{C} s : A
 \\ \prod_{i : I} p_i : A \dashv_o A_i
 \\ \prod_{i : I} \Gamma, A_i \vdash_\mathcal{C} e_i : B
 \\ (\denote{p_i})_{i : I} \in \mathcal{J}_\mathcal{C}}
{\Gamma \vdash_\mathcal{C} \left(\mathsf{case}(s) 
    \begin{cases} [i : I] \quad p_i &\Branch e_i
    \end{cases}\right) : B}
\end{mathpar}
(with the one non-syntactic side condition $(\denote{p_i})_{i : I} \in \mathcal{J}_\mathcal{C}$),
where $\mathcal{C} \in \{ \cat{FSpc}, \cat{FSpc}_{nd}, \cat{FSpc}_{p}, \cat{FSpc}_{nd, p} \}$.
\end{theorem}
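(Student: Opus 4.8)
The plan is to assemble $\denote{\mathsf{case}(s)\{\ldots\}}$ from the interpretations of the scrutinee, the patterns, and the branches, using only the finite products, set-indexed coproducts, and distributive law that $\cat{FSpc}$ already possesses and that each subcategory $\mathcal{C}$ inherits. First I would apply the interpretation of expressions to $s$ to get $\denote{s} : \mathsf{Prod}(\Gamma) \to_\mathcal{C} A$; apply the interpretation of patterns to each $p_i$ to get an open map $\denote{p_i} : A_i \times \Delta_i \to_o A$, where $\Delta_i$ is the space collecting the variables $p_i$ discards via wildcards; and apply the interpretation of expressions to each branch to get $\denote{e_i} : \mathsf{Prod}(\Gamma) \times A_i \to_\mathcal{C} B$. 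By definition of $\mathcal{J}_\mathcal{C}$, the side condition $(\denote{p_i})_{i : I} \in \mathcal{J}_\mathcal{C}$ is precisely the statement that $\mathsf{inv} \triangleq \bigsqcup_{i : I} \mathsf{inj}_i \circ \denote{p_i}^{-1} : A \ndpto \sum_{i : I} (A_i \times \Delta_i)$ is in fact a morphism of $\mathcal{C}$.

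Next I would thread the ambient context $\Gamma$ through the case split. Since $\mathcal{C}$ is cartesian monoidal, form the pairing $\langle \mathsf{id}_{\mathsf{Prod}(\Gamma)}, \denote{s} \rangle : \mathsf{Prod}(\Gamma) \to_\mathcal{C} \mathsf{Prod}(\Gamma) \times A$; postcompose with the parallel composite $\mathsf{id}_{\mathsf{Prod}(\Gamma)} \parmap \mathsf{inv}$ to land in $\mathsf{Prod}(\Gamma) \times \sum_{i : I} (A_i \times \Delta_i)$; apply the distributivity homeomorphism $\mathsf{Prod}(\Gamma) \times \sum_{i : I} (A_i \times \Delta_i) \iso \sum_{i : I} (\mathsf{Prod}(\Gamma) \times A_i \times \Delta_i)$; project each $\Delta_i$ away to reach $\sum_{i : I} (\mathsf{Prod}(\Gamma) \times A_i)$; and copair the branches by the universal property of the coproduct to get $[\denote{e_i}]_{i : I} : \sum_{i : I} (\mathsf{Prod}(\Gamma) \times A_i) \to_\mathcal{C} B$. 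The composite of this chain is the desired term $\denote{\mathsf{case}(s)\{\ldots\}} : \mathsf{Prod}(\Gamma) \to_\mathcal{C} B$. It then remains to verify, routinely, that the construction does not depend on the chosen garbage spaces $\Delta_i$ and that when $\mathcal{C} = \cat{FSpc}_{nd, p}$ it reduces to the nondeterministic union $e \circ \mathsf{inv}$ of the branches described earlier.

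The construction itself is easy; the content is confirming that the result genuinely lies in $\mathcal{C}$ and not merely in $\cat{FSpc}_{nd, p}$, and I expect this to be the main obstacle. The essential point is that $\mathsf{inv}$ need not be continuous, so its membership in $\mathcal{C}$ is supplied solely by the side condition $(\denote{p_i})_{i : I} \in \mathcal{J}_\mathcal{C}$; once $\mathsf{inv} \in \mathcal{C}$, every other map in the chain is harmless. The purely topological maps — the distributivity homeomorphism, the projections discarding the $\Delta_i$, and the coproduct injections implicit in the copairing — are $\mathcal{C}$-morphisms because $\cat{FSpc}$ embeds faithfully into $\mathcal{C}$. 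The pairing $\langle \mathsf{id}, \denote{s} \rangle$ and the copairing $[\denote{e_i}]_{i : I}$ exist and lie in $\mathcal{C}$ because the $\cat{FSpc}$-product and the $\cat{FSpc}$-coproduct remain a product and a coproduct with respect to $\mathcal{C}$-morphisms. And taking the parallel composite of $\mathsf{inv}$ with an identity keeps it in $\mathcal{C}$ since $\mathcal{C}$ is closed under the (cartesian) monoidal structure. So the remaining work is pinning down these closure facts precisely — most notably that $\mathsf{Prod}(\Gamma) \times (-)$ preserves set-indexed coproducts in each $\mathcal{C}$ (equivalently, that the distributive law is a $\mathcal{C}$-isomorphism) and that parallel composition with identities preserves each weakened inverse-image condition, \irule{meet-0} and \irule{meet-2}.
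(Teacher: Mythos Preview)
Your proposal is correct and follows essentially the same approach as the paper: interpret the scrutinee, patterns, and branches; use the side condition to obtain $\mathsf{inv} : A \to_\mathcal{C} \sum_{i : I} A_i \times \Delta_i$; then compose with the branches while threading $\Gamma$ through. The paper compresses your pairing/distributivity/projection/copairing chain into a single ``let-in'' expression $f(\gamma) \triangleq \text{let } \langle i, x \rangle = \mathsf{inv}(\denote{s}(\gamma)) \text{ in } \denote{e_i}(\gamma, \mathsf{fst}(x))$, with a footnote conceding that the categorical semantics of this syntax has not been formally spelled out --- you have simply made explicit what the paper leaves implicit, including the closure checks that keep each step in $\mathcal{C}$.
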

\begin{proof}
The syntactic constructions give us maps $\denote{s} : \Gamma \to_\mathcal{C} A$, $\denote{p_i} : A_i \times \Delta_i \to_o A$ (for some spaces $\Delta_i$ representing discarded variables in the pattern $p_i$), and $\denote{e_i} : \Gamma \times A_i \to_\mathcal{C} B$. The condition $(\denote{p_i})_{i : I} \in \mathcal{J}_\mathcal{C}$ means that these maps appropriately cover $A$, so that we get an appropriately behaved map
$\mathsf{inv} : A \to_\mathcal{C} \sum_{i : I} A_i \times \Delta_i$.

We must produce a map $f : \Gamma \to_\mathcal{C} B$. We can do so by defining\footnote{
While the ``let-in'' syntax for using the universal property (itself a sort of pattern matching) of sums has not been formally described, hopefully it is clear how it can be implemented via categorical semantics.}
\begin{align*}
f : \Gamma &\to_\mathcal{C} B
\\ f(\gamma) &\triangleq 
\text{  let  } \langle i , x \rangle \triangleq \mathsf{inv}(\denote{s}(\gamma))
  \text{  in  } \denote{e_i}(\gamma, \mathsf{fst}(x)).
 \qedhere
\end{align*}
\end{proof}

Note that the condition $(\denote{p_i})_{i : I} \in \mathcal{J}_\mathcal{C}$ that the patterns lie in the appropriate Grothendieck pretopology is trivial when $\mathcal{C} = \cat{FSpc}_{nd, p}$, making the rule purely syntactic.

\section{Formal logics for approximate decision procedures}

\label{bcover}

In this section, we develop a formal logic for constructing decision procedures on spaces that may be either partial or nondeterministic, by simply considering partial and/or nondeterministic Boolean values. \grammarr{This} is useful where there are not total and deterministic decision procedures: for instance, there are no nontrivial maps $\R \cto \bool$ (since $\R$ is connected) but plenty of nondeterministic maps $\R \ndto \bool$ or partial maps $\R \pto \bool$.

Conventional decision procedures in general functional programming correspond to decidable predicates, or functions returning Boolean values. Decidable predicates are closed under conjunction, disjunction, and negation since those operations are computable on $\bool$, and additionally they are closed under universal and existential quantification over finite\footnote{
By ``finite,'' we always mean Kuratowski-finite \cite{johnstonetopos}.}
sets.

Analogously, $\bool$-valued continuous maps are also closed under conjunction, disjunction, and negation, and they admit quantification over \emph{compact-overt} spaces (which generalize finite sets). We will show that these operations still work when partiality or nondeterminism is admitted, where Boolean logic is generalized to many-valued logic with the structure of a quasi-Boolean algebra.

\begin{proposition}
\label{bool-ndp}
Maps $A \ndpto \bool$ are in bijective correspondence with pairs $(P, Q)$ of opens of $A$.
For maps $A \ndto \bool$ the opens are \emph{covering}, i.e.,  $\top \le P \vee Q$, and for maps $A \pto \bool$ they are \emph{disjoint}, i.e., $P \wedge Q \le \bot$.
\end{proposition}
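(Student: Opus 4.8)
The plan is to mimic the proof sketch of Proposition~\ref{bool-pairs}, exploiting that $\bool$ is discrete with exactly two points, so that $\Open{\bool}$ is (essentially) the free suplattice on two generators. Concretely, $\Open{\bool}$ is the four-element Boolean algebra $\{\bot,\ (\cdot = \btrue),\ (\cdot = \bfalse),\ \top\}$, in which $(\cdot = \btrue)$ and $(\cdot = \bfalse)$ are atoms with $(\cdot = \btrue) \vee (\cdot = \bfalse) = \top$ and $(\cdot = \btrue) \wedge (\cdot = \bfalse) = \bot$, and every open is a join of these two atoms (with $\bot$ the empty join). I would state this as the first observation.

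Next I would establish the bijection for $\ndpto$. By definition a map $f : A \ndpto \bool$ is a join-preserving map $f^* : \Open{\bool} \to \Open{A}$. Since the two atoms join-generate $\Open{\bool}$, such an $f^*$ is determined by the pair $(P, Q) \triangleq (f^*(\cdot = \btrue),\ f^*(\cdot = \bfalse))$. Conversely, given any pair $(P,Q)$ of opens of $A$, the assignment $\bot \mapsto \bot$, $(\cdot = \btrue) \mapsto P$, $(\cdot = \bfalse) \mapsto Q$, $\top \mapsto P \vee Q$ preserves joins — one only has to verify the finitely many joins among these four elements — and hence defines a map $A \ndpto \bool$. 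These two constructions are mutually inverse, so $A \ndpto \bool$ is identified with $\Open{A} \times \Open{A}$. (This is just the statement, already implicit in the remark that $\cat{FSpc}_{nd,p}$ is equivalent to the category of suplattices, that suplattice homomorphisms out of the free suplattice on two generators are pairs of elements.)

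Finally I would restrict to the $\ndto$ and $\pto$ cases under this bijection. The map $f^*$ preserves $\top$ exactly when $P \vee Q = f^*(\top) = \top$, i.e.\ when $(P,Q)$ is covering, giving the claim for $A \ndto \bool$. For $A \pto \bool$, $f^*$ must additionally preserve binary meets; as in the running $\mathsf{cmp}$ example it suffices to check meets of basic opens, and $f^*$ automatically preserves any meet involving $\bot$, $\top$, or a repeated atom, so the only constraint is that $f^*\big((\cdot = \btrue) \wedge (\cdot = \bfalse)\big) = f^*(\bot) = \bot$ equal $P \wedge Q$, i.e.\ that $(P,Q)$ be disjoint.

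There is no real obstacle: the whole content is that $\Open{\bool}$ is a tiny free suplattice, and the rest is a finite case check. The only point requiring a moment of care is well-definedness of the converse map — that a pair $(P,Q)$ genuinely yields a suplattice homomorphism rather than merely a monotone function — but this is immediate because $\Open{\bool}$ carries no join relations beyond $\top$ being the join of its two atoms, which the assignment respects by construction.
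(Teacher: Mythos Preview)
Your argument is correct. The paper does not actually supply a proof for Proposition~\ref{bool-ndp}; it is stated without proof, relying on the reader having absorbed the proof sketch of the closely related Proposition~\ref{bool-pairs}. Your write-up is exactly the natural elaboration of that sketch: identify $\Open{\bool}$ as the four-element lattice freely join-generated by the two atoms, observe that a join-preserving map is determined by its values on those atoms, and then read off the \irule{meet-0} and \irule{meet-2} conditions as covering and disjointness respectively. One small wording point: in the $\pto$ paragraph, ``additionally'' is slightly ambiguous --- make clear you mean ``in addition to join-preservation'' rather than ``in addition to the $\ndto$ condition,'' since a partial map need not preserve $\top$. You might also note explicitly that for the converse in the $\pto$ case, the meet $\top \wedge U$ is preserved because $f^*(U) \le P \vee Q = f^*(\top)$, which is automatic; this is why only the single meet $(\cdot=\btrue)\wedge(\cdot=\bfalse)$ imposes a real constraint.
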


This correspondence between pairs of opens and Boolean-valued maps establishes two alternative perspectives on approximate decision-making, one spatial and one more algorithmic in flavor.
We think of $P$ as the ``true'' region and $Q$ the ``false'' region. We can also think of $Q$ as representing the closed subspace complement $\overline{Q}$ of the open subspace $Q$, in which case  $T \le P \vee Q$ corresponds to the subspace inclusion $\overline{Q} \subseteq P$, and $P \wedge Q \le \bot$ corresponds to $P \subseteq \overline{Q}$.

Adding nondeterminism or partiality changes the behavior in comparison to deterministic decision procedures. We can characterize \grammarr{this} algebraically. In $\cat{Set}$, $\bool$ forms a Boolean algebra. Since the functor $\mathsf{Discrete} : \cat{Set} \to \cat{FSpc}$ is full, faithful, and preserves binary products and the terminal object, \grammarr{this} lifts to an internal Boolean algebra (within $\cat{FSpc}$) on $\bool$ the space. We can again lift these operations by the forgetful functor $\cat{FSpc} \to \cat{FSpc}_{\mathcal{C}}$; for instance, the lifted version of the Boolean ``and'' operation ($\band : \bool \times \bool \ndpto \bool$) is potentially $\btrue$ if its argument potentially takes on values whose conjunction is equal to $\btrue$. However, the operations on $\bool$ no longer form a Boolean algebra within either $\cat{FSpc}_{nd}$ or $\cat{Fspc}_{p}$:

\begin{proposition}
The space $\bool$ does not form a Boolean algebra (with its usual operations) within either $\cat{FSpc}_{nd}$ or $\cat{Fspc}_{p}$.
\end{proposition}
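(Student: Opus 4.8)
The plan is to argue via the functor of points. Asking that $\bool$ ``form a Boolean algebra'' in $\mathcal{C}$ (for $\mathcal{C} \in \{\cat{FSpc}_{nd},\cat{FSpc}_{p}\}$) amounts to asking that the lifted operations make each hom-set $\mathcal{C}(A,\bool)$ of generalized elements a Boolean algebra (this is the sense that matters when one actually combines decisions in a program), so it suffices to exhibit a single space $A$ for which some Boolean-algebra axiom fails. First I would use Proposition~\ref{bool-ndp} to present a morphism $A \to_\mathcal{C} \bool$ as its pair of opens $(P,Q)$ -- ``true'' region $P$, ``false'' region $Q$ -- and record how the lifted operations act at this level: the constants are $\btrue = (\top,\bot)$ and $\bfalse = (\bot,\top)$, negation is $\bneg(P,Q) = (Q,P)$, and $\band$, $\bor$ are read off from the ``algorithmic'' description already in the text ($(P_1,Q_1)\band(P_2,Q_2)$ is potentially true where both are and potentially false where either is, and dually for $\bor$). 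A short distributive-lattice calculation, using in each case the covering or disjointness hypothesis to collapse the cross terms, then gives
\[
  (P,Q)\band\bneg(P,Q) = (P\wedge Q,\ \top) \quad\text{in }\cat{FSpc}_{nd},
  \qquad
  (P,Q)\bor\bneg(P,Q) = (P\vee Q,\ \bot) \quad\text{in }\cat{FSpc}_{p}.
\]

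The counterexamples then drop out. In $\cat{FSpc}_{nd}$ the complement law $x\band\bneg x = \bfalse$ would force $P\wedge Q \le \bot$ for every covering pair $(P,Q)$ -- i.e.\ covering pairs would have to be disjoint, precisely the constraint that nondeterminism relaxes. Concretely, take $A = \R$ and $x = \mathsf{cmp}$ the nondeterministic comparison of Example~\ref{ex:cmp_nd}, so $(P,Q) = (\cdot > -\varepsilon,\ \cdot < \varepsilon)$ and $P\wedge Q = (-\varepsilon < \cdot < \varepsilon) \ne \bot$; hence $x\band\bneg x \ne \bfalse$. Dually, in $\cat{FSpc}_{p}$ the law $x\bor\bneg x = \btrue$ would force $P\vee Q = \top$ for every disjoint pair, whereas the partial comparison $\mathsf{cmp} : \R \pto \bool$ with $\mathsf{cmp}^*(\cdot = \btrue) = (\cdot > 0)$ and $\mathsf{cmp}^*(\cdot = \bfalse) = (\cdot < 0)$ has $P\vee Q = (\cdot \ne 0) \ne \top$, so $x\bor\bneg x \ne \btrue$. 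Either failure shows $\bool$ carries no Boolean algebra structure in the respective category.

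I do not expect a genuine obstacle here -- the work is a routine counterexample -- but there is a subtlety worth stating cleanly. The naive reaction ``the forgetful functor from $\cat{FSpc}$ transports the internal Boolean algebra'' does transport the structure maps and their equational diagrams, yet the step ``internal Boolean algebra $\Rightarrow$ every hom-set is a Boolean algebra'' relies on naturality of the diagonal, and in $\cat{FSpc}_{nd}$ and $\cat{FSpc}_{p}$ a general morphism is not a comonoid homomorphism (a nondeterministic value, once copied and independently negated, need not cancel; a partial value need not be defined). For the same reason the failure is invisible over $\bool$ itself -- where the identity map is the clopen partition $(\cdot = \btrue,\ \cdot = \bfalse)$ and both complement laws do hold -- so the witness must live over a space carrying a pair of opens that is non-disjoint (respectively, non-covering), for which $\R$ with the comparison maps of the preceding sections is the natural choice.
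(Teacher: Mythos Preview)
Your argument is correct and follows the same strategy as the paper: exhibit a generalized point of $\bool$ for which a complement law fails, using the pairs-of-opens presentation. The differences are minor but worth noting. The paper works over the terminal object $\One$ rather than $\R$: it takes $\mathsf{both} : \One \ndto \bool$ with $(P,Q)=(\top,\top)$ and $\mathsf{neither} : \One \pto \bool$ with $(P,Q)=(\bot,\bot)$, and checks in both cases that $x \bor \bneg x \neq \btrue$. Your last paragraph is right that the witness must carry a non-disjoint (resp.\ non-covering) pair, but $\One$ already does---$\Open{\One}\cong\Prop$ has $(\top,\top)$ and $(\bot,\bot)$---so there is no need to reach for $\R$ and the $\mathsf{cmp}$ maps. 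Your choice to violate $x\band\bneg x=\bfalse$ in the nondeterministic case and $x\bor\bneg x=\btrue$ in the partial case makes the duality visible, whereas the paper uses the single law $x\bor\bneg x=\btrue$ in both; either is fine. Your closing remark on the failure of the diagonal to be a comonoid homomorphism in $\cat{FSpc}_{nd}$ and $\cat{FSpc}_p$ is a nice conceptual addition the paper does not spell out.
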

\begin{proof}
In particular, there is the nondeterministic value $\mathsf{both} : \One \ndto \bool$ and the partial value 
$\mathsf{neither} : \One \pto \bool$ satisfying 
\begin{align*}
\mathsf{both}^*(\cdot = \btrue) = \mathsf{both}^*(\cdot = \bfalse) &= \top
\\ \mathsf{neither}^*(\cdot = \btrue) = \mathsf{neither}^*(\cdot = \bfalse) &= \bot,
\end{align*}
which implies that
\begin{align*}
\mathsf{both} \bor \bneg \mathsf{both} = \mathsf{both} &\neq \btrue
\\ \mathsf{neither} \bor \bneg \mathsf{neither} = \mathsf{neither} &\neq \btrue,
\end{align*}
whereas in a Boolean algebra there is the identity $x \bor \bneg x = \btrue$.
\end{proof}

Though $\bool$ does not form a Boolean algebra in $\cat{FSpc}_{nd,p}$, it comes close:
\begin{proposition}
The space $\bool$ forms a quasi-Boolean algebra (or De Morgan algebra) in $\cat{FSpc}_{nd,p}$, meaning that $\bool$ with $\band, \bor, \btrue$, and $\bfalse$ forms a bounded distributive lattice, and $\bneg$ is a De Morgan involution, in that it satisfies $\bneg \bneg x = x$ and $\bneg (x \band y) = \bneg x \bor \bneg y$.
\end{proposition}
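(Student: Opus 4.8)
The plan is to reduce each quasi-Boolean axiom to an identity in the frame $\Open{\Gamma}$ for an arbitrary space $\Gamma$. By Proposition~\ref{bool-ndp}, a map $\Gamma \ndpto \bool$ is exactly a pair $(P, Q)$ of opens of $\Gamma$ (the ``true region'' $P$ and the ``false region'' $Q$), and the axioms are equations between such maps, so it suffices to verify them for the lifted operations acting on pairs of opens. The first step is therefore to record those actions. Since $\bneg^*$ interchanges the two basic opens $(\cdot = \btrue)$ and $(\cdot = \bfalse)$ of $\bool$, the lifted $\bneg$ sends $(P, Q) \mapsto (Q, P)$, and $\btrue = (\top, \bot)$, $\bfalse = (\bot, \top)$. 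For $\band$ and $\bor$: $f \band g$ affirms $\btrue$ exactly when both $f$ and $g$ do, and affirms $\bfalse$ exactly when at least one of them does (and dually for $\bor$), which gives
\begin{align*}
(P_1, Q_1) \band (P_2, Q_2) &= (P_1 \wedge P_2,\ Q_1 \vee Q_2),
\\ (P_1, Q_1) \bor (P_2, Q_2) &= (P_1 \vee P_2,\ Q_1 \wedge Q_2).
\end{align*}

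With these formulas in hand, the claim becomes the textbook construction of a De Morgan algebra out of a distributive lattice. The structure $(\Open{\Gamma}^2;\, \band, \bor, \btrue, \bfalse)$ is precisely the product $\Open{\Gamma} \times \Open{\Gamma}^{\mathrm{op}}$ of bounded distributive lattices (in the second factor the meet is $\vee$, the join is $\wedge$, the top is $\bot$, and the bottom is $\top$), which is again a bounded distributive lattice, with top $\btrue$ and bottom $\bfalse$; so associativity, commutativity, idempotence, absorption, distributivity (using both of the two mutually equivalent distributive laws of $\Open{\Gamma}$), and the bound laws all unwind one coordinate at a time to standard frame identities. The involution $\bneg$ is the order-reversing twist $(a,b) \mapsto (b,a)$ that swaps the two factors, so $\bneg\bneg x = x$ is immediate, and conjugating $\band$ by $\bneg$ (swapping both inputs and the output) yields exactly $\bor$ by the displayed formulas --- which is the De Morgan law $\bneg(x \band y) = \bneg x \bor \bneg y$.

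The step that needs genuine care is the first one: confirming that the lifted $\band$ and $\bor$ really behave as displayed, rather than by pushing subsets of $\bool \times \bool$ forward along the strict Boolean operations. The two candidates agree on covering pairs and on the familiar disjoint pairs but differ on genuinely partial-and-nondeterministic values, such as $\bfalse \band \mathsf{neither}$, and it is exactly the displayed version --- in which $\bfalse$ really absorbs under $\band$ and $\btrue$ under $\bor$ --- that makes $\Open{\Gamma}^2$ a bounded lattice at all. Once that is settled, everything else is routine distributive-lattice bookkeeping.
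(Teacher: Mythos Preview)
Your approach is essentially the paper's: represent generalized elements $\Gamma \ndpto \bool$ as pairs of opens, record the same formulas for $\band$, $\bor$, $\bneg$, $\btrue$, $\bfalse$, and then read off the quasi-Boolean laws. The paper verifies the laws by direct computation on pairs (e.g.\ checking $\bneg\bneg(P,Q)=(P,Q)$ and the De~Morgan identity by hand); your repackaging of this as ``$\Open{\Gamma}\times\Open{\Gamma}^{\mathrm{op}}$ is a product of bounded distributive lattices, and $\bneg$ is the factor-swapping twist'' is a clean way to get all the lattice axioms at once rather than one at a time, but it is the same verification underneath.

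One comment on your final paragraph: you correctly flag that there are two plausible readings of ``$f\band g$'' for partial-and-nondeterministic $f,g$ (pair-then-compose versus the coordinatewise formula), and that they disagree on examples like $\bfalse \band \mathsf{neither}$. You then assert that the displayed version is the intended one and move on. The paper's proof sketch does the same --- it simply \emph{states} the displayed formulas as what ``observe that'' yields --- so you are not less rigorous than the paper here, but you have not actually \emph{settled} the point you raised. If you want to close that gap, the cleanest route is to note that the statement is about equations among the fixed morphisms $\band,\bor,\bneg$ between $\bool$ and $\bool\times\bool$ in $\cat{FSpc}_{nd,p}$, and to check those equations directly on inverse image maps rather than via generalized elements; the pair-of-opens picture is then just a convenient bookkeeping device, and the ambiguity about pairing never arises.
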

\begin{proof}[Proof sketch]
It is instructive to observe how the operations act on generalized points $\Gamma \ndpto \bool$; we will use their equivalent representation as pairs of opens of $\Gamma$. Observe that
\begin{align*}
\begin{aligned}
(P_1, Q_1) \band (P_2, Q_2) &= (P_1 \wedge P_2, Q_1 \vee Q_2)
\\ (P_1, Q_1) \bor (P_2, Q_2) &= (P_1 \vee P_2, Q_1 \wedge Q_2)
\\ \bneg (P, Q) &= (Q, P)
\end{aligned}
\\
\begin{aligned}
\btrue &= (\top, \bot) \quad
& \quad \bfalse &= (\bot, \top).
\end{aligned}
\end{align*}
\begin{shortversion}
We can use these equations to confirm the various laws, for instance,
$
\bneg \bneg (P, Q) = \bneg (Q, P) = (P, Q).
$
\end{shortversion}
\begin{longversion}
We can use this to confirm the various laws, for instance, that $\btrue$ is the identity for $\band$,
\[
\btrue \band (P, Q) = (\top \wedge P, \bot \vee Q) = (P, Q).
\]
We similarly can confirm $\bfalse$ is the identity for $\bor$, and it is easy to observe that $\band$ and $\bor$ are commutative and associative. Absorption of $\band$ and $\bor$ follows from the similar absorption properties of opens, and likewise for their distributivity properties.

It remains to confirm that $\bneg$ is a deMorgan involution. We have
\[
\bneg \bneg (P, Q) = \bneg (Q, P) = (P, Q)
\]
and
\begin{align*}
\bneg ((P_1, Q_1) \band (P_2, Q_2))
  &= \bneg (P_1 \wedge P_2, Q_1 \vee Q_2)
  = (Q_1 \vee Q_2, P_1 \wedge P_2)
\\ &= (Q_1, P_1) \bor (Q_2, P_2)
  = \bneg (P_1, Q_2) \bor \bneg (P_2, Q_2).
\end{align*}
\end{longversion}
\end{proof}

The argument also shows that $\bool$ is a quasi-Boolean algebra in $\cat{FSpc}_{nd}$ and $\cat{FSpc}_{p}$ as well. We will show that in each variant, it is possible to quantify these approximate decision procedures over compact-overt spaces.

\subsection{Quantification over compact-overt spaces}

When working with sets, if a predicate $P$ on a set $A$ is decidable and if $A$ is finite, then $\forall a : A.\ P(a)$ and $\exists a : A.\ P(a)$ are decidable as well. The spatial analogue of the finite sets is the compact-overt spaces.

A space $\Sierp$, called the Sierpi\'nski space, is useful in describing the logic of opens:
there is a correspondence $\Open{A} \iso A \cto \Sierp$ between opens of $A$ and $\Sierp$-valued continuous maps on $A$ for any space $A$. We can use \grammarr{this} to describe opens via $\Sierp$-valued continuous maps. We  use the notation $\{ x : A \mid e \}$ where $e$ is a $\Sierp$-valued term that may mention $x$, i.e., $x : A \vdash e : \Sierp$, to denote the open subspace $\{ A \mid \denote{e} \}$. For instance, we can define the open subspace $\{ x : \R \mid x \times x < 2 \}$, where $(<) : \R \times \R \cto \Sierp$. We will readily conflate opens and $\Sierp$-valued continuous maps, implicitly converting between the two.

\subsubsection{On compact-overt spaces}

\begin{definition}[\citealt{vickerspowerpoints}]
\label{compact-overt}
A space $K$ is \emph{compact} if for every space $\Gamma$, the functor $- \times \top_K : \Open{\Gamma} \to \Open{\Gamma \times K}$ has a right adjoint $\forall_K : \Open{\Gamma \times A} \to \Open{\Gamma}$\footnote{
This definition of compactness is equivalent to the more common one, that every open cover has a finite subcover.
}. Similarly, a space $A$ is \emph{overt} if for every space $\Gamma$, $- \times \top_A$ has a left adjoint $\exists_A : \Open{\Gamma \times A} \to \Open{\Gamma}$. 
A space is \emph{compact-overt} if it is compact and overt.
\end{definition}

These conditions are the definitions of universal and existential quantification in terms of adjoints, viewing $\Gamma$ as some context and opens as truth values in a context.
These adjunctions allow us to define syntax for quantification of $\Sierp$-valued continuous maps on compact-overt spaces:
\begin{mathpar}
\inferrule*
  {\Gamma, x : K \vdash e : \Sierp
  \\ K \text{ compact}}
  {\Gamma \vdash \forall x \in K.\ e : \Sierp}
  
\inferrule*
  {\Gamma, x : A \vdash e : \Sierp
  \\ A \text{ overt}}
  {\Gamma \vdash \exists x \in A.\ e : \Sierp}
\end{mathpar}

For any compact-overt space $K$, for any $\Gamma$ and $P, Q : \Open{\Gamma \times K}$, we have in $\Gamma$ \cite{vickerspowerpoints}
\begin{align*}
\forall_K(P \vee Q) \le \forall_K P \vee \exists_K Q
\quad \text{and} \quad
  \forall_K P \wedge \exists_K Q \le \exists_K (P \wedge Q).
\end{align*}
These properties allow us to quantify over compact-overt spaces, too. That is, we can add some syntax
\begin{mathpar}
\inferrule*
  {\Gamma, x : K \vdash_{\mathcal{C}} e : \bool
  \\ K \text{ compact-overt}}
  {\Gamma \vdash_{\mathcal{C}} \forall x \in K.\ e : \bool}
  
\inferrule*
  {\Gamma, x : K \vdash_{\mathcal{C}} e : \bool
  \\ K \text{ compact-overt}}
  {\Gamma \vdash_{\mathcal{C}} \exists x \in K.\ e : \bool}
\end{mathpar}
that behaves as we would expect (for a quasi-Boolean algebra, at least). We interpret this syntax by defining quantification functionals of the type $(\Gamma \times K \xto{\mathcal{C}} \bool) \to (\Gamma \xto{\mathcal{C}} \bool)$.
For a compact-overt space $K$, we define a universal-quantification functional
\begin{align*}
\forall_K &: (\Gamma \times K \xto{\mathcal{C}} \bool) \to (\Gamma \xto{\mathcal{C}} \bool)
\\ \forall_K (P, Q) &\triangleq (\forall_K P, \exists_K Q).
\end{align*}
We confirm this definition works for $\mathcal{C} = nd$ because it preserves covering: if $\top \le P \vee Q$, then
\begin{shortversion}
\begin{align*}
\top_\Gamma
\le \forall_K (\top_{\Gamma \times K}) 
\le \forall_K (P \vee Q) 
\le \forall_K P \vee \exists_K Q. 
\end{align*}
\end{shortversion}
\begin{longversion}
\begin{align*}
\top_\Gamma 
  &\le \forall_A (\top_{\Gamma \times A}) \tag{$\forall_A$ adjointness}
\\ &\le \forall_A (P \vee Q) \tag{$(P, Q)$ cover $\Gamma \times A$, $\forall_A$ monotone}
\\ &\le \forall_A P \vee \exists_A Q. \tag{$A$ compact-overt}
\end{align*}
\end{longversion}
Dually, it works for $\mathcal{C} = p$ since it preserves disjointness: if $P \wedge Q \le \bot$, then
\begin{shortversion}
\begin{align*}
\forall_K P \wedge \exists_K Q
\le
\exists_K (P \wedge Q)
\le 
\exists_K (\bot_{\Gamma \times K})
\le \bot_\Gamma.
\end{align*}
\end{shortversion}
\begin{longversion}
\begin{align*}
\forall_A P \wedge \exists_A Q
   &\le \exists_A (P \wedge Q) \tag{$A$ compact-overt}
\\ &\le \exists_A (\bot_{\Gamma \times A}) \tag{$(P, Q)$ disjoint, $\exists_A$ monotone}
\\ &\le \bot_\Gamma. \tag{$\exists_A$ adjointness}
\end{align*}
\end{longversion}
We can similarly define the existential-quantification functional by $\exists_K (P, Q) \triangleq (\exists_K P, \forall_K Q)$.
\begin{longversion}
By inspection, the two quantifiers are related by the law
\[
\bneg (\forall_A f) = \exists_A (\bneg f).
\]
\end{longversion}

We make the notion that these operations are quantifiers precise by showing that these quantification functionals are adjoints to a weakening functional. The quasi-Boolean algebra on $\bool$ determines the preorder which we call \emph{truth order} on maps $A \ndpto \bool$: representing maps as pairs of opens, we define $(P_1, Q_1) \leto (P_2, Q_2)$ if and only if both $P_1 \le P_2$ and $Q_2 \le Q_1$.

For any spaces $\Gamma$ and $A$, weakening 
($(- \circ \mathsf{fst}) : (\Gamma \ndpto \bool) \to (\Gamma \times A \ndpto \bool)$)
is monotone with respect to truth order. The quantifiers deserved to be called such:

\begin{theorem}
The existential- and universal-quantification functionals are left and right adjoints to weakening, respectively, with respect to truth order, i.e.,
$
\exists_K \dashv (- \circ \mathsf{fst}) \dashv \forall_K.
$
\end{theorem}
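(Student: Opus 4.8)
The plan is to reduce both halves of the claimed double adjunction, taken with respect to truth order, to the two opens-level adjunctions $\exists_K \dashv (-\times\top_K) \dashv \forall_K$ from Definition~\ref{compact-overt}, working throughout in the pair-of-opens representation of Proposition~\ref{bool-ndp}. First I would record how weakening acts in that representation: a map $(P,Q) : \Gamma \ndpto \bool$ is sent by $(-\circ\mathsf{fst})$ to $(\mathsf{fst}^*P, \mathsf{fst}^*Q) = (P \times \top_K, Q \times \top_K)$, since precomposition with $\mathsf{fst}$ applies $\mathsf{fst}^*$ componentwise and $\mathsf{fst}^* : \Open{\Gamma} \to \Open{\Gamma\times K}$ is exactly $-\times\top_K$ by the construction of the product space. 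I would also fix the orientation of truth order: $(P_1,Q_1) \leto (P_2,Q_2)$ iff $P_1 \le P_2$ \emph{and} $Q_2 \le Q_1$, i.e.\ covariant in the ``true'' region $P$ and contravariant in the ``false'' region $Q$.

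Next I would verify $\exists_K \dashv (-\circ\mathsf{fst})$. Fix $X = (P,Q)$ with $P,Q : \Open{\Gamma\times K}$ and $Y = (P',Q')$ with $P',Q' : \Open{\Gamma}$. Unfolding $\exists_K X = (\exists_K P, \forall_K Q)$ and the truth order, the inequality $\exists_K X \leto Y$ is the conjunction of $\exists_K P \le P'$ and $Q' \le \forall_K Q$, while $X \leto (-\circ\mathsf{fst})(Y)$ is the conjunction of $P \le P'\times\top_K$ and $Q'\times\top_K \le Q$; the first conjuncts are equivalent by $\exists_K \dashv (-\times\top_K)$ and the second by $(-\times\top_K)\dashv\forall_K$. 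The argument for $(-\circ\mathsf{fst}) \dashv \forall_K$ is the mirror image: with $\forall_K X = (\forall_K P, \exists_K Q)$, the inequality $(-\circ\mathsf{fst})(Y) \leto X$ becomes $P'\times\top_K \le P$ together with $Q \le Q'\times\top_K$, and $Y \leto \forall_K X$ becomes $P' \le \forall_K P$ together with $\exists_K Q \le Q'$, which match up via $(-\times\top_K)\dashv\forall_K$ and $\exists_K\dashv(-\times\top_K)$ respectively. Both equivalences are then immediate.

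The only real work is bookkeeping. The subtle point — and the one place it is easy to slip — is that the false region is ordered contravariantly, so it is precisely this order-reversal that forces the \emph{existential} functional on pairs to use $\forall_K$ on that component (and the universal functional to use $\exists_K$), arranging that each of the two opens-level adjunctions gets applied in the forward direction to the component it governs. I would therefore state the truth-order orientation and the componentwise description of weakening explicitly at the outset, after which no new inequality beyond Definition~\ref{compact-overt} is needed: everything follows by transporting those two adjunctions through the product order on $\Open{\Gamma}\times\Open{\Gamma}^{\mathrm{op}}$. I expect no genuine obstacle, only the discipline of keeping the two components — and their opposite variances — straight.
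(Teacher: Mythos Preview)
Your proposal is correct and follows essentially the same approach as the paper: represent maps as pairs of opens, compute that weakening acts as $-\times\top_K$ componentwise, unfold the truth order on both sides, and reduce each conjunct to one of the opens-level adjunctions $\exists_K \dashv (-\times\top_K) \dashv \forall_K$, with the other half handled by symmetry. Your explicit emphasis on the contravariance of the $Q$-component is a helpful bit of bookkeeping, but the argument is otherwise identical to the paper's.
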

\begin{longversion}
\begin{proof}
First we prove that $\exists_A$ is left adjoint to weakening: given a map $\Gamma \times A \ndpto \bool$ represented by the opens $(P, Q)$ a map $\Gamma \ndpto \bool$ represented by opens $(U, V)$, we must show
\begin{align*}
(P, Q) \leto (- \circ \mathsf{fst})(U, V)
  \quad &\iff \quad
  \exists_A(P, Q) \leto (U, V)
\\
(P, Q) \leto (U \times \top, V \times \top)
  \quad &\iff \quad
  (\exists_A P, \forall_A Q) \leto (U, V)
\\
P \le U \times \top \text{ and } V \times \top \le Q
  \quad &\iff \quad
  \exists_A P \le U \text{ and } V \le \forall_A Q.
\end{align*}
In the final form, we see that this follows from adjoint properties of the $\exists_A$ and $\forall_A$ operations that act on opens of $\Gamma \times A$.

The proof that $\forall_A$ is a right adjoint is a mirror image of the proof regarding $\exists_A$.
\end{proof}
\end{longversion}

\subsubsection{On compact-overt subspaces}

Sometimes, the space that we might want to quantify over could depend on some continuous variables in the context. For instance, we may want to quantify a predicate $f : \R \times \R \xto{\mathcal{C}} \bool$ over the triangle in $\R \times \R$ bounded by $(0, 0)$, $(1, 0)$, and $(0, 1)$. We will describe a formalism whereby it will be possible to write \grammarr{this} as
$
\forall x \in [0, 1].\ \forall y \in [0, 1 - x].\ f(x, y).
$
We handle this situation by considering spaces whose points represent compact-overt subspaces of some space.

There is a connection between overt spaces and the partiality-and-nondeterminism monad $\PLower$: Every point of $\PLower(A)$ corresponds to an overt subspace of $A$\footnote{
Specifically, the points of $\PLower(A)$ are in bijective correspondence with the \emph{weakly closed} overt subspaces of $A$ \cite[Theorem 32]{sublocFT}.}. Similarly, for each space $A$ there is a powerspace $\PUpper(A)$ whose points correspond with compact subspaces of $A$ \cite{vickersconnected, vickersdoublepowerlocale}. We summarize its salient characteristics. There is a ``necessity'' modality $\square : \Open{A} \to \Open{\PUpper(A)}$ that distributes over meets and directed joins (analogous to the ``possibility'' modality $\possibly : \Open{A} \to \Open{\PLower(A)}$ for the lower powerspace). Continuous maps $\Gamma \cto \PUpper(A)$ are in bijective correspondence with inverse image maps $\Open{A} \to \Open{\Gamma}$ that preserve meets and directed joins. Like $\PLower$, $\PUpper$ is a strong monad.

The powerspace analogue of the compact-overt spaces is called the \emph{Vietoris powerspace} $\Viet$ \cite{topologyvialogic}. Points of $\Viet(A)$ correspond to compact-overt subspaces of $A$. The space $\Viet$ has both the possibility and necessity modalities that interact exactly as with the compact-overt spaces.
\begin{longversion}
That is, for any opens $P, Q : \Open{A}$, the following laws hold:
\begin{align}
\label{boxdiamond}
\square (P \vee Q) &\le \square P \vee \possibly Q
\\  \square P \wedge \possibly Q &\le \possibly (P \wedge Q).
\end{align}
Like for the lower powerspaces, there is the ``positive'' subspace $\Viet^+(A)$ of $\Viet(A)$ that additionally satisfies $\top \le \possibly \top$ and $\square \bot \le \bot$.
\end{longversion}

We can add some additional syntax to make it easier to describe opens with these modalities:
\begin{mathpar}
\inferrule*
  {\Gamma \vdash s : \PLower(A)
  \\ \Gamma, x : A \vdash e : \Sierp}
  {\Gamma \vdash \exists x \in s.\ e : \Sierp}
  
\inferrule*
  {\Gamma \vdash s : \PUpper(A)
  \\ \Gamma, x : A \vdash e : \Sierp}
  {\Gamma \vdash \forall x \in s.\ e : \Sierp}.
\end{mathpar}
This syntax is interpreted using the correspondence
$
\Sierp \iso \PLower(\One) \iso \PUpper(\One)
$
\cite{townsend} via the strong monadic ``bind'' operations of $\PLower$ and $\PUpper$.
\begin{longversion}
Specifically, replacing $\Sierp$ with $\PLower(\One)$ in \irule{$\exists$-$\Sierp$-sub-form} and
with $\PUpper(\One)$ in \irule{$\forall$-$\Sierp$-sub-form}, these rules just correspond to the bind operation of a strong monad $M$, which composes maps $\Gamma \to M(A)$ and $\Gamma \times A \to M(B)$ to produce $\Gamma \to M(B)$.
\end{longversion}

Accordingly, we can quantify our Booleans over compact-overt subspaces as well in the same way, implementing the syntax

\begin{mathpar}
\mprset {sep=1.5em}
\inferrule*
  {\Gamma \vdash s : \Viet(A) \\ \Gamma, x : A \vdash_{\mathcal{C}} e : \bool}
  {\Gamma \vdash_{\mathcal{C}}\mathcal{Q} x \in s.\ e : \bool}
\end{mathpar}
where $\mathcal{Q}$ is either $\forall$ or $\exists$.
Just as in the case of compact-overt spaces, these definitions preserves both covering and disjointness, so $\mathcal{C}$ can have any combination of partiality and nondeterminism.

\begin{longversion}
This syntax is implemented in the following manner: Given any map $f : \Gamma \times A \to_\mathcal{C} \bool$ of $A$, we define its universal quantification
\begin{align*}
\forall_{[\cdot]} f &: \Gamma \times \Viet(A) \to_\mathcal{C} \bool
\\ \forall_{[\cdot]} f &\triangleq \mathsf{case}(p)
\begin{cases}
\oincl{\fun{(\gamma, s)}{\forall x \in s.\ f(\gamma, x) = \btrue}}{\wildcard} &\Branch \btrue
\\ \oincl{\fun{(\gamma, s)}{\exists x \in s.\ f(\gamma, x) = \bfalse}}{\wildcard} &\Branch \bfalse
\end{cases}.
\end{align*}
We can confirm $\forall_{[\cdot]} f$ is total, i.e.,
\[
\top \le (\fun{(\gamma, s)}{\forall x \in s.\ f(\gamma, x) = \btrue}) \vee
            (\fun{(\gamma, s)}{\exists x \in s.\ f(\gamma, x) = \bfalse})
\]
if $f$ is total
with the derivation\footnote{
The algebraic manipulations in this derivation are justified by the corresponding ones on $\Viet(A)$. For instance $\top \le \square \top$ implies 
\[
\fun{(\gamma, s)}{\top} \quad \le \quad \fun{(\gamma, s)}{\forall x \in s.\ \top}.
\]
}
\begin{align*}
\top 
  &\le \fun{(\gamma, s)}{\forall x \in s.\ \top}
    \tag{$\square$ preserves $\top$}
\\ &\le \fun{(\gamma, s)}{\forall x \in s.\ f(x) = \btrue \vee f(x) = \bfalse}
\\ &\le \fun{(\gamma, s)}{(\forall x \in s.\ f(x) = \btrue) \vee \exists x \in s.\ f(x) = \bfalse}
   \tag{law \ref{boxdiamond}}.
\end{align*}

Similarly, $\forall_{[\cdot]} f$ is deterministic if $f$ is deterministic.

We can define $\exists_{[\cdot]} f : \Viet(A) \ndpto \bool$ by composing $\forall_{[\cdot]} f$ with Boolean negation.
\end{longversion}

Compact-overt subspaces form a convenient class of spaces over which exhaustive reasoning is possible. The continuous image of a compact-overt space is compact-overt (just as the image of a finite set under any map is finite). Like finite subsets, compact-overt subspaces are closed under finitary union but not necessarily intersection. Naturally, a finite set viewed as a discrete space is compact-overt.

\section{Implementation and case studies}

\subsection{Implementation in Marshall}
We implemented a pattern-matching construct as well as a library for partial and/or nondeterministic decision procedures within the Marshall programming language for exact real arithmetic \cite{marshall}, which is based on Abstract Stone Duality, a related, though different, theory of constructive topology. 

Marshall's type system includes \lstinline!real! ($\R$), \lstinline!prop! ($\Sierp$), finitary products, and function types. Notably, it lacks discrete types such as $\bool$ and has no support for subspace types. However, we used \lstinline!prop * prop! to simulate $\bool$, using the correspondence with pairs of opens described in Proposition \ref{bool-ndp}.

Partiality is intrinsic to Marshall, whether in evaluation of terms of type \lstinline!prop!, or evaluation of real numbers defined by Dedekind cuts where there is a gap between the left and right cuts. In the course of adding a pattern-match construct and computational support for it, we added support for nondeterminism in this manner, which was not previously available. Accordingly, Marshall effectively allows programming in $\cat{FSpc}_{nd, p}$.

Our pattern-match construct in Marshall has syntax and is typed as follows:
\begin{mathpar}
\inferrule{
\forall i \in \{1, \ldots, n \}, p_i : \texttt{prop}
\\
\forall i \in \{1, \ldots, n \}, e_i : t
}
{
\texttt{(} p_1 \texttt{\textasciitilde>} e_1 \texttt{||} \ldots \texttt{||} p_n \texttt{\textasciitilde>} e_n \texttt{)} : t
}
\end{mathpar}
\grammarr{This} is different, and substantially less general, than pattern matching described in \refsection{patterns}: there is no variable binding, and only finitely many cases are permitted. Regardless, this construct suffices to enable implementation of the approximate decision procedures of \refsection{bcover}, including quantification over those compact-overt spaces that are available in Marshall, which are closed intervals with rational endpoints (and implicitly, their finitary products), as well as to implement simplified versions of examples that follow. The modified version of Marshall and examples are available at \url{https://github.com/psg-mit/marshall-lics/}.
\commentt{\todo{are you making the cad examples available?}}
We describe the semantics of Marshall and its extension in further detail in
\shortlong{\cite{arXiv}}{Appendix \ref{a:marshall}}.

\subsection{Case studies}
\label{applications}

\subsubsection{Autonomous car approaching a yellow light}

Sometimes, partiality is unacceptable.
Consider an autonomous car that is approaching a traffic light that has just turned yellow. To ensure safety, the car must be outside of the intersection when the light turns red. \grammarr{This} requires the discrete decision to be made of whether or not to proceed through the intersection. We will model this problem with some additional concrete detail and demonstrate that it is impossible to do so deterministically, but with nondeterminism it is possible to write a program with a formal safety guarantee.

We model the car's state when the light turns yellow as a position and velocity, 
$
\mathsf{CarState} \triangleq \{ (x, v) : \R \times \R \mid 0 < v < v_\text{max} \},
$
where the velocity is positive but bounded. At the moment the light turns yellow, the car may choose a constant acceleration in the range
$
\mathsf{Accel} \triangleq \{ \R \mid a_\text{min} < \cdot < a_\text{max} \},
$
limited by the car's physical capabilities (with $a_\text{min} < 0 < a_\text{max}$).
The continuous map
$
\mathsf{pos} : \mathsf{CarState} \times \mathsf{Accel} \cto \R
$
computes where the car will lie when the light turns red, given the car's state when the light turns yellow and the chosen acceleration in the intervening period\footnote{
However, we enforce that velocity remains nonnegative, so if the car decelerates to zero velocity before the light turns red, it remains stopped rather than going backwards.}.
We define the position 0 to be where the intersection begins and let $w > 0$ mark the end of the intersection. Thus, we define $\mathsf{safe} : \Open{\R}$ by $\mathsf{safe} \triangleq (\cdot < 0) \vee (w < \cdot)$.

The problem of choosing an acceleration to safely navigate the intersection is that of finding a function $f : \mathsf{CarState} \cto \{ y : \mathsf{CarState} \times \mathsf{Accel} \mid \mathsf{safe}(\mathsf{pos}(y)) \}$ such that $\mathsf{fst} \circ f = \mathsf{id}$.

\begin{proposition}
It is impossible to continuously choose an acceleration to safely navigate the intersection (i.e., there is no continuous map $f$ as described above).
\end{proposition}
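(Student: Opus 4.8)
The plan is to reduce to the connectedness of $\mathsf{CarState}$ and then rule out both ``constant strategies'' by an elementary kinematic estimate. Suppose for contradiction that such an $f$ exists. First I would compose with the open inclusion $\iota : \{ y : \mathsf{CarState} \times \mathsf{Accel} \mid \mathsf{safe}(\mathsf{pos}(y)) \} \hookto \mathsf{CarState} \times \mathsf{Accel}$ to obtain a continuous map $g \triangleq \iota \circ f : \mathsf{CarState} \cto \mathsf{CarState} \times \mathsf{Accel}$ with $\mathsf{fst} \circ g = \mathsf{id}$ and, since $g$ factors through that open subspace, $g^*(\mathsf{pos}^*(\mathsf{safe})) = \top$. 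Because $w > 0$, the opens $(\cdot < 0)$ and $(w < \cdot)$ whose join is $\mathsf{safe}$ are disjoint, so setting $P \triangleq g^*(\mathsf{pos}^*(\cdot < 0))$ and $Q \triangleq g^*(\mathsf{pos}^*(w < \cdot))$ and using that $g^*$ and $\mathsf{pos}^*$ preserve joins and binary meets, I get $\top \le P \vee Q$ and $P \wedge Q \le \bot$. By Proposition~\ref{bool-pairs}, the pair $(P, Q)$ corresponds to a continuous map $c : \mathsf{CarState} \cto \bool$.

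Next I would invoke connectedness. The space $\mathsf{CarState}$ is a convex open subspace of $\R \times \R$, hence connected (in the sense of \cite{vickersconnected}), so $c$ is constant; thus either $P = \top$ and $Q = \bot$, or $Q = \top$ and $P = \bot$. Since $\mathsf{CarState}$ and $\R$ are spatial I can reason with points: for a point $s$ of $\mathsf{CarState}$, write $a_s \triangleq \mathsf{snd}(g(s))$, a point of $\mathsf{Accel}$, so $a_{\text{min}} < a_s < a_{\text{max}}$. Unwinding definitions, $s \models P$ holds exactly when $\mathsf{pos}(s, a_s) < 0$, and $s \models Q$ holds exactly when $\mathsf{pos}(s, a_s) > w$.

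It then remains to exhibit, in each case, a state violating the constant verdict. Let $T > 0$ be the duration the light stays yellow and fix $v_0 \triangleq v_{\text{max}}/2$, so that $(x, v_0) \in \mathsf{CarState}$ for every $x : \R$. The dynamics are translation-invariant in position, and $\mathsf{pos}(s, \cdot)$ extends continuously to the closed interval $[a_{\text{min}}, a_{\text{max}}]$ where it is non-decreasing --- more acceleration over $[0, T]$, even with the velocity clamped at $0$, never yields less forward progress --- so $\mathsf{pos}(s, a_{\text{min}}) \le \mathsf{pos}(s, a_s) \le \mathsf{pos}(s, a_{\text{max}})$. If $P = \top$: let $d \triangleq \mathsf{pos}((0, v_0), a_{\text{min}}) > 0$, using that under maximal braking the car still makes strictly positive headway over $[0, T]$ whether or not it comes to rest first; then for $s_0 \triangleq (-d/2, v_0)$ we get $\mathsf{pos}(s_0, a_{s_0}) \ge \mathsf{pos}(s_0, a_{\text{min}}) = d/2 > 0$, contradicting $s_0 \models P$. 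If $Q = \top$: let $D \triangleq \mathsf{pos}((0, v_0), a_{\text{max}}) > 0$; then for $s_0 \triangleq (w - D - 1, v_0)$ we get $\mathsf{pos}(s_0, a_{s_0}) \le \mathsf{pos}(s_0, a_{\text{max}}) = w - 1 < w$, contradicting $s_0 \models Q$. Either way we reach a contradiction, so no such $f$ exists.

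The main obstacle --- really the only non-routine part --- is the kinematic bookkeeping behind the last step: that $\mathsf{pos}(s, \cdot)$ is monotone in the acceleration and that the extremal accelerations $a_{\text{min}}$ and $a_{\text{max}}$ give, respectively, a strictly positive ``braking reach'' and a finite ``accelerating reach,'' all in the presence of the clamp keeping velocity nonnegative (so $\mathsf{pos}$ is given by a piecewise formula). The argument closes cleanly precisely because $\mathsf{Accel}$ is an \emph{open} interval: $a_{\text{min}}$ and $a_{\text{max}}$ are only approached, never attained, so the non-strict bounds $\mathsf{pos}(s_0, a_{\text{min}}) \ge d/2 > 0$ and $\mathsf{pos}(s_0, a_{\text{max}}) \le w - 1 < w$ already contradict the strict inequalities $\mathsf{pos}(s_0, a_{s_0}) < 0$ and $\mathsf{pos}(s_0, a_{s_0}) > w$ that $P = \top$ and $Q = \top$ would force.
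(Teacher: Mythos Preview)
Your proof is correct and follows essentially the same approach as the paper: use connectedness of $\mathsf{CarState}$ to force the ``stop vs.\ go'' decision to be constant, then rule out each constant choice with a concrete counterexample state. The paper phrases the first step as the codomain having two connected components rather than explicitly building a $\bool$-valued map via Proposition~\ref{bool-pairs}, but this is the same content; your formulation is arguably closer to the paper's own framework. For the counterexamples, the paper's choice in the $P = \top$ case is simpler than yours: it just takes a car already past the intersection ($x > w$), which trivially cannot end up at a position $< 0$ since velocity is clamped nonnegative --- no kinematic bounds on the braking reach are needed.
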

\begin{proof}
Note that $\{ y : \mathsf{CarState} \times \mathsf{Accel} \mid \mathsf{safe}(\mathsf{pos}(y)) \}$ has two connected components, corresponding to whether the car is before the intersection ($\mathsf{pos}(y) < 0$) or past the intersection ($w < \mathsf{pos}(y)$) when the light turns red.
If there were such an $f$, then since $\mathsf{CarState}$ is connected, so would be the image of $f$, meaning that $f$ must, regardless of the initial car state, always make the same decision of whether to stop for the intersection or proceed through. But if the car's initial state is sufficiently far back from the intersection, it could not choose an acceleration that ensures it is past the intersection when the light turns red. Conversely, if the car is already past the intersection, it cannot go backwards and thus cannot ensure it is before the intersection when the light turns red.
\end{proof}

\begin{proposition}
However, if we permit nondeterminism, we \emph{can} produce a map $f : \mathsf{CarState} \ndto \{ y : \mathsf{CarState} \times \mathsf{Accel} \mid \mathsf{safe}(\mathsf{pos}(y)) \}$ that nondeterministically chooses an acceleration that is always safe, assuming some conditions on the constants $a_\text{min}$, $a_\text{max}$, $v_\text{max}$, $w$, and the time $T$ between when the light turns yellow and when it turns red.
\end{proposition}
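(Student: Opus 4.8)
The plan is to give the car two constant-acceleration strategies --- ``go'' (drive through) and ``stop'' (brake short of the intersection) --- each valid on an open subspace of $\mathsf{CarState}$; to check that the two open subspaces together cover $\mathsf{CarState}$; and to glue the two strategies into a single nondeterministic map via the pattern-matching construct of \refsection{patterns}, instantiated at $\cat{FSpc}_{nd}$, whose only side condition is totality (covering), not disjointness of patterns.

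I will assume the constants satisfy $\tfrac12 a_\text{max} T^2 > w$ (even a near-stationary car at full acceleration clears the intersection within time $T$) and $|a_\text{min}| T > v_\text{max}$ (braking at full strength from any attainable speed halts the car before the light turns red). By continuity, fix $a_\text{go}$ just below $a_\text{max}$ with $\tfrac12 a_\text{go} T^2 > w$ and $a_\text{stop}$ just above $a_\text{min}$ with $|a_\text{stop}| T > v_\text{max}$; both lie in $\mathsf{Accel}$. Since $a_\text{go} > 0$, the velocity never hits $0$, so $\mathsf{pos}((x,v), a_\text{go}) = x + vT + \tfrac12 a_\text{go} T^2$; since $|a_\text{stop}| T > v_\text{max} > v$, the car comes to rest strictly before time $T$, so $\mathsf{pos}((x,v), a_\text{stop}) = x + \tfrac{v^2}{2|a_\text{stop}|}$. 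I then set
\begin{gather*}
U_\text{go} \triangleq \{ (x,v) : \mathsf{CarState} \mid x + vT + \tfrac12 a_\text{go} T^2 > w \},\\
U_\text{stop} \triangleq \{ (x,v) : \mathsf{CarState} \mid x + \tfrac{v^2}{2|a_\text{stop}|} < 0 \}.
\end{gather*}
On $U_\text{go}$ the map $c \mapsto (c, a_\text{go})$ restricts to a continuous map into $\{ y \mid \mathsf{safe}(\mathsf{pos}(y)) \}$ (hitting the disjunct $w < \mathsf{pos}(y)$), and on $U_\text{stop}$ the map $c \mapsto (c, a_\text{stop})$ restricts to a continuous map into $\{ y \mid \mathsf{safe}(\mathsf{pos}(y)) \}$ (hitting the disjunct $\mathsf{pos}(y) < 0$).

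The heart of the argument is checking $U_\text{go} \vee U_\text{stop} = \top$ in $\Open{\mathsf{CarState}}$, i.e.\ that every $(x, v)$ with $0 < v < v_\text{max}$ lies in one of them. Were $(x,v)$ in neither, then $-\tfrac{v^2}{2|a_\text{stop}|} \le x \le w - vT - \tfrac12 a_\text{go} T^2$, which forces $\tfrac12 a_\text{go} T^2 + vT - \tfrac{v^2}{2|a_\text{stop}|} \le w$. But that left-hand side, viewed as a function of $v$, is concave, so over $[0, v_\text{max}]$ it takes its minimum at an endpoint; its value at $v = 0$ is $\tfrac12 a_\text{go} T^2 > w$, and its value at $v = v_\text{max}$ is larger still, since $|a_\text{stop}| T > v_\text{max}$ makes $v_\text{max} T - \tfrac{v_\text{max}^2}{2|a_\text{stop}|} > 0$. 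Hence the left-hand side exceeds $w$ everywhere on $(0, v_\text{max})$ --- a contradiction --- so $U_\text{go}$ and $U_\text{stop}$ cover $\mathsf{CarState}$.

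It then remains to assemble. The open inclusions $\iota[U_\text{go}]$ and $\iota[U_\text{stop}]$ are open embeddings with $\iota[U_\text{go}]_!(\top) = U_\text{go}$ and $\iota[U_\text{stop}]_!(\top) = U_\text{stop}$, so by the characterization of pattern families for $\cat{FSpc}_{nd}$ (totality) they form such a pattern family; feeding them and the two branch maps above into the $\mathsf{case}$-$\cat{FSpc}_{nd}$ rule produces a nondeterministic map $f : \mathsf{CarState} \ndto \{ y : \mathsf{CarState} \times \mathsf{Accel} \mid \mathsf{safe}(\mathsf{pos}(y)) \}$ that on input $c$ outputs $(c, a_\text{go})$ or $(c, a_\text{stop})$ according to which branch fires. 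Finally $\mathsf{fst} \circ f = \mathsf{id}$, since both branches carry the scrutinee unchanged in their first component, so the inverse image map of $\mathsf{fst} \circ f$ sends an open $V$ to $(V \wedge U_\text{go}) \vee (V \wedge U_\text{stop}) = V \wedge (U_\text{go} \vee U_\text{stop}) = V$. The one real obstacle is the geometry of these feasibility regions together with identifying constants for which they cover $\mathsf{CarState}$; the hypothesis $|a_\text{min}| T > v_\text{max}$ is exactly what lets me avoid a piecewise description of $\mathsf{pos}$ under braking (the velocity clamp at $0$ would otherwise separate a ``stops before $T$'' case from a ``still moving at $T$'' case), and a weaker hypothesis can be accommodated by carrying that case split through the covering computation, with the categorical assembly unchanged.
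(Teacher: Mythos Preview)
Your proof is correct and follows the same high-level scheme as the paper --- two branches, ``go'' and ``stop'', glued by a covering nondeterministic pattern match --- but the realization differs. The paper computes \emph{state-dependent} accelerations $a_\text{go}(x,v) = \max(0, 2(w+\varepsilon-x-vT)/T^2)$ and $a_\text{stop}(x,v) = v^2/(2(x+\varepsilon))$, calibrated so the car lands exactly at $w+\varepsilon$ or stops at $-\varepsilon$, and takes the two patterns to be $\{c \mid a_\text{go}(c) < a_\text{max}\}$ and $\{c \mid a_\text{min} < a_\text{stop}(c)\}$. You instead fix two \emph{constant} accelerations just inside the extremes and let the patterns be the feasibility regions $U_\text{go}$, $U_\text{stop}$ for those constants. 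Your route is more elementary: the branch maps are constant in the acceleration component, the assumption $|a_\text{min}|T > v_\text{max}$ keeps the $\mathsf{pos}$ formula algebraic (no case split on whether the car stops before $T$), and the covering check reduces to a clean concavity bound in $v$. The paper's route buys a more conservative controller (the car accelerates or brakes only as much as needed, not near-maximally) and its covering conditions on the constants are somewhat different and more intricate; but for the bare proposition --- existence of a safe nondeterministic $f$ --- your simplification is entirely adequate.
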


\begin{proof}
We outline one possible solution.
Let $\varepsilon > 0$ be some buffer distance. We can compute as a function of the initial car state the necessary acceleration to proceed through the light and be at position $w + \varepsilon$ when the light turns red, as well as the necessary deceleration to stop before the light at position $-\varepsilon$,
\begin{align*}
a_\text{go} &: \mathsf{CarState} \cto \R
\\ a_\text{go}(x, v) &\triangleq \max\left(0, 2(w + \varepsilon - x - vT) / T^2\right)
\\ a_\text{stop} &: \mathsf{CarState} \pto \R
\\ a_\text{stop}(x, v) &\triangleq \frac{v^2}{2(x + \varepsilon)}.
\end{align*}
Note that $a_\text{go}(x, v)$ is always nonnegative, and $a_\text{stop}(x, v)$ is always nonpositive.  We assemble the final solution as
\begin{align*}
f &: \mathsf{CarState} \ndto \{ y : \mathsf{CarState} \times \mathsf{Accel} \mid \mathsf{safe}(\mathsf{pos}(y)) \}
\\ f(s) &\triangleq \mathsf{case}(s)
\begin{cases}
\oincl{\fun{c}{a_\text{go}(c) < a_\text{max}}}{c'}
  &\Branch (c', a_\text{go}(c'))
\\ \oincl{\fun{c}{a_\text{min} < a_\text{stop}(c)}}{c'}
  &\Branch (c', a_\text{stop}(c'))
\end{cases}.
\end{align*}
Formal proof would be required to show that the output of each branch is indeed within the required subspaces indicated by the output type.

With sufficient conditions on the constants, we can prove that the cases of $f$ are covering,
i.e., that it is always the case that either the $\mathsf{go}$ or $\mathsf{stop}$ strategy is applicable\footnote{
The cases cover so long as
$w + 2 \varepsilon \le \frac{1}{2}T^2a_\text{max}$,
i.e., it is possible to speed up from a standstill to cross the intersection, and
\begin{align*}
\label{eqn:vmax}
v_\text{max} < -a_\text{min}
\left(
  T + \sqrt{T^2 - 2\left(\text{\textonehalf} T^2 a_\text{max} - w - 2\varepsilon \right)/a_\text{min}}
\right)
\end{align*}
which guarantees that the car never goes so fast that it is too close to stop and also cannot speed up to pass through the intersection in time.
}.
\end{proof}

\begin{figure}
\begin{lstlisting}[basicstyle=\small\ttfamily]
let a_go = fun x : real => fun v : real =>
    max 0 (2 * (w + eps - x - v * T) / (T * T));;
let a_stop = fun x : real => fun v : real =>
    v * v / (2 * (x + eps));;
let accel = fun x : real => fun v : real =>
  (  a_go x v   < a_max  ~>  a_go x v
  || a_stop x v > a_min  ~>  a_stop x v );;
\end{lstlisting}
\caption{A Marshall program that uses an overlapping pattern match to nondeterministically compute the desired acceleration of an autonomous car approaching a traffic light.}
\label{marshall-car-example}
\end{figure}

\grammarr{This} translates to the Marshall program in Fig. \ref{marshall-car-example}.

\subsubsection{Approximate root-finding}
\label{s:bcov:root}

Given an arbitrary continuous function $f : K \cto \R$, where $K$ is compact-overt, then least one of the following statements must hold:
\begin{itemize}
\item There is some $x \in K$ such that $|f(x)| < \varepsilon$.
\item For every $x \in K$, $f(x) \neq 0$.
\end{itemize}

The following root-finding program nondeterministically computes which, in the former case computing some $x \in K$ that is almost a root:
\begin{align*}
\rootfindingcode.
\end{align*}

That the two cases cover follows from the logic in \refsection{bcover}: the opens $| \cdot | < \varepsilon$ and $\cdot \neq 0$ cover $\R$, and covering opens are stable under pullback by continuous maps $f$ and quantification over compact-overt spaces $K$.
\begin{longversion}
Specifically, we can define a map
\begin{align*}
[ \cdot \neq 0 ]_\varepsilon &: \R \ndto \bool
\\ [ x \neq 0 ]_\varepsilon &\triangleq \mathsf{case}(x)
\begin{cases}
\oincl{\cdot \neq 0}{\wildcard} &\Branch \btrue
\\ \oincl{\fun{x}{|x| < \varepsilon}}{\wildcard} &\Branch \bfalse
\end{cases}
\end{align*}
that approximately determines whether a real number is nonzero. Then the map
\begin{align*}
\mathsf{no\_roots}_\varepsilon' &: \One \ndto \bool
\\ \mathsf{no\_roots}_\varepsilon' &\triangleq \forall x \in K.\ [x \neq 0]_\varepsilon
\end{align*}
is defined by the same pair of opens as in the definition of $\mathsf{roots}_f$, and hence those opens cover $\One$.
\end{longversion}

It remains to define $\mathsf{simulate}$, which in general has the type
\[
\mathsf{simulate} : \{ \One \mid \exists x \in A.\ U(x) \}
  \ndto \{ x : A \mid U(x) \}
\]
for any overt space $A$ and open $U$ of $A$ (in $\mathsf{roots}_f$, $A$ is $K$ and $U$ is $\fun{x}{|f(x)| < \varepsilon}$). Given the existence of some values that satisfy a property $U$ of $A$, $\mathsf{simulate}$ can nondeterministically simulate those values. It is defined by the inverse image map
\begin{align*}
\mathsf{simulate}^* &: \Open{\{ A \mid U \}} \to \Open{\{ \One \mid \exists_A U \}}
\\ \mathsf{simulate}^*(V) &\triangleq \exists_A(V \wedge U).
\end{align*}

\begin{longversion}
\begin{proposition}
The inverse image map $\mathsf{simulate}^*$ preserves joins and $\top$.
\end{proposition}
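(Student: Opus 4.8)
The plan is to unfold the definition $\mathsf{simulate}^*(V) \triangleq \exists_A(V \wedge U)$, where $\exists_A : \Open{\One \times A} \to \Open{\One}$ is the left adjoint to $(- \times \top_A)$ supplied by overtness of $A$ (the $\Gamma = \One$ instance of Definition \ref{compact-overt}), silently using $\Open{\One \times A} \cong \Open{A}$. Recall that $\Open{\{A \mid U\}}$ is the quotient of $\Open{A}$ identifying $P$ with $Q$ when $P \wedge U = Q \wedge U$, and $\Open{\{\One \mid \exists_A U\}}$ is the quotient of $\Open{\One} \cong \Omega$ identifying $P$ with $Q$ when $P \wedge \exists_A U = Q \wedge \exists_A U$. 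First I would check well-definedness: since $\exists_A(P \wedge U)$ depends only on $P \wedge U$, it depends only on the class of $P$ in $\Open{\{A \mid U\}}$, so $\mathsf{simulate}^*$ descends to a function out of that quotient frame, whose values in $\Open{\One}$ are then pushed forward along the quotient map into $\Open{\{\One \mid \exists_A U\}}$.

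For preservation of joins, given a family $(V_i)_{i : I}$ of opens of $\{A \mid U\}$ represented by opens of $A$, I would compute
\[
\mathsf{simulate}^*\!\Big(\bigvee_{i : I} V_i\Big) = \exists_A\!\Big(\Big(\bigvee_{i : I} V_i\Big) \wedge U\Big) = \exists_A\!\Big(\bigvee_{i : I} (V_i \wedge U)\Big) = \bigvee_{i : I} \exists_A(V_i \wedge U) = \bigvee_{i : I} \mathsf{simulate}^*(V_i),
\]
using distributivity of binary meets over arbitrary joins in the space $A$ for the second equality, and the fact that $\exists_A$, being a left adjoint, preserves all joins for the third. Since the quotient map $\Open{A} \to \Open{\{A \mid U\}}$ is a frame homomorphism, the join on the far left may legitimately be computed using these representatives in $\Open{A}$.

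For preservation of $\top$, note that the top open of $\{A \mid U\}$ is the class of $\top_A$, and $\top_A \wedge U = U$, so $\mathsf{simulate}^*(\top) = \exists_A(U)$. On the other hand, in any open subspace $\{X \mid W\}$ the class of $\top_X$ coincides with the class of $W$, because $\top_X \wedge W = W = W \wedge W$; applied to $X = \One$ and $W = \exists_A U$, this says that the top element of $\{\One \mid \exists_A U\}$ is precisely the class of $\exists_A U$. Hence $\mathsf{simulate}^*(\top) = \exists_A U$ is the top of the codomain, as required.

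I do not expect a genuine obstacle here; the only point requiring care is the bookkeeping across the two quotient frames — in particular recognizing that the ``$\top$'' that must be hit in the codomain is the class of $\exists_A U$ rather than the literal $\top$ of $\Omega$ — together with being explicit that $\exists_A$ denotes the $\Gamma = \One$ case of the overtness adjunction. It is worth remarking (though not needed for this proposition) that $\mathsf{simulate}^*$ generally fails to preserve binary meets, consistent with $\mathsf{simulate}$ being a genuinely nondeterministic map $\{\One \mid \exists_A U\} \ndto \{x : A \mid U(x)\}$.
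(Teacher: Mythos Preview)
Your proposal is correct and follows essentially the same approach as the paper: both check well-definedness via the $\cdot \wedge U$ factoring, obtain join preservation from the composition $\exists_A \circ (\cdot \wedge U)$ (the paper states this compositionally, you unfold it via distributivity and the left-adjoint property), and verify $\top$-preservation by computing $\mathsf{simulate}^*(\top) = \exists_A(U)$ and noting this represents $\top$ in $\{\One \mid \exists_A U\}$. Your explicit remark about the quotient bookkeeping on the codomain side is a welcome clarification but not a departure in method.
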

\begin{proof}
First, we confirm that the output lies in the open subspace $U$ of $A$, that is, that $\mathsf{simulate}^*$ is gives equivalent results on inputs $V$ and $W$ satisfying $V \wedge U = W \wedge U$. This is straightforward because $\mathsf{simulate}^*$ immediately applies $\cdot \wedge U$ to its argument.

The map preserves joins since it is the composition $\exists_A \circ (\cdot \wedge U)$, both of which preserve joins. It preserves $\top$ (or equivalently, $U$), since
\[
\mathsf{simulate}^*(\top) = \exists_A(U),
\]
which is equal to $\top$ in $\{ \One \mid \exists_A U \}$.
\end{proof}
\end{longversion}

The approximate root-finding program $\mathsf{roots}_f$ accomplishes the task of approximate root-finding over a very general class of functions with a very short definition that works by composing constructs from \refsection{patterns} and \refsection{bcover}. That $K$ is compact-overt means that it implements a general computational interface for exhaustive search.

\begin{figure}
\begin{lstlisting}[basicstyle=\small\ttfamily]
let exists_bool_interval = fun pred : real -> bool =>
     (exists x : [0,1], is_true  (pred x)) ~> tt
  || (forall x : [0,1], is_false (pred x)) ~> ff ;;
let is_0_eps = fun x : real =>
        x < 0 \/ x > 0   ~> ff
  || -eps < x /\ x < eps ~> tt ;;
let roots_interval = fun f : real -> real =>
  exists_bool_interval (fun x : real => is_0_eps (f x));;
\end{lstlisting}
\caption{A Marshall program that approximately computes whether a continuous function $f$ has roots on $[0,1]$.}
\label{marshall-roots}
\end{figure}

The Marshall functional in Fig. \ref{marshall-roots} approximately decides whether a real-valued function $f : \R \cto \R$ has roots on the interval $[0,1]$.

\section{Related work}

Several works exploit the ability to quantify over compact-overt spaces \cite{lamcra, escardo2004, escardoinfinite, simpson1998lazy} to compute values in $\R$, $\bool$, or $\Sierp$, but not partial or nondeterministic Boolean values.

$\cat{FSpc}_{nd, p}$ is equivalent to the category of suplattices (also known as complete join semilattices), which is well-studied and its relation to $\cat{FSpc}$ well-documented \cite{townsend, vickersdoublepowerlocale, elephant, convergence}. \citeauthor{topologyvialogic} defines $\cdot_\bot$ and $\PLowerP$ \citeyear{topologyvialogic}, but we are not aware of any previous explicit characterizations of these constructions as strong monads induced by adjoints to the forgetful functors to $\cat{FSpc}_p$ and $\cat{FSpc}_{nd}$, respectively.

Several works describe related programming formalisms involving continuity, partiality, and nondeterminism.
\citet{marcial-romero} define a language with real-number computation that admits nontermination and has a foundational family of functions $\mathrm{rtest}_{a,b}$ which map real numbers to nondeterministic Boolean values, with a domain-theoretic semantics that uses Hoare powerdomains (which roughly corresponds to $\PLower$). Establishing totality requires reasoning within their operational model, in contrast to our framework, which optionally has denotational semantics for total functions. 
\citet{escardorealpcf} defines a language ``Real PCF'' with a denotational semantics in terms of cpo's, in which there is an operation known as a ``parallel conditional,'' which corresponds to the internal ``or'' operation on the Sierpi\'nski space $\vee_{\Sierp} : \Sierp \times \Sierp \cto \Sierp$ in our formalism. Parallel conditionals are applied to construct deterministic functions on the real numbers, which differs from our examples, whose computations are total but nondeterministic.
Similarly, \citeauthor{tsuiki}'s work on computation with Gray-code-based real numbers \citeyear{tsuiki} is based on ``indeterministic'' computation, where potentially nonterminating computations must be interleaved, and those that terminate must agree in their answers.

We are unaware of any other notion of pattern matching that permits patterns where determining membership is undecidable, without jeopardizing totality.
\citet{mueller} describes a system for exact real arithmetic that has a datatype of ``lazy Booleans'' analogous to our partial Booleans, as well as a partial and nondeterministic $n$-ary \texttt{choose} operation on lazy Booleans.
\emph{dReal} is a tool that allows computation of approximate truth values over $\R$ \cite{dReal}, allowing order comparisons and bounded quantifiers. Our calculus of nondeterministic $\bool$-valued maps, when restricted to $\R$, provides similar computational abilities but with a different foundational framework.

\section{Conclusion}

We presented a semantic framework for principled computation with continuous values with partiality and/or nondeterminism. In each variant, pattern-matching constructs facilitate construction of programs, and the Booleans yield a formal logic for approximate decision procedures.

The programs we describe are executable, thanks to their use of constructive topology, as demonstrated by their implementation in our modified version of Marshall.

\begin{acks}                            
  We thank Tej Chajed, Gabriel Scherer, and Muralidaran Vijayaraghavan for their feedback on drafts.
  This research was supported in part by the United States Department of Energy (Grants DE-SC0014204, DE-SC0008923) and the Office of Naval Research (Grant N00014-17-1-2699).
\end{acks}

\bibliography{LICS}

\begin{longversion}
\appendix

\section{Constructive topology}
\label{c:topology}

This appendix gives further detail about constructive topology. In particular, it introduces formal topology, which is a constructive and predicative version of locale theory that is useful for describing explicit constructions on locales, and which also makes its computational content more clear.

\subsection{Formal topology}
\label{s:formal-topology}

In this section, we formally describe formal spaces, and then the inductively generated formal spaces. Each of these are locales whose structures are presented in a particular (predicative) way.

\begin{definition}
A \emph{formal space} $A$ is a preorder $\BOpen(A)$ together with a relation
$\cov : \BOpen(A) \to (\BOpen(A) \to \Prop) \to \Prop$ that satisfies the rules in Figure \ref{fig:fspc}.
\end{definition}

\begin{figure}[htbp]
   \centering
\begin{mathpar}
\inferrule*[left=refl]
  {a \in U}
  {a \cov U}

\inferrule*[left=trans]
  {a \cov U\\ U \cov V}
  {a \cov V}
  
\inferrule*[left=$\le$-left]
  {a \le b \\ b \cov U}
  {a \cov U}
  
\inferrule*[left=$\le$-right]
  {a \cov U \\ a \cov V}
  {a \cov U \omeet V}
  
\end{mathpar}
   \caption{Rules that a formal cover relation must satisfy.}
   \label{fig:fspc}
\end{figure}

The intuitive meaning of this structure is that $\BOpen(A)$ is a set of basic opens on $A$, where $a \le_{\BOpen(A)} b$ if the basic open $a$ is included in the basic open $b$. A subset $U : \BOpen(A) \to \Prop$ represents an arbitrary union of the basic opens in $U$. This means that every open of $A$ can be represented as a subset $U : \BOpen(A) \to \Prop$. Then $a \cov U$ represents the proposition that the basic open $a$ is covered by the open $U$ (that is, the extent of $a$ is included in the extent of $U$, though it need not be the case that directly $a \in U$).

\grammar{This} can be extended to specify the entire covering relation, that is, when an open $U$ is covered by another open $V$. We define (overloading the notation $\cov$ to have either basic opens or opens on the left side)\commentt{
\ConsiderRemoving{For a subset $U : X \to \Prop$, the notation $\forall x \in U.\ P(x)$ is shorthand for $\forall x : X.\ x \in U \to P(x)$.}}
\[
U \cov V \triangleq \forall a \in U.\ a \cov V,
\]
which says that $U$ is covered by $V$ if all of the basic opens comprising $U$ are covered by $V$. With this definition, we can define the (large) set of opens on $A$, $\Open{A}$, as the type $\BOpen(A) \to \Prop$ where two opens $U, V : \BOpen(A) \to \Prop$ are equivalent if they each cover each other, i.e., $U \cov V$ and $V \cov U$.

The intersection of two basic opens $a$ and $b$ need not necessarily be a basic open itself; however, it must necessarily be the union of basic opens $c$ which are included in both $a$ and $b$. Therefore, we define the operator\footnote{
For any type $A$ and $x : A$, the notation $\{ x \}$ denotes the singleton subset, $\fun{y}{x \inteq y}$.}
\footnote{
There are many analogies between the construction of the topos of sheaves on a site and the construction of the lattice of opens of a formal space. The definition of $\omeet$ operator is just a special case of the construction of the product of representable presheaves, where the presheaves are over the category $\BOpen(A)$ (considering a preorder as a category). We will refrain from pointing out the other various correspondences.
}
\begin{align*}
\omeet &: \BOpen(A) \to \BOpen(A) \to (\BOpen(A) \to \Prop)
\\ a \omeet b &\triangleq \bigcup \{c : A \mid c \le a \text{ and } c \le b \}.
\end{align*}

The $\omeet$ operator can be extended to apply to opens just by taking unions, i.e.,
\[
U \omeet V \triangleq \bigcup_{a \in U} \bigcup_{b \in V} a \omeet b.
\]

Having defined $\omeet$, we can return to understand the rules on a formal cover described in Figure \ref{fig:fspc}. \irule{refl} says that a basic open is covered by an open that includes that basic open. \irule{$\le$-left} is similar, saying that if $U$ covers a basic open $b$, then it covers a basic open $a$ that is included in $b$. \irule{trans} describes transitivity of the cover relation: if $a$ is covered by $U$, and $U$ is covered by $V$, then $a$ is covered by $V$. \irule{$\le$-right} says that if a basic open $a$ is covered by two opens $U$ and $V$, it must be covered by their intersection.

We should emphasize that these definitions and rules are purely formal: basic opens are just ``symbols'' in some sense, rather than subsets as they are in classical topology. The rules should match spatial intuition but are not justified by it. The rules can be understood from a \emph{computational} perspective as well, which is most apparent when considering the definition of a point in a formal space.

\begin{definition}
A \emph{point} $x$ of a formal space $A$ is a subset $(x \liesin \cdot) : \BOpen(A) \to \Prop$ (read ``$x$ lies in'') satisfying the rules in Figure \ref{fig:fspc-pt}. The points of $A$ form a set $\mathsf{Pt}(A)$ where two points $x$ and $y$ are considered equivalent if for all $a : \BOpen(A)$, $x \liesin a$ if and only if $y \liesin a$.
\end{definition}

\begin{figure}
\centering
\begin{mathpar}
\inferrule*[right=join]
  {x \liesin a \\ a \cov U}
  {x \liesin U}

\inferrule*[right=meet-0]
  { }
  {x \liesin \top}
  
\inferrule*[right=meet-2]
  {x \liesin a \\ x \liesin b}
  {x \liesin a \omeet b}
\end{mathpar}
   \caption{Rules that a point of a formal space must satisfy.}
   \label{fig:fspc-pt}
\end{figure}

As with the $\omeet$ operator, we extend $x \liesin \cdot$ to operate on opens, rather than just basic opens, by taking a union,\commentt{
\ConsiderRemoving{For a subset $U : X \to \Prop$, the notation $\exists x \in U.\ P(x)$ is shorthand for $\exists x : X.\ x \in U \wedge P(x)$.}}
\footnote{The existential quantifier in this definition should be viewed as a dependent pair, where it is possible to \emph{compute} the basic open that exists. This definition agrees with formal topology as formalized in Martin-L\"of type theory but differs from Maietti and Sambin \cite{whypointfree}, who use an existential quantifier that does not guarantee a computational interpretation.
}
\[
x \liesin U \triangleq \exists a \in U.\ x \liesin a.
\]
The symbol $\top : \Open{A}$ denotes the trivially true propositional function (i.e., subset that is the whole set), and denotes the open that represents the entire space.

Intuitively, $x \liesin a$ means that the point $x$ lies in the basic open $a$. Points are described by which opens they lie in, but not every collection of basic opens actually specifies a point. The \irule{join} rule says that if $x$ lies in $a$ and if $a$ is covered by $U$, the $x$ lies in $U$. \grammar{This} can also be read computationally. If $x \liesin a$, and we provide a cover $U$ of $a$ (imagine $U$ is very fine), then $x$ \emph{computes} some basic open $b \in U$ such that $x \liesin b$. We call this computation \emph{splitting} a point with an open cover (following Palmgren \cite{palmgren2007}). \irule{meet-0} says that a point must lie in the entire space, and \irule{meet-2} says that if $x$ lies in both $U$ and $V$, then it must lie in their intersection. Another intuitive interpretation is that \irule{join} allows one to refine information about a point, \irule{meet-0} gives us \emph{some} information about the point (so there is at least something to refine), and \irule{meet-2} allows us to combine two pieces of information into one.

\begin{definition}
The one-point space $\One$ is defined by taking $\BOpen(\One) \triangleq \One$, where $\One$ also denotes the preorder with one element $\mathsf{tt}$, and defining the cover relation
\[
\mathsf{tt} \cov U \triangleq \mathsf{tt} \in U.
\]
\end{definition}
One can confirm that this covering relation satisfies the necessary rules. Then $\Open{\One} \iso \Prop$, and therefore we can interpret the subset $(x \liesin \cdot) : \BOpen(A) \to \Prop$ corresponding to a point $x$ of a space $A$ also as an inverse image map $x^* : \BOpen(A) \to \Open{\One}$. In $\Open{\One}$, the $\omeet$ operator corresponds to conjunction of propositions, and $\cov$ to implication. We can use this correspondence to generalize the definition of points of a space to continuous maps from one space to another:

\begin{definition}
A \emph{continuous map} $f$ from a formal space $\Gamma$ to a formal space $A$, written $f : \Gamma \cto A$, is a map $f^* : \BOpen(A) \to \Open{\Gamma}$ satisfying the rules in Figure \ref{fig:fspc-cmap}. The continuous maps from $\Gamma$ to $A$ form a set $\Gamma \cto A$ where two maps $f$ and $g$ are considered equivalent if for all $a : \BOpen(A)$, $f^*(a) = g^*(a)$.
\end{definition}

\begin{figure}
\centering
\begin{mathpar}
\inferrule*[right=split]
  {a \cov U}
  {f^*(a) \cov f^*(U)}

\inferrule*[right=meet-0]
  { }
  {\top \cov f^*(\top)}

\inferrule*[right=meet-2]
  { }
  {f^*(a) \omeet f^*(b) \cov f^*(a \omeet b)}.
\end{mathpar}
   \caption{Rules that a continuous map of formal spaces must satisfy.}
   \label{fig:fspc-cmap}
\end{figure}

\subsubsection{Inductively generated formal spaces}

Unfortunately, in the general case, it seems impossible to form product spaces for those formal spaces that have no structure beyond their satisfaction of the laws in Figure \ref{fig:fspc}.
\grammar{This} motivates the definition of \emph{inductively generated formal spaces} by Coquand et al. \cite{coquand2003}. As a category, the inductively generated formal spaces have products and form a full subcategory of the more general class of formal spaces.

For inductively generated formal spaces, the cover relation takes a particular form: it is generated by an indexed family of axioms of the form $a \cov U$ for concrete $a$s and $U$s.

\begin{definition}
A formal space $A$ is \emph{inductively generated} if for each $a : \BOpen(A)$ there is an index type $I_a : \Type$ indexing a family of opens $C_a : I_a \to (\BOpen(A) \to \Prop)$ such that the covering relation is equivalent to the one inductively generated by the following constructors (which always satisfy the rules in Figure \ref{fig:fspc}):
\begin{mathpar}
\inferrule*[right=refl]
  {a \in U}
  {a \cov U}

\inferrule*[right=$\le$-left]
  {a \le b
  \\ b \cov U}
  {a \cov U}

\inferrule*[right=cov-axiom]
  {i : I_b
  \\ a \le b
  \\ a \omeet C_b(i) \cov U}
  {a \cov U}.
\end{mathpar}
\end{definition}

The covering relation generated by an axiom set $(I, C)$ is the \emph{least} covering relation satisfying $a \cov C_a(i)$ for all $a : \BOpen(A)$ and $i : I_a$ \cite{coquand2003}. The \irule{cov-axiom} rule at once ensures that the required covers are present and that \irule{trans} and \irule{$\le$-right} hold.

Most importantly, inductively generated formal spaces have products \cite{coquand2003} and pullbacks \cite{pullback} (whereas formal spaces in general may not). Coquand et al. and Vickers demonstrate inductive generation of all spaces used in this report \cite{coquand2003, vickersmetric, vickersdoublepowerlocale, sublocFT, vickersconnected}.
These constructions are critical in enabling computation over these spaces.

\subsubsection{Locales}

Formal topology describes spaces with a particular base of opens, but often it is easier to express certain constructions instead with \emph{locale theory}, where a space is described without reference to a particular base. Every formal space determines a locale, but (predicatively) one cannot in general construct a formal space from a locale.

There are really two key reasons (which are related) why we deal with formal topology at all, rather than exclusively using locales:
\begin{itemize}
\item Formal topology is more concrete and, in practice, is useful for giving concrete methods of constructing spaces (similar to the use of generators and relations for presenting locales).
\item Predicatively, a general construction of product spaces and pullbacks does not appear to be possible for locales \cite{coquand2003}. These can only be constructed for (inductively generated) formal spaces.
\end{itemize}

Once a space has been constructed using formal topology, it is sometimes convenient to shift to the language of locale theory, where no fundamental distinction is drawn between the basic opens and opens in general.

\begin{definition}
A \emph{locale} $A$ is a distributive lattice $\Open{A}$ that has top and bottom elements, $\top$ and $\bot$, respectively, and that has small joins such that meets distribute over joins:
\footnote{Having small joins means having $I$-indexed joins for any small type $I : \Type$. Any lattice of opens $\mathcal{O}(A)$ where $\top \neq \bot$ is necessarily large.
Since locale theory is generally impredicative, care must be taken with defining the predicative analogue presented here. Palmgren \cite{palmgren2003} offers a more careful treatment of predicativity and universes in formal topology.
}
\[
a \wedge \bigvee_{i : I} b_i = \bigvee_{i : I} a \wedge b_i.
\]
\end{definition}

\begin{theorem}
For any formal space $A$, the preorder $\Open{A}$ of opens (defined in \refsection{s:formal-topology}) yields a locale.
\end{theorem}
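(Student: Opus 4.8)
The plan is to transport the cover relation to an order on opens, check that unions give joins and the operator $\omeet$ gives meets, and then notice that the frame distributivity law actually holds as a literal equality of subsets. Essentially every step is a direct unfolding of definitions; the only thing that genuinely needs care is keeping the quotient by mutual covering straight.

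First I would order the subsets $\BOpen(A) \to \Prop$ by declaring $U \le V$ iff $U \cov V$, recalling that $U \cov V$ abbreviates $\forall a \in U.\ a \cov V$. Reflexivity of this preorder is \irule{refl} applied pointwise, transitivity is \irule{trans}, and the induced equivalence ($U \le V$ and $V \le U$) is exactly the relation used to define $\Open{A}$. Since a preorder carrying small suprema, binary infima, and the frame law descends to a frame on its poset reflection, it suffices to exhibit that structure at the level of subsets.

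Next I would identify joins, top, and bottom. For a small family $(U_i)_{i : I}$ I take the union $\bigcup_{i:I} U_i = \fun{a}{\exists i : I.\ a \in U_i}$: every $U_j$ lies below it by \irule{refl} applied pointwise, and if $U_i \cov W$ for all $i$ then $\bigcup_i U_i \cov W$ is immediate once the pointwise definition of $\cov$ on the left is unfolded. The empty union is the empty subset $\bot$, which covers everything vacuously; the whole subset $\top$ is the greatest element, since $U \cov \top$ for every $U$ by \irule{refl}. For binary meets I would use the operator $U \omeet V = \bigcup_{a \in U}\bigcup_{b \in V}(a \omeet b)$ from \refsection{s:formal-topology}. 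It is a lower bound of $U$ and of $V$: any $c \in U \omeet V$ satisfies $c \le a$ for some $a \in U$ with $a \cov U$ by \irule{refl}, hence $c \cov U$ by \irule{$\le$-left} (and symmetrically for $V$). It is the \emph{greatest} lower bound: if $W \cov U$ and $W \cov V$, then each $a \in W$ has $a \cov U$ and $a \cov V$, so $a \cov U \omeet V$ by \irule{$\le$-right}, whence $W \cov U \omeet V$. Because meets in a preorder are determined up to the induced equivalence, $\omeet$ descends to a well-defined binary meet on $\Open{A}$, and likewise $\bigcup$ to small joins.

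Finally, for distributivity I would simply observe that $U \omeet \bigvee_i V_i = \bigvee_i (U \omeet V_i)$ holds on the nose as subsets: since $b \in \bigcup_i V_i$ iff $\exists i.\ b \in V_i$, the nested unions defining the left-hand side reindex verbatim to the right-hand side. Hence the frame law holds in $\Open{A}$ too, and combining with the earlier steps, $\Open{A}$ taken up to mutual covering is a distributive lattice with $\top$, $\bot$, all small joins, and meets distributing over those joins — a locale. I do not anticipate a real obstacle; the one point requiring attention is the bookkeeping around the quotient, namely checking that $\wedge$ and $\bigvee$ respect mutual covering (which is automatic once they are recognized as the categorical meet and joins of the underlying preorder) and that the four rules of Figure \ref{fig:fspc}, each stated with a \emph{basic} open on the left, suffice to derive the analogous facts about arbitrary opens through the pointwise definition of $\cov$.
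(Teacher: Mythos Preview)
Your proposal is correct and follows essentially the same approach as the paper: define $\top$, $\bot$, binary meets via $\omeet$, and small joins via unions, then verify the locale laws. The paper's own proof is only a sketch that names exactly these four operations and says one can confirm they are well-defined (monotone) and satisfy the requisite laws; you have simply filled in the details the paper omits, including the verification that $\omeet$ is the greatest lower bound via \irule{$\le$-right} and the observation that distributivity holds literally at the level of subsets.
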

\begin{proof}[Proof sketch]
We need to show $\Open{A}$ has the requisite operations. We define
\begin{align*}
   \top &\triangleq \fun{\_}{\top}
\\ \bot &\triangleq \fun{\_}{\bot}
\\  U \wedge V &\triangleq U \omeet V
\\ \bigvee_{i : I} U_i &\triangleq \bigcup_{i : I} U_i.
\end{align*}
One can confirm that these operations are well-defined (in fact, monotone) with respect to the preorder on $\Open{A}$ and that they satisfy the requisite algebraic laws.
\end{proof}

We call the lattice $\Open{A}$ the \emph{opens} of $A$, which describes the observable or ``affirmable'' properties of $A$ \cite{topologyvialogic}. If $U \le \bigvee_{i : I} V_i$, we call the family $(V_i)_{i : I}$ an \emph{open cover} of $U$.

We next define the points and continuous maps, which are closely analogous to the version for formal spaces.

\begin{definition}
A \emph{point} $x$ of a space $A$ is a subset $x \liesin \cdot : \Open{A} \to \Type$ (read ``$x$ lies in'') such that
\begin{mathpar}
\inferrule*[right=join]
  {x \liesin U \\ U \le \bigvee_{i : I} V_i}
  {\exists i : I.\ x \liesin V_i}

\inferrule*[right=meet-0]
  { }
  {x \liesin \top}

\inferrule*[right=meet-2]
  {x \liesin U \\ x \liesin V}
  {x \liesin U \wedge V}.
\end{mathpar}
\end{definition}

By the same reasoning as with formal spaces, we can rephrase these rules so that they suggest the rules for continuous maps:
\begin{mathpar}
\inferrule*[right=join]
  {U \le \bigvee_{i : I} V_i}
  {x^*(U) \le x^* \left( \bigvee_{i : I} V_i \right)}
\\
\inferrule*[right=meet-0]
  { }
  {\top \le x^*(\top)}

\inferrule*[right=meet-2]
  { }
  {x^*(U) \wedge x^*(V) \le x^*(U \wedge V)}.
\end{mathpar}

However, since \irule{join} implies that $x^*$ preserves small joins and is monotone, this means that the reversed inequalities in the \irule{meet} rules must hold, and so we can simplify these rules to a simpler and more algebraic definition:

\begin{definition}
A \emph{continuous map} $f : A \cto B$ between locales is a map $\iimg{f} : \Open{B} \to \Open{A}$ that preserves $\top$, binary meets, and small joins. The map $\iimg{f}$ \glsadd{iimg} is known as the \emph{inverse image map}.
\end{definition}

A continuous map $f : A \cto B$ transforms covers on $B$ into covers on $A$.
Locales and continuous maps form a category.

\begin{definition}
Two spaces $A$ and $B$ are \emph{homeomorphic}, written $A \iso B$, if they are isomorphic in the category of continuous maps, that is, if there are continuous maps $f : A \cto B$ and $g : B \cto A$ such that $g \circ f = \mathsf{id}_A$ as well as $f \circ g = \mathsf{id}_B$.
\end{definition}

\begin{definition}[Specialization order]
For two maps $f, g : A \cto B$, define $f \le g$ if for every $U : \Open{B}$, $\iimg{f}(U) \le \iimg{g}(U)$.
\end{definition}

Whenever it is the case that $f \le g$,$g$ can always behave as $f$. For instance, if $f$ and $g$ are points, then given an open cover $\top \le \bigvee_{i : I} U_i$, there is some $i^* : I$ such that $f \liesin U_{i^*}$. If $f \le g$, then $g \liesin U_{i^*}$ as well.

\begin{definition}
An open $U : \Open{A}$ is \emph{positive}, written $\mathsf{Pos}(U)$ if every open cover of $U$ is nonempty, i.e., if for every small family $(V_i)_{i : I}$ such that $U \le \bigvee_{i : I} V_i$, $I$ is inhabited.
\end{definition}

Positivity encodes the notion of being ``strictly bigger than $\bot$,'' and assuming classical logic, $\mathsf{Pos}(U)$ is equivalent to $\neg (U \le \bot)$.

\begin{lemma}
\label{thm:join-pos}
If $f : \Open{A} \to \Open{B}$ preserves joins, then for $U : \Open{B}$, $\mathsf{Pos}(f(U))$ implies $\mathsf{Pos}(U)$.
\end{lemma}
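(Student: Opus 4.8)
The plan is to unfold the definitions and transport an arbitrary open cover of $U$ through $f$. Suppose $\mathsf{Pos}(f(U))$ holds, and let $U \le \bigvee_{i : I} V_i$ be an arbitrary open cover of $U$; the goal is to exhibit an inhabitant of the index type $I$.

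First I would record that any join-preserving map is monotone: from $P \le Q$ we get $Q = P \vee Q$, hence $f(Q) = f(P) \vee f(Q) \ge f(P)$. Applying this to $U \le \bigvee_{i : I} V_i$ and then using that $f$ preserves joins gives $f(U) \le f\left(\bigvee_{i : I} V_i\right) = \bigvee_{i : I} f(V_i)$, so $(f(V_i))_{i : I}$ is an open cover of $f(U)$. Now the hypothesis $\mathsf{Pos}(f(U))$ applies to this very cover and tells us that $I$ is inhabited. Since the cover of $U$ was arbitrary, $\mathsf{Pos}(U)$ follows.

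There is essentially no obstacle here. The only points worth keeping in mind are that positivity is a statement about \emph{all} open covers, so the argument must start from an arbitrary cover of $U$ rather than a fixed one, and that nothing beyond join preservation is needed — in particular $f$ need preserve neither meets nor $\top$, which is why the same lemma can be invoked for inverse image maps of nondeterministic and/or partial maps such as $\mathsf{inv}^*$ and direct image maps $f_!$.
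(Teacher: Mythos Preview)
Your proof is correct and follows essentially the same approach as the paper: take an arbitrary cover of $U$, push it through $f$ using monotonicity and join preservation to obtain a cover of $f(U)$, and invoke positivity of $f(U)$ to get that the index type is inhabited. The paper's proof is slightly terser (it asserts monotonicity without the $P \vee Q$ argument), but the structure is identical.
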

\begin{proof}
Suppose $U \le \bigvee_{i : I} V_i$. Since $f$ preserves joins, it is monotone, so
\[
f(U) \le f \left( \bigvee_{i : I} V_i \right)
= \bigvee_{i : I} f(V_i).
\]
Since $f(U)$ is positive, $I$ is inhabited by the above cover, and thus $U$ is positive.
\end{proof}

\subsubsection{The computational content of formal topology}

A continuous map $f : \Gamma \cto A$ (with a point of a space a special case, where $\Gamma \iso \One$) is defined by two pieces of data: its inverse image map $f^* : \Open{A} \to \Open{\Gamma}$ and a formal proof that $f^*$ preserves small joins, $\top$, and binary meets. It is reasonable to consider the inverse image map $f^*$ a \emph{specification}, as it describes the observable behavior of the output in terms of observable properties of the input. \grammar{This} accords with the fact that any two continuous maps with the same inverse image maps are considered equal; the formal proofs of structure preservation can be ignored when reasoning, except for the requirement that those formal proofs must exist. However, the formal proofs are necessary for computing concrete results, and so we can consider the formal proof that $f^*$ preserves joins and finitary meets as the \emph{implementation} of the behavior specified by $f^*$. It is remarkable, then, that the specification is ``complete,'' uniquely specifying a continuous map (by definition) if it exists.

Suppose we define a function $f : \Gamma \cto A$. The map $f$ can be applied to an input $x : \mathsf{Pt}(\Gamma)$ to produce an outpoint point $f(x) : \mathsf{Pt}(A)$. How then does one inspect $f(x)$ to get concrete results about where the output lies within $A$?

One computes concrete results in formal topology by splitting a point with an open cover. This open cover may be arbitrarily fine. By combining the \irule{join} and \irule{meet-0} rules, we observe that if we present a point $y$ in a space $A$ with an open cover of the whole space $\top \le \bigvee_{i : I} U_i$, then $y$ can \emph{compute} an $i : I$ such that $y$ lies in $U_i$. If a point lies in several opens in an open cover, it may return any index corresponding to an open it lies in. In the case of the point $f(x)$ where $f : \Gamma \cto A$, $f$ translates a cover of $A$ into a cover of $\Gamma$, from which $x$ can then compute some open from that cover of $\Gamma$ that it lies in.

For instance, a point $x : \mathsf{Pt}(\R)$, when given a formal proof of the open cover
\[
\top \le \bigvee_{b : \bool} \text{if } b \text{ then } (\cdot < 1) \text{ else } (\cdot  > -1),
\]
must either return a concrete Boolean $\btrue$ together with a proof $x \liesin \cdot < 1$ or return $\bfalse$ with a proof $x \liesin \cdot > - 1$. If a point lies in both opens, whether it returns $\btrue$ or $\bfalse$ depends on both the ``implementation'' of the point (the formal proofs that $x$ satisfies \irule{join} and \irule{meet-0}) as well as the formal proof of the cover.

In implementing a real number, one must provide a function which, given any tolerance $\varepsilon > 0$, produces a rational approximation within that tolerance. Imagine two implementations of the real number 0, where the first always returns 0, whereas the second returns $\varepsilon/2$. Consider a formal proof of the above cover of $\R$ that proceeds in the following manner: first, approximate to a tolerance of 1 with the cover $\top \le \bigvee_{q : \rat} B_1(q)$. Then, for a given $q : \rat$, if the lower endpoint $q - 1$ of the resulting approximating interval $B_1(q)$ is at most 0, then prove that $B_1(q) \le (\cdot < 1)$, and if not, prove $B_1(q) \le (\cdot > -1)$. With this particular formal covering proof, the first implementation of 0 will return $\btrue$, $(\cdot < 1)$, but the second will return $\bfalse$, $(\cdot > -1)$. However, if this cover is applied to the real number 2, the observation will be $\bfalse$, no matter the implementation of 2 or the particular proof of the covering. 

Note that the \irule{meet-2} rule does not figure into this notion of computation. We can think of maps $x \liesin \cdot : \Open{A} \to \Open{\One}$ that satisfy \irule{join} and \irule{meet-0} but not \irule{meet-2} as nondeterministic values (see \refsection{parnondet}). But there is a notion of incremental computation where \irule{meet-2} has a computational meaning.

Suppose we probe a point $x : \mathsf{Pt}(A)$ with a cover $\top \le \bigvee_{i : I} U_i$ and learn that $x \liesin U_{i_*}$ for some particular $i_* : I$. Then we may decide to further refine our knowledge of $x$ by probing with another open cover $U_{i_*} \le \bigvee_{j : J} V_j$, or equivalently, $U_{i_*} \le \bigvee_{j : J} U_{i_*} \wedge V_j$, to learn that $x \liesin U_{i_*} \wedge V_{j_*}$ for some $j_* : J$. \grammar{This} gives a sort of ``sequential composition'' of splitting with covers. The \irule{meet-2} rule allows two independent threads of computation to be joined. For instance, we might probe $x$ with an open cover to learn $x \liesin U$, and independently with another open cover (of the whole space, not just $U$) to learn that $x \liesin V$. By \irule{meet-2}, we can combine what we've learned to determine $x \liesin U \wedge V$, and can continue to refine where $x$ is with covers of $U \wedge V$. For nondeterministic values, where \irule{meet-2} may not hold, \grammar{this} isn't satisfied, since (informally) each independent thread of refinement may get a different nondeterministic realization.

\subsection{Spaces}

\subsubsection{Discrete spaces}

The simplest kinds of spaces are those that just represent sets. Let \cat{Set} denote the category of sets. The objects of \cat{Set} are types $A$ together with an equivalence relation $=_A : A \to A \to \Prop$ on $A$, and the arrows $f : A \to B$ are those equivalence-preserving functions, that is, functions $f : A \to B$ on the underlying types such that for all $a, a' : A$, if $a =_A a'$, then $f(a) =_B f(a')$.

We will show that it is possible to work with sets within the framework of spaces. Precisely, we can define a full and faithful functor $\functor{Discrete} : \cat{Set} \to \cat{FSpc}$ that exhibits \cat{Set} as the discrete spaces, which form a full subcategory of \cat{FSpc}. Given a set $A$ (with equivalence relation $=_A$), we construct a formal space whose type of basic opens is $A$, with the inclusion preorder given by $=_A$. We think of a basic open $a : A$ as representing the subset $(\cdot =_A a)$ of $A$, which is open in the discrete topology (and hence the preorder defining inclusion of basic opens is discrete). Since every element $a : A$ represents both an open and a point of the the discrete space, for $a : A$ we will use the notation $\cdot = a$ to refer to the open, reserving $a$ for the point of the space.

In $\functor{Discrete}(A)$, we have $a \cov U$ if and only if there is some $a' : A$ such that $a =_A a'$ and $a' \in U$. We can use this to simplify the rules that a point $x$ of a discrete space or a continuous map $f : \Gamma \cto A$ between discrete spaces must satisfy:

\begin{mathpar}
\inferrule*[right=split]
  {x \liesin a \\ a =_A b}
  {x \liesin b}

\inferrule*[right=meet-0]
  { }
  {\exists a : A.\ x \liesin a}

\inferrule*[right=meet-2]
  {x \liesin a \\ x \liesin b}
  {a =_A b}
\end{mathpar}

\begin{mathpar}
\inferrule*[right=split]
  {a =_A b}
  {f^*(a) =_{\Gamma \to \Prop} f^*(b)}

\inferrule*[right=meet-0]
  { }
  {\exists a : A.\ \gamma \in f^*(a)}

\inferrule*[right=meet-2]
  {\gamma \in f^*(a) \\ \gamma' \in f^*(b) \\ \gamma =_\Gamma \gamma'}
  {a =_A b}.
\end{mathpar}

We observe that $f^*$ identifies a relation between $\Gamma$ and $A$. We read $\gamma \in f^*(a)$ as ``$\gamma$ is in the preimage of $a$ under $f$'' (which is equivalent to saying that $f$ maps $\gamma$ to $a$). The \irule{join} rule says that this relation respects equality on $A$. \irule{meet-0} says that the relation is total (from $\Gamma$ to $A$), and \irule{meet-2} says that the relation respects equality on $\Gamma$ and also that each input maps to at most one output (up to $=_A$). Accordingly, $f^*$ is a functional relation from the set $\Gamma$ to the set $A$, so it is in bijective correspondence with $\Gamma \to_\cat{Set} A$, the collection of functions from the set $\Gamma$ to the set $A$, meaning that the functor $\functor{Discrete}$ is full and faithful: continuous maps between discrete spaces are just functions on their underlying sets.

Recall the functor $\functor{Pt} : \cat{FSpc} \to \cat{Set}$ which takes a space $A$ to its (large) set of points $\functor{Pt}(A)$, where two points are considered equal if they lie in the same basic opens. This is right adjoint to $\functor{Discrete}$, i.e., $ \functor{Discrete} \dashv \functor{Pt}$, giving a correspondence for a set $A$ and a space $B$
\begin{mathpar}
\mprset{fraction={===}}
\inferrule*{A \to_\cat{Set} \functor{Pt}(B)}
       {\functor{Discrete}(A) \cto B}.
\end{mathpar}

\subsubsection{Open subspaces}

Given a space $A$ and an open $U : \Open{A}$, we can form the open subspace $\{ A \mid U \}$ of $A$ by making $\Open{\{A \mid U \}}$ a quotient of $\Open{A}$, identifying opens $P, Q : \Open{A}$ in $\{ A \mid U \}$ when $P \wedge U = Q \wedge U$. The quotient still defines a space, since the operation $\cdot \wedge U : \Open{A} \to \Open{A}$ preserves binary meets and small joins.

\subsubsection{The Sierpi\'nski space}

The Sierpi\'nski space $\Sierp$ is fundamental in topology, defining the space of possible ``truth values.'' Just as the subsets of a set $S$ are in correspondence with functions $S \to \Prop$, the opens of a space $A$ are in correspondence with the continuous maps $A \cto \Sierp$ \cite{topologyvialogic},
\[
\Open{A} \iso (A \cto \Sierp).
\]

The basic opens of $\Sierp$ are $\BOpen(\Sierp) \triangleq \bool_\le$, 
where $\bool_\le$ is $\bool$ with the ``truth order,'' i.e., where $\bfalse$ is strictly less than $\btrue$.
The Sierpi\'nski space has no covering axioms. We have thus defined $\Sierp$.

In particular, we have 
\[
(\One \cto \Sierp) \iso \Open{\One} \iso \Prop,
\]
so the points of $\Sierp$ are just the propositions. Given any point $x$ of $\Sierp$, the corresponding proposition is $x \liesin \bfalse$. In particular, there are points $\strue$ and $\sfalse$, where $\strue \liesin \bfalse$ but not $\sfalse \liesin \bfalse$. It is convenient to denote the nontrivial open of $\Sierp$ by $\cdot = \strue$, since the only global point lying in that open is $\strue$.

\subsubsection{Disjoint unions (sums)}

We have not come across any characterization of sum spaces in $\cat{FSpc}$, so we describe them here.

From a family of spaces $(A_i)_{i : I}$ parameterized over some index type $I$\footnote{
In this section, we require the index type $I$ to satisfy uniqueness of identity proofs (UIP), which states that
\[
\forall a, b : I.\ \forall p, q : a \inteq b.\ p \inteq q.
\]}, we can form their disjoint union space $\sum_{i : I} A_i$, which is the coproduct of the $A_i$s in \cat{FSpc}. Intuitively, $\sum_{i : I} A_i$ pastes all the $A_i$s together, where points from different spaces are not considered near each other. When $I \inteq \bool$ (as types), this specializes to binary sums, which we denote with $+$. For instance, we have the homeomorphism in \cat{FSpc}
\[
\bool \iso \One + \One,
\]
and more generally, for any type $I$ (considered as a set with $\inteq$ as its equivalence relation), 
\[
\functor{Discrete}(I) \iso \sum_{i : I} \One.
\]

The preorder of basic opens of $\sum_{i : I} A_i$ is the coproduct of the basic opens of the constituent spaces,
\[
\BOpen\left(\sum_{i : I} A_i\right) \triangleq \sum_{i : I} \BOpen(A_i).
\]
The preorder relation for the coproduct preorder $\sum_{i : I} \BOpen(A_i)$ is generated as the inductive type with the single constructor
\begin{mathpar}
\inferrule{a \le_{A_i} b}
              {(i, a) \le_{\sum_{j : I} A_j} (i, b)}.
\end{mathpar}

The axioms for $\sum_{i : I} A_i$ are then just a sort of ``coproduct'' of the axioms of the constituent spaces. For a basic open $(j, a) : \BOpen(\sum_{i : I} A_i)$, for each axiom $a \cov_{A_j} U$, we add the axiom
\[
(j, a) \cov \mathsf{InDisjunct}_j(U),
\]
where $\mathsf{InDisjunct}_j : \Open{A_j} \to \Open{\sum_{i : I} A_i}$ is inductively generated by the constructor
\begin{mathpar}
\inferrule{a \in U}
              {(j, a) \in \mathsf{InDisjunct}_j(U)}.
\end{mathpar}

\begin{theorem}
\label{thm:inj-oe}
\[
A_j \iso \left\{ \sum_{i : I} A_i \mid \mathsf{InDisjunct}_{j}(\top) \right\}
\]
\end{theorem}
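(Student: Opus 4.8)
The plan is to realize $A_j$ as an open subspace of $B \triangleq \sum_{i : I} A_i$ by showing that the coproduct injection $\mathsf{inj}_j : A_j \cto B$ is an open embedding whose direct image of $\top$ is exactly $\mathsf{InDisjunct}_j(\top)$. By the characterization of open embeddings established above --- an open embedding is precisely an open map whose direct image map preserves binary meets, in which case $A \iso \{ B \mid f_!(\top) \}$ --- it suffices to prove that (i) $\mathsf{inj}_j$ is an open map, (ii) its direct image map equals $\mathsf{InDisjunct}_j$ (so in particular $(\mathsf{inj}_j)_!(\top) = \mathsf{InDisjunct}_j(\top)$), and (iii) $\mathsf{InDisjunct}_j$ preserves binary meets. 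The homeomorphism produced by the theorem is then the desired one.

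First I would record how $\mathsf{inj}_j$ acts on basic opens: $\mathsf{inj}_j^*(i, a) = \chi_{i \inteq j}(a)$ with $\chi_Q(a) \triangleq \bigvee_{q : Q} a$, which, using UIP on $I$, is the open of $A_j$ generated by $a$ when $i \inteq j$ and is $\bot$ otherwise; this is forced by the universal property of the coproduct together with the fact that $\mathsf{inj}_j^*$ preserves joins. For (i) and (ii) I would prove the adjunction $\mathsf{InDisjunct}_j \dashv \mathsf{inj}_j^*$. Since $\mathsf{InDisjunct}_j$ (which sends $U$ to the open $\{ (j, a) \mid a \in U \}$) preserves joins and both maps are determined by their action on basic opens, the adjunction reduces to the equivalence, for every $a : \BOpen(A_j)$ and every $V : \Open{B}$,
\[
(j, a) \cov_B V \qquad\Longleftrightarrow\qquad a \cov_{A_j} \mathsf{inj}_j^*(V).
\]
The $(\Leftarrow)$ direction I would obtain by transporting an $A_j$-cover derivation up to $B$ through $\mathsf{InDisjunct}_j$: each generating axiom $b \cov_{A_j} C_b(\ell)$ of $A_j$ has, by the construction of the axioms of $B$, the companion axiom $(j,b) \cov_B \mathsf{InDisjunct}_j(C_b(\ell))$, and the order steps lift because $(j,b) \le_B (j,b')$ whenever $b \le_{A_j} b'$. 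The $(\Rightarrow)$ direction I would prove by induction on the inductive generation of $\cov_B$, showing that any derivation of $(j, a) \cov_B V$ mentions only basic opens of the form $(j, -)$ and hence descends to a derivation of $a \cov_{A_j} \mathsf{inj}_j^*(V)$. The Frobenius law for $\mathsf{InDisjunct}_j$ and claim (iii), $\mathsf{InDisjunct}_j(a \omeet b) = \mathsf{InDisjunct}_j(a) \omeet \mathsf{InDisjunct}_j(b)$, then reduce to short computations on basic opens, using that the basic opens of $B$ below both $(j, a)$ and $(j, b)$ are exactly the $(j, c)$ with $c \le_{A_j} a$ and $c \le_{A_j} b$.

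I expect the $(\Rightarrow)$ direction to be the main obstacle: one must verify, by case analysis over the three constructors \irule{refl}, \irule{$\le$-left}, and \irule{cov-axiom} of the inductively generated cover relation on $B$, that a cover of a basic open lying in the $j$-th disjunct can never leave that disjunct. This rests on the coproduct preorder relating $(i, a) \le (i, b)$ only within a fixed index and on the only $B$-axioms with left-hand side in the $j$-th disjunct being the $\mathsf{InDisjunct}_j$-images of $A_j$-axioms, so that the entire derivation reads back as a derivation in $A_j$. Everything else is routine once the interaction of $\mathsf{InDisjunct}_j$ with the coproduct preorder and with $\omeet$ is settled. One could also bypass the open-embedding theorem and exhibit the homeomorphism directly, with inverse image maps $P \mapsto \{ a \mid (j, a) \in P \}$ one way and $W \mapsto \mathsf{InDisjunct}_j(W)$ the other, but checking that these descend to the quotient defining $\{ B \mid \mathsf{InDisjunct}_j(\top) \}$ and are mutually inverse amounts to the same covering-transfer argument.
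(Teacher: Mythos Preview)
Your proposal is correct but takes a different route from the paper. The paper exhibits the homeomorphism directly: it defines $f : A_j \cto \{ B \mid \mathsf{InDisjunct}_j(\top)\}$ by the inverse image map generated by $a' \in f^*(j,a) \Leftrightarrow a =_{A_j} a'$ (and $f^*(i,b) = \bot$ for $i \not\inteq j$), and $g$ in the opposite direction by $g^*(a) \triangleq (j,a)$, then checks that $f^*$ and $g^*$ are mutual inverses and that each preserves joins and finitary meets. In particular, \irule{meet-0} for $g$ reduces exactly to $\top \cov \mathsf{InDisjunct}_j(\top)$, which holds by the open-subspace quotient. This is precisely the ``bypass'' alternative you sketch at the end of your proposal.

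Your primary route instead proves that $\mathsf{inj}_j$ is an open map with direct image $\mathsf{InDisjunct}_j$ preserving binary meets, and then invokes the open-embedding characterization to conclude $A_j \iso \{ B \mid (\mathsf{inj}_j)_!(\top)\}$. The technical heart --- the covering-transfer equivalence $(j,a) \cov_B V \Leftrightarrow a \cov_{A_j} \mathsf{inj}_j^*(V)$, established by induction on the inductively generated cover --- is the same work that the paper's direct continuity check requires (and you correctly identify the $(\Rightarrow)$ direction as the nontrivial step). What your packaging buys is a clean separation: the general open-embedding theorem absorbs the passage to the subspace quotient, so you never manipulate $\{ B \mid \mathsf{InDisjunct}_j(\top)\}$ by hand. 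What the paper's packaging buys is self-containment: the theorem sits in the appendix as foundational material proved \emph{before} the open-embedding theory, and is then cited (just after its statement) as the reason $\mathsf{inj}_j$ is an open embedding. Your approach would reverse that dependency, which is logically fine but runs against the paper's intended order of development.
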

\begin{proof}[Proof sketch]
Define $f : A_j \cto \left\{ \sum_{i : I} A_i \mid \mathsf{InDisjunct}_{j}(\top) \right\}$ to have its inverse image map $f^* : \Open{\left\{ \sum_{i : I} A_i \mid \mathsf{InDisjunct}_{j}(\top) \right\}} \to \Open{A_j}$ as the inductive family generated by the constructor
\begin{mathpar}
\inferrule*
  {a =_{A_j} a'}
  {a' \in f^*(j, a)}.
\end{mathpar}
Note that with this definition, $f^*(i, b) = \bot$ whenever $i \not\inteq j$.

Define $g : \left\{ \sum_{i : I} A_i \mid \mathsf{InDisjunct}_{j}(\top) \right\} \cto A_j$ by
\[
g^*(a) \triangleq (j, a).
\]

It should be clear that $f^*$ and $g^*$ are inverses of each other, so we must just confirm that $f^*$ and $g^*$ indeed define continuous maps, which is mostly a calculational matter. For $f$, \irule{join} is straightforward. We only note that proving that $g$ satisfies \irule{meet-0} reduces to
\[
\top \cov \mathsf{InDisjunct}_{j}(\top),
\]
which is exactly the additional covering rule given by the open subspace of $f$'s output.
\end{proof}

Once open embeddings have been defined (\refsection{s:open-embedding}), this theorem will establish the open embedding $\mathsf{inj}_j : A_j \hookto \sum_{i : I} A_i$.

\subsubsection{Products}

We next introduce notation and relevant properties for the
construction of product spaces in formal topology\footnote{\citet{vickerstychonoff} provides a full definition and characterization.}.

For a family of spaces $(A_i)_{i : I}$ parameterized over some index type $I$\footnote{
Again we require that $I$ satisfy UIP.}, we denote their product in \cat{FSpc} by $\prod_{i : I} A_i$. The key idea characterizing the structure of the product space is that we can make an observation on the product space by choosing a component and making an observation on that component. Since we can only make finitely many observations, any open will make non-trivial observations on only finitely many components.

More precisely, the sub-basic opens of $\prod_{i : I} A_i$ are also given by the coproduct preorder $\sum_{i : I} \BOpen(A_i)$. However, we will use the notation $[i \mapsto a]$ for $i : I$ and $a : \BOpen(A_i)$ to refer to such a sub-basic open, because it represents the idea that the $i$th component lies in $a$. Similarly, we can define the open $[i \mapsto U] : \Open{\prod_{i : I} A_i}$ for an arbitrary open $U : \Open{A_i}$ as inductively generated by the constructor
\begin{mathpar}
\inferrule*
  {a \in U}
  {[i \mapsto a] \in [i \mapsto U]}.
\end{mathpar}

The universal property of product spaces says that for any space $\Gamma$ and maps $f_i : \Gamma \cto A_i$, we can construct a continuous map $g : \Gamma \cto \prod_{i : I} A_i$. The inverse image map acts according to
\footnote{It suffices to only define $g^*$ on its sub-basic opens ($\mathcal{O}_\mathsf{B}\left(\prod_{i : I} A_i\right)$) since $g^*$ will be required to preserve finitary meets, and the basic opens are just finitary meets of sub-basic opens.}
\begin{align*}
g^* &: \mathcal{O}_\mathsf{SB}\left(\prod_{i : I} A_i\right) \to \Open{\Gamma}
\\ g^*([i_k \mapsto a_k]) &\triangleq f_{i_k}^*(a_k).
\end{align*}

If we consider the index type $I \inteq \bool$, we get the binary products, which we denote $A \times B$ for spaces $A$ and $B$. We notice that every basic open of $A \times B$ is equivalent to one of the form
\[
[\btrue \mapsto U] \wedge [\bfalse \mapsto V]
\]
for $U : \Open{A}$ and $V : \Open{B}$, and so we may use the notation $U \times V : \BOpen\left(A \times B\right)$ to represent a basic open of $A \times B$ (also known as an \emph{open rectangle}). Accordingly, every open of $A \times B$ can be represented as a union of open rectangles.

Given $f : \Gamma \cto A$ and $g : \Gamma \cto B$, we denote by $\langle f , g \rangle : \Gamma \cto A \times B$ the ``pair'' given by the universal property of products.

\subsubsection{Metric spaces, including $\R$}
\label{s:space:metric}

It is possible to extend a set with a metric defined on it (such as $\rat$) to a metrically complete (i.e., Cauchy complete) formal space (such as $\R$) \cite{vickersmetric}. This section describes this construction.

Suppose we are given a \emph{metric set}, a set $X$ with a distance metric relation $d : \rat^+ \times X \times X \to \Prop$, where $d(\varepsilon, x, y)$ indicates the proposition that the distance between $x$ and $y$ is at most $\varepsilon$. The predicate $d$ must in fact define the closed-ball relation for a metric, meaning it must satisfy the following rules:\footnote{
This definition of a closed-ball relation is due to O'Connor \cite{oconnor2008}. We use it so that our Coq library is compatible with the Coq Repository at Nijmegen (CoRN), which has many constructive results regarding metric spaces.}
\begin{mathpar}
\inferrule*[right=refl]
  { }
  {d(\varepsilon, x, x)}
  
\inferrule*[right=sym]
  {d(\varepsilon, x, y)}
  {d(\varepsilon, y, x)}

\inferrule*[right=triangle]
  {d(\delta, x, y)
  \\ d(\varepsilon, y, z)}
  {d(\delta + \varepsilon, x, z)}

\inferrule*[right=closed]
  {\forall \delta : \rat^+.\ d(\varepsilon + \delta, x,  y)}
  {d(\varepsilon, x, y)}.
\end{mathpar}

The basic opens of the metric completion of $X$ (if $X \inteq \rat$, this would be $\R$) will be the formal balls, the type $\mathsf{Ball}(X)$ defined by the single constructor
\begin{mathpar}
\inferrule*
  {\varepsilon : \rat^+
  \\ x : X}
  {B_\varepsilon(x) : \mathsf{Ball}(X)}.
\end{mathpar}

We define a relation $<$ on balls that indicates when one ball contains another with ``room to spare'' all around, and then a relation $\le$ on balls indicating containment of one formal ball within another:
\begin{align*}
B_\delta(x) < B_\varepsilon(y) &\triangleq \exists \gamma : \rat^+.\ d(\gamma, x, y) \text{ and } \gamma + \delta < \varepsilon
\\ B_\delta(x) \le B_\varepsilon(y) &\triangleq \forall \gamma : \rat^+.\ B_\delta(x) < B_{\varepsilon + \gamma}(y).
\end{align*}

We then define the metric completion space $\mathcal{M}(X)$ of $X$ as having the basic opens $\mathsf{Ball}(X)$, and the following axioms:
\begin{mathpar}
\inferrule*[right=approx]
  {\varepsilon : \rat^+}
  {\top \cov \{ B_\varepsilon(q) \mid q : \rat \}}

\inferrule*[right=shrink]
  { }
  {B_q(\varepsilon) \cov \{ B_{\varepsilon'}(q') \mid B_{\varepsilon'}(q') < B_\varepsilon(q) \}}
\end{mathpar}

\begin{definition}
A function $f : X \to Y$ on metric sets $X$ and $Y$ is \emph{$k$-Lipschitz} (for $k : \rat^+$) if for all $x, y : X$ and $\varepsilon : \rat^+$, if $d(\varepsilon, x, y)$ then $d(k \varepsilon, f(x), f(y))$.
\end{definition}

We proved in Coq that $f$ can be extended to a continuous map $g : \mathcal{M}(X) \cto \mathcal{M}(Y)$ defined by the inverse image map\footnote{
This definition of the inverse image map is based on a theorem of Vickers (Theorem 4.9 and Remark 4.10 in \cite{vickersmetric}) extending 1-Lipschitz functions to their metric completions. Vickers omits the proof that the inverse image map defines a continuous map, saying it is ``routine to check.''}
\begin{align*}
g^* : \mathsf{Ball}(Y) \to \Open{\mathcal{M}(X)}
\\ g^*(b) \triangleq \{ B_\delta(x) \mid B_{k \delta}(f(x)) < b \}.
\end{align*}

The fact that $f$ is $k$-Lipschitz implies that for balls $a, a' : \mathsf{Ball}(X)$, if $a \le a'$ and $a' \in g^*(b)$, then $a \in g^*(b)$.

\begin{theorem}
$g^*$ preserves joins, $\top$, and binary meets.
\end{theorem}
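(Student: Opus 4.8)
The plan is to verify the three rules of Figure~\ref{fig:fspc-cmap} for $g^*$ regarded as a map on basic opens; the locale-theoretic statement---preservation of joins, $\top$, and binary meets---then follows. The technical core is a single \emph{Lipschitz-lifting} observation: for formal balls of $X$, if $B_\varepsilon(u) < B_{\varepsilon'}(u')$ then $B_{k\varepsilon}(f(u)) < B_{k\varepsilon'}(f(u'))$, and likewise with $\le$ in place of $<$. This is immediate from the definitions of $<$ and $\le$ on balls, the $k$-Lipschitz hypothesis, and the closed-ball rules \irule{refl}, \irule{triangle}, and \irule{closed}. I will also use the routine ``mixed transitivity'' facts $p < q \le r \Rightarrow p < r$ and $p \le q < r \Rightarrow p < r$ for formal balls. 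A first consequence, already recorded in the excerpt, is that each $g^*(b)$ is downward closed under $\le$ (if $a \le a' \in g^*(b)$ then $a \in g^*(b)$), and more generally that $g^*$ is monotone: $a \le b$ in $\BOpen(\mathcal{M}(Y))$ implies $g^*(a) \subseteq g^*(b)$, hence $g^*(a) \cov g^*(b)$. Monotonicity already gives the ``easy'' inclusion $g^*(a \omeet b) \cov g^*(a) \omeet g^*(b)$, so the real content of \irule{meet-2} will be the reverse.

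For \irule{meet-0}, observe that $g^*(\top)$ contains \emph{every} basic open $B_\delta(x)$ of $\mathcal{M}(X)$: since $d(k\delta/2, f(x), f(x))$ holds by \irule{refl}, we have $B_{k\delta}(f(x)) < B_{2k\delta}(f(x))$, so $B_\delta(x) \in g^*(B_{2k\delta}(f(x))) \subseteq g^*(\top)$; thus $g^*(\top) = \top$ and \irule{meet-0} is trivial. For \irule{split} I induct on the derivation of $a \cov U$ in the inductively generated space $\mathcal{M}(Y)$. The \irule{refl} case is immediate, and the \irule{$\le$-left} case reduces to monotonicity together with \irule{trans}. The \irule{cov-axiom} case amounts to showing that $g^*$ carries each generating axiom of $\mathcal{M}(Y)$ to a cover in $\mathcal{M}(X)$, i.e.\ $g^*(b) \cov g^*(C)$ for the relevant covering family $C$. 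For the \irule{shrink} axiom (where $b = B_\varepsilon(q)$ and $C$ is the set of balls strictly inside $b$): given $B_\eta(w) \in g^*(b)$, so $B_{k\eta}(f(w)) < B_\varepsilon(q)$ with some positive slack, inflate the image ball to $c := B_{k\eta + \delta_0}(f(w))$ with $\delta_0$ below that slack; then $c \in C$ while still $B_{k\eta}(f(w)) < c$, so $B_\eta(w) \in g^*(c)$. For the \irule{approx} axiom: refine $B_\eta(w)$ by small rational-centred balls using \irule{approx} inside $\mathcal{M}(X)$, and for each resulting piece $B_\mu(p)$ choose a rational $q$ close enough to $f(p)$ that $B_{k\mu}(f(p)) < B_\varepsilon(q)$, which places $B_\mu(p)$ in $g^*(C)$; here one uses density of the rationals in $Y$ and a careful apportioning of radii. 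In both cases \irule{trans} and the induction hypothesis close the step.

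The last rule, \irule{meet-2}, follows the same pattern. It suffices to show $g^*(a) \omeet g^*(b) \cov g^*(a \omeet b)$. Take a basic open $B_\delta(x)$ with $B_\delta(x) \in g^*(a) \omeet g^*(b)$; unfolding, there are balls $B_\rho(y)$ and $B_\sigma(z)$ with $B_\delta(x) \le B_\rho(y)$, $B_\delta(x) \le B_\sigma(z)$, $B_{k\rho}(f(y)) < a$, and $B_{k\sigma}(f(z)) < b$. By \irule{shrink} in $\mathcal{M}(X)$ it is enough to cover each $B_{\delta'}(x')$ with $B_{\delta'}(x') < B_\delta(x)$. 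For such a piece, mixed transitivity gives $B_{\delta'}(x') < B_\rho(y)$ and $B_{\delta'}(x') < B_\sigma(z)$, and Lipschitz lifting then gives $B_{k\delta'}(f(x')) < B_{k\rho}(f(y)) < a$ and $B_{k\delta'}(f(x')) < B_{k\sigma}(f(z)) < b$, each with a positive slack, say $\epsilon_a$ and $\epsilon_b$. Setting $c := B_{k\delta' + \epsilon}(f(x'))$ for any $\epsilon$ with $0 < \epsilon < \min(\epsilon_a, \epsilon_b)$, one checks $c \le a$ and $c \le b$, so $c \in a \omeet b$, while $B_{k\delta'}(f(x')) < c$, i.e.\ $B_{\delta'}(x') \in g^*(c) \subseteq g^*(a \omeet b)$. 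Hence every \irule{shrink}-piece of $B_\delta(x)$ lies in $g^*(a \omeet b)$, and \irule{trans} gives $B_\delta(x) \cov g^*(a \omeet b)$.

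I expect the main obstacle to be the \irule{cov-axiom} case of the \irule{split} induction, and within it the \irule{approx} axiom: unlike \irule{shrink}, it is not settled by one inflation but requires subdividing in $\mathcal{M}(X)$, invoking density of the rationals in the target metric set, and tracking radii so that all the relevant $<$-relations hold with strict slack. The \irule{meet-2} step has the same ``inflate by an $\epsilon$ below the available slack'' flavour but is self-contained. Everything rests on the Lipschitz-lifting lemma and the mixed-transitivity facts for $<$ and $\le$ on balls, which should be proved first.
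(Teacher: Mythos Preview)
Your proposal is correct and follows essentially the same route as the paper's sketch: a covering induction for \irule{split}, handling the \irule{approx} axiom by refining in $\mathcal{M}(X)$ with balls of radius below $\varepsilon/k$ and \irule{shrink} by a direct inflation, and an $\varepsilon$-inflation argument for \irule{meet-2} (the paper subdivides in $\mathcal{M}(X)$ via \irule{approx} rather than \irule{shrink} there, but either works). One small simplification: in your \irule{approx} case you need not appeal to density of rationals in $Y$---the ball centres in $\mathcal{M}(Y)$ are elements of $Y$, so taking $q = f(p)$ already does the job, which is precisely what the paper does.
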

\begin{proof}[Proof sketch]
\begin{description}
\item[\irule{meet-0}] For every formal ball $B_\delta(x)$ of $X$, there is a ball of $Y$, $B_{\delta + k\delta}(f(x))$, such that $B_\delta(x) \le g^*(B_{\delta + k\delta}(f(x)))$. Thus $\top \le g^*(\top)$.
\item[\irule{meet-2}] We are given a ball of $Y$, $B_{k \delta}(f(x))$, that lies strictly within two other balls of $Y$, 
$B_{k \delta}(f(x)) < B_\varepsilon(y)$ and $B_{k \delta}(f(x)) < B_{\varepsilon'}(y')$. We must prove in $X$ that
\[
B_\delta(x) \cov g^*(B_\varepsilon(y) \omeet B_{\varepsilon'}(y')).
\]
By the properties of $<$, we can shrink each of the larger balls by just a bit, $\gamma : \rat^+$ while maintaining strict containment, so we get
$B_{k \delta}(f(x)) < B_{\varepsilon - \gamma}(y)$ and
$B_{k \delta}(f(x)) < B_{\varepsilon' - \gamma}(y')$.
Then for some even smaller tolerance $\gamma' < \gamma$, we use the covering axiom $\irule{approx}(\gamma'/k)$ to derive
\[
B_\delta(x) \cov \{ B_{\gamma'/k}(x') \mid x' : X \} \omeet B_\delta(x).
\]
This reduces our problem to showing that any ball of $X$ with radius at most $\gamma' / k$ that is contained within $B_\delta(x)$ is covered by $g^*(B_\varepsilon(y) \omeet B_{\varepsilon'}(y'))$,
which in fact follows by reflexivity, which we will now show.
We must only consider the ``worst case,'' where we have a ball $B_{\gamma'/k}(z)$ such that 
$B_{\gamma'/k}(z) \le B_{\delta}(x)$. We will prove $B_{\gamma'/k}(z) \in g^*(B_\varepsilon(y) \omeet B_{\varepsilon'}(y'))$. We do so in two steps, showing that
$B_{\gamma'/k}(z) \in g^*(B_{\gamma}(f(z)))$ and that
$B_{\gamma}(f(z)) \in B_\varepsilon(y) \omeet B_{\varepsilon'}(y')$. The former is almost immediate. 
The latter is surprisingly intricate, depending on the fact that $f$ is $k$-Lipschitz as well as using the triangle inequality.
\item[\irule{join}] We first prove that $g^*$ preserves unary covers, and then proceed by induction on covering axioms. Preserving unary covers reduces to proving that, given balls $b \le b'$ in $Y$ that $g^*(b) \le g^*(b')$, which is follows from the fact that if $a < b$ and $b \le b'$, then $a < b'$.

We now proceed by induction on the covering axioms. 

In the case of $\irule{approx}(\varepsilon)$, given balls $B_\delta(x) : \mathsf{Ball}(X)$ and $B_\gamma(y) : \mathsf{Ball}(Y)$ such that $B_\delta(x) \in g^*(B_\gamma(y))$ show that
\[
B_\delta(x) \cov g^*(B_\gamma(y) \omeet \{ B_\varepsilon(y') \mid y' : Y \}).
\]
To do so, we can find some $\alpha : \rat^+$ such that $\alpha < \varepsilon / k$, and cover $B_\delta(x)$ with balls of this radius,
\[
B_\delta(x) \cov B_\delta(x) \omeet \{ B_\alpha(x') \mid x' : X \}.
\]
Then 
\[
B_\delta(x) \omeet \{ B_\alpha(x') \mid x' : X \} \cov 
  g^*(B_\gamma(y) \omeet \{ B_\varepsilon(y') \mid y' : Y \})
\]
follows by reflexivity.

In the case of $\irule{shrink}$, the covering in fact follows from reflexivity.
\end{description}
\end{proof}

The extension of Lipschitz functions from metric sets to their metric completions can be used to define addition ($+ : \R \times \R \cto \R$), for instance, in conjunction with facts relating products of spaces and of metric sets, such as \cite{vickersmetric}
\[
\mathcal{M}(X \times Y) \iso \mathcal{M}(X) \times \mathcal{M}(Y).
\]

Since $\times : \R \times \R \cto \R$ is not Lipschitz, it cannot be defined using the above construction. However, since it is locally Lipschitz, it can be defined as a ``gluing'' of many Lipschitz maps defined on open subspaces. One can define this gluing with an overlapping pattern match to produce a nondeterministic result, and then prove that the result is in fact deterministic, since any pair of overlapping maps will agree on their overlapping region.

In $\R$, open intervals and open ``rays'' (e.g., $(0 < \cdot)$) are evidently open, as they can be described as unions of open balls. For instance, we define
\begin{align*}
(0 < \cdot) &: \Open{\R}
\\ (0 < \cdot) &\triangleq \{ B_\varepsilon(x) \mid 0 \le x - \varepsilon \}.
\end{align*}

In $\R \times \R$, the relations $(<), (\neq) : \Open{\R \times \R}$ are open. We can define $<$ by
\begin{align*}
(<) \triangleq \{ (B_\varepsilon(x), B_\delta(y) \mid x + \varepsilon \le y - \delta \}
\end{align*}
and then $\neq$ in terms of that.

\subsection{As a programming language}
\label{s:contpl}

Since \cat{FSpc} is a cartesian monoidal category (i.e., it has well-behaved products), it admits a restricted $\lambda$-calculus syntax that compiles to continuous maps, which this report uses freely.

Following \citet{escardo2004}, we describe the syntax as an inductive family $\vdash : \mathsf{list}(\cat{FSpc}) \to \Type$ with constructors

\begin{mathpar}
\inferrule*[right=var]
  {(x : X) \in \Gamma}
  {\Gamma \vdash x : X}

\inferrule*[right=app]
  {f : A_1 \times \cdots \times A_n \cto B
  \\ \Gamma \vdash x_1 : A_1 \\ \cdots \\ \Gamma \vdash x_n : A_n}
  {\Gamma \vdash f(x_1, \ldots, x_n) : B},
\end{mathpar}
where in the \irule{var} rule $(x : X) \in \Gamma$ denotes a witness that $X$ is a member of the list $\Gamma$. Let $\mathsf{Prod} : \mathsf{list}(\cat{FSpc}) \to \cat{FSpc}$ compute the product of a list of types.

\begin{theorem}
Given any expression $\Gamma \vdash e : A$, we can construct a term $\denote{e} : \mathsf{Prod}(\Gamma) \cto A$, and conversely, given a function $f : \mathsf{Prod}(\Gamma) \cto B$, there is an expression $\Gamma \vdash e : B$.
\end{theorem}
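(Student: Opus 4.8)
The plan is to prove both directions by straightforward structural recursion, using only that \cat{FSpc} is cartesian monoidal (so it has a terminal object $\One$, binary products, and hence products of lists via $\mathsf{Prod}([]) = \One$ and $\mathsf{Prod}(X :: \Gamma') = X \times \mathsf{Prod}(\Gamma')$). For the forward direction I would induct on the derivation of $\Gamma \vdash e : A$. In the \irule{var} case, a witness $(x : A) \in \Gamma$ picks out a position in the list $\Gamma$, and I take $\denote{x} : \mathsf{Prod}(\Gamma) \cto A$ to be the projection onto that position, built by iterating $\mathsf{fst}$ and $\mathsf{snd}$. In the \irule{app} case, given a morphism $f : A_1 \times \cdots \times A_n \cto B$ together with inductively constructed maps $\denote{x_i} : \mathsf{Prod}(\Gamma) \cto A_i$, I form the pairing $\langle \denote{x_1}, \ldots, \denote{x_n} \rangle : \mathsf{Prod}(\Gamma) \cto A_1 \times \cdots \times A_n$ from the universal property of the product and set $\denote{f(x_1, \ldots, x_n)} \triangleq f \circ \langle \denote{x_1}, \ldots, \denote{x_n} \rangle$.

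For the converse there is almost nothing to do. Given $f : \mathsf{Prod}(\Gamma) \cto B$ with $\Gamma = [X_1, \ldots, X_n]$, the morphism $f$ itself is an admissible ``function symbol'' for the \irule{app} rule, since its domain is exactly $\mathsf{Prod}(\Gamma) = X_1 \times \cdots \times X_n$; so I take $e \triangleq f(x_1, \ldots, x_n)$, where each $x_i : X_i$ is obtained from \irule{var} applied to the $i$-th entry of $\Gamma$. Unfolding the forward construction then gives $\denote{e} = f \circ \langle \denote{x_1}, \ldots, \denote{x_n} \rangle$, and the pairing of all the context projections is the identity on $\mathsf{Prod}(\Gamma)$ by the uniqueness clause of the product's universal property, so $\denote{e} = f$.

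The only delicate points — and the nearest thing to an obstacle — are purely bookkeeping: matching the ``$A_1 \times \cdots \times A_n$'' appearing in the \irule{app} rule against $\mathsf{Prod}(\Gamma)$ (which may require inserting the canonical associativity and unit isomorphisms, depending on exactly how one reads the $n$-ary product), and the degenerate case $\Gamma = []$, where $\mathsf{Prod}(\Gamma) = \One$ and the unique map into $\One$ serves as the empty pairing. This is the standard ``internal language of a cartesian category'' argument — and indeed the same recursion, used without the \irule{app} function symbols being arbitrary morphisms, gives the analogous statement for each $\mathcal{C} \in \{\cat{FSpc}_{nd}, \cat{FSpc}_p, \cat{FSpc}_{nd,p}\}$ — so I would only state the induction, give the two cases above, and note that everything is well-defined up to the product laws.
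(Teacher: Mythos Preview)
Your proposal is correct and matches the paper's proof essentially step for step: both induct on the derivation for the forward direction (projections for \irule{var}, pairing then composition for \irule{app}) and, for the converse, apply $f$ itself via \irule{app} to the context's variables. Your additional remarks about associativity bookkeeping, the empty context, and the verification that $\denote{e} = f$ go slightly beyond what the paper spells out, but they are in the same spirit and do not change the argument.
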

\begin{proof}
First, we construct a continuous map from syntax by induction on its structure. In the \irule{var} case, we compute from the witness $(x : X) \in \Gamma$ a projection $\denote{x} : \mathsf{Prod}(\Gamma) \cto X$. In the \irule{app} case, we are given maps $\denote{x_i} : \Gamma \cto A_i$ for $i \in 1, \ldots n$. We use these maps to build a map $x : \Gamma \cto A_1 \times \cdots \times A_n$ by the universal property for products, i.e., $x \triangleq \langle \denote{x_1}, \ldots, \denote{x_n} \rangle$. Then $f \circ x : \Gamma \cto B$.

Conversely, given $f : \mathsf{Prod}(\Gamma) \cto B$, we can build the term
\[
\Gamma \vdash f(\pi_1, \ldots, \pi_n) : B,
\]
where each $\pi_i$ is a variable $\Gamma \vdash \pi_i : \Gamma[i]$.
\end{proof}

This allows us to define a continuous map by introducing variables and applying continuous maps to them. For instance, we can define a map like
\begin{align*}
f &: \bool \times \R \times \R \cto \R
\\ f(x, y, z) &\triangleq \mathsf{if}(x, y, -y) + z \times z
\end{align*}
rather than having to manually arrange it as a ``linear'' composition of continuous maps. If we don't want to give a name to a function such as the one above, we may choose to write it with an ``anonymous'' lambda,
\[
\fun{(x, y, z)}{\mathsf{if}(x, y, -y) + z \times z}.
\]

\section{Monads for partiality and nondeterminism}
\label{a:monad}

Using terminology, concepts, and notation from formal topology, which is explained the section \ref{c:topology}, we will give a much more thorough characterization of the monads on \cat{FSpc} corresponding to partiality and/or nondeterminism, explaining their construction, their adjoint correspondences, and why they are strong monads.

\subsubsection*{Lifted spaces}

\newglossaryentry{Lift}
{
  name={\ensuremath{\mathord{\Downarrow}}},
  description={The direct image of an open in its lifting, $\mathord{\Downarrow} : A \to A_\bot$},
}
\newcommand{\Lift}{\gls{Lift}}

As mentioned previously, the ``forgetful'' functor from $\cat{FSpc}$ to $\cat{FSpc}_{p}$ has a right adjoint, such that there is a (strong) monad $\cdot_\bot : \cat{FSpc} \to \cat{FSpc}$ giving a correspondence
\begin{mathpar}
\mprset{fraction={===}}
\inferrule*{A \pto B}
  {A \cto B_\bot}
\end{mathpar}
between continuous maps and their partial counterparts. By this correspondence, we can think of the space $B_\bot$ as the space of partial values. We will call spaces of the form 
$B_\bot$ \emph{lifted spaces}. Vickers describes them briefly in \emph{Topology via Logic} \cite{topologyvialogic}.

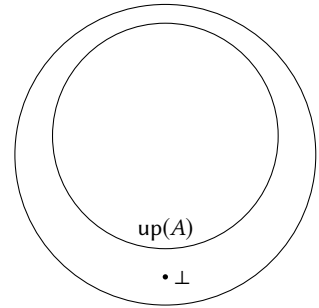
\begin{figure}[h]
   \centering
\begin{tikzpicture}[scale=0.5]
\coordinate (center) at (2,2);
\coordinate (acenter) at (2,2.5);
\coordinate (bot) at (2, -1.25);
\draw (center) circle (4);
\draw (acenter) circle (3);
\path (acenter) -- ++(270:3) node[above] {$\mathsf{up}(A)$};
\fill (bot) circle [radius=2pt];
\draw (bot) node[right] {$\bot$};
\end{tikzpicture}
   \caption{A schematic diagram of the lifted space $A_\bot$.}
   \label{fig:lifted}
\end{figure}

Figure \ref{fig:lifted} depicts a picture that in some sense represents what $A_\bot$ looks like: it has an open subspace $\mathsf{up}(A)$ (defined in \refsection{s:open-embedding}) that looks exactly like $A$, and a single point $\bot : \mathsf{Pt}(A_\bot)$ outside of that open subspace, which represents an undefined or nonterminating value.

We will now describe the construction of lifted spaces and their properties. Algebraically, we can think of the construction of lifted spaces as a free construction on the lattice of opens. Suppose we want to construct from a space $A$ its lifted space $A_\bot$. Then we want to have an operator $\Lift : \Open{A} \to \Open{A_\bot}$ which should preserve the structure from $A$ that we want to keep: joins and binary meets. We will build such an operator using the structure of $A$ as a formal space.

Since we want $\Lift$ to preserve joins, it suffices to define it for basic opens. We will add in a new top element $\mathsf{T}$ as well. Accordingly, 
the lifted space $A_\bot$ has basic opens generated by the constructors
\begin{mathpar}
\inferrule*
  {a : \BOpen(A)}
  {\Lift{a} : \BOpen(A_\bot)}

\inferrule*
  { }
  {\mathsf{T} : \BOpen(A_\bot)}
\end{mathpar}
and the preorder on the basic opens is inductively generated by the constructors
\begin{mathpar}
\inferrule*
  {a \le_A b}
  {\Lift{a} \le_{A_\bot} \Lift{b}}

\inferrule*
  { }
  {a \le_{A_\bot} \mathsf{T}}.
\end{mathpar}

We extend $\Lift$ to operate on opens of $A$ so that it preserves joins: for $U : \Open{A}$, $\Lift U$ is generated by the constructor
\begin{mathpar}
\inferrule*
  {a \in U}
  {\Lift{a} \in \Lift{U}}.
\end{mathpar}

Now we describe the covering axioms in $A_\bot$. We simply copy over the covering axioms from $A$. For each axiom $a \cov U$ in $A$, we add the axiom $\Lift{a} \cov \Lift{U}$ in $A_\bot$.

\begin{proposition}
Indeed $\mathsf{T}$ is the top element of $A_\bot$, $\top \le \mathsf{T}$.
\end{proposition}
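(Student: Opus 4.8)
The plan is simply to unfold the definitions. Recall that an open of $A_\bot$ is a subset of $\BOpen(A_\bot)$, that the top open $\top : \Open{A_\bot}$ is $\fun{\_}{\top}$, i.e., the subset containing \emph{every} basic open, and that the order on opens is the cover relation, with $U \le V$ unfolding (via $U \cov V \triangleq \forall c \in U.\ c \cov V$) to ``every basic open of $U$ is covered by $V$''. Interpreting the basic open $\mathsf{T}$ as an open (either its singleton $\{\mathsf{T}\}$ or its principal downset — these coincide as opens, since $c \cov \{\mathsf{T}\}$ already holds whenever $c \le_{A_\bot} \mathsf{T}$), the claim $\top \le \mathsf{T}$ thus amounts to the statement: for every basic open $c : \BOpen(A_\bot)$ one has $c \cov \{\mathsf{T}\}$.

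First I would invoke the generating rule $a \le_{A_\bot} \mathsf{T}$ for the preorder on basic opens of $A_\bot$, with $a$ instantiated to our arbitrary $c$, obtaining $c \le_{A_\bot} \mathsf{T}$. Next, $\mathsf{T} \cov \{\mathsf{T}\}$ holds by \irule{refl}, since $\mathsf{T} \in \{\mathsf{T}\}$. Applying \irule{$\le$-left} to $c \le_{A_\bot} \mathsf{T}$ and $\mathsf{T} \cov \{\mathsf{T}\}$ then yields $c \cov \{\mathsf{T}\}$. Since $c$ was arbitrary, $\top \cov \mathsf{T}$, i.e., $\top \le \mathsf{T}$.

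There is essentially no genuine obstacle here; the only thing requiring care is the bookkeeping of the unfolding itself. The argument uses only the structural cover rules \irule{refl} and \irule{$\le$-left}, which hold in every formal space, so none of the covering axioms copied from $A$ into $A_\bot$ play any role — the conclusion is forced purely by the presence of the new basic open $\mathsf{T}$ together with the order constructor $a \le_{A_\bot} \mathsf{T}$. It is worth noting that the reverse inequality $\mathsf{T} \le \top$ is immediate (every open lies below $\top$), so $\mathsf{T}$ and $\top$ in fact denote the same open of $A_\bot$; this proposition merely records the one direction that is not completely trivial.
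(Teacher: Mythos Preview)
Your proof is correct and takes essentially the same approach as the paper's: both reduce the claim to showing that every basic open $a : \BOpen(A_\bot)$ satisfies $a \le_{A_\bot} \mathsf{T}$, which is immediate from the second constructor of the preorder on $\BOpen(A_\bot)$. You spell out the passage from $a \le \mathsf{T}$ to $a \cov \{\mathsf{T}\}$ via \irule{refl} and \irule{$\le$-left}, whereas the paper leaves that step implicit, but the argument is the same.
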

\begin{proof}
It suffices to show that for any basic open $a : \BOpen(A_\bot)$, $a \le \mathsf{T}$, which follows directly from the definition of $\le_{A_\bot}$.
\end{proof}

Lifted spaces satisfy a property much stronger than compactness\footnote{
A space is compact if whenever $\top \cov U$, then there is a (Kuratowski-) finite subset $K$ of $U$ such that $\top \cov K$.}:
\begin{proposition}
\label{topcov}
If $\top \cov U$, then $\mathsf{T} \in U$.
\end{proposition}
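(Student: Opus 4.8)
The plan is to use induction on the derivation of the cover $\top \cov U$ in $A_\bot$, using the inductively generated structure of the cover relation (the constructors \irule{refl}, \irule{$\le$-left}, and \irule{cov-axiom}). The key observation is that the only basic open of $A_\bot$ that lies above $\mathsf{T}$ in the preorder $\le_{A_\bot}$ is $\mathsf{T}$ itself: from the generating clauses for $\le_{A_\bot}$, we have $\Lift{a} \le_{A_\bot} \Lift{b}$ only from $a \le_A b$, and $x \le_{A_\bot} \mathsf{T}$ always, but nothing forces $\mathsf{T} \le_{A_\bot} \Lift{b}$ for any $b$. So $\mathsf{T}$ is, in a strong sense, an isolated maximum.

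First I would state precisely what we induct on. Recall $\top \le \mathsf{T}$, so $\top \cov U$ is the same as $\mathsf{T} \cov U$ (using \irule{$\le$-left} in one direction and the fact that $\top$ is the bottom of the truth order in $\Open{\One}$-land, or more simply by reflexivity and monotonicity). Thus it suffices to prove: for every basic open $a : \BOpen(A_\bot)$, if $a \cov U$ and $\mathsf{T} \le_{A_\bot} a$, then $\mathsf{T} \in U$ — and then instantiate $a \triangleq \mathsf{T}$. Now induct on the derivation of $a \cov U$:
\begin{itemize}
\item \irule{refl}: here $a \in U$; since $a = \mathsf{T}$ (the only basic open with $\mathsf{T} \le a$), we get $\mathsf{T} \in U$ directly.
\item \irule{$\le$-left}: we have $a \le_{A_\bot} b$ and $b \cov U$; since $\mathsf{T} \le_{A_\bot} a \le_{A_\bot} b$, by transitivity $\mathsf{T} \le_{A_\bot} b$, so the induction hypothesis applied to $b \cov U$ gives $\mathsf{T} \in U$.
\item \irule{cov-axiom}: we have $i : I_b$, $a \le_{A_\bot} b$, and $a \omeet C_b(i) \cov U$. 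Again $\mathsf{T} \le_{A_\bot} a \le_{A_\bot} b$ forces $b = \mathsf{T}$, but $\mathsf{T}$ has no covering axioms in $A_\bot$ (we only copied axioms $\Lift{a} \cov \Lift{U}$ from $A$, none with source $\mathsf{T}$), so $I_{\mathsf{T}}$ is empty and this case is vacuous.
\end{itemize}

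The main obstacle — really the only subtlety — is making the claim ``$\mathsf{T}$ is the unique basic open above itself'' rigorous, i.e., carefully reading off from the inductive definition of $\le_{A_\bot}$ that there is no constructor producing $\mathsf{T} \le_{A_\bot} \Lift{b}$, and similarly confirming that the axiom set of $A_\bot$ assigns the empty index type to the basic open $\mathsf{T}$. Both are immediate from the explicit presentation of $A_\bot$ given above, but they are exactly the places where the argument bottoms out, so I would spell them out as small lemmas before running the induction. Everything else is routine structural induction.
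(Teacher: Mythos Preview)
Your proposal is correct and follows essentially the same approach as the paper: reduce to $\mathsf{T} \cov U$, then induct on the covering derivation, using that $\mathsf{T}$ is maximal in $\le_{A_\bot}$ and that the axiom set at $\mathsf{T}$ is empty. Your version is more carefully spelled out---in particular, you explicitly generalize the induction hypothesis to ``$\mathsf{T} \le_{A_\bot} a$ and $a \cov U$'' to make the \irule{$\le$-left} case go through cleanly---whereas the paper's proof is terser and handles this implicitly.
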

\begin{proof}
It is equivalent to prove that if $\mathsf{T} \cov U$ then $\mathsf{T} \in U$. We can proceed by induction on the proof of covering. The root node of the proof could not have been the use of an axiom, because there are no axioms for covering $\mathsf{T}$. Therefore, there must be some basic open $u \in U$ such that $\mathsf{T} \le u$. Since $\mathsf{T}$ is the largest basic open, it must be that $u \inteq \mathsf{T}$, so $\mathsf{T} \in U$.
\end{proof}

\grammar{This} means that any covering of $A_\bot$ is necessarily trivial. What \grammar{this} means is that one cannot get (nontrivial) information about a point of $A_\bot$ by splitting the point with a cover. \grammar{This} makes sense, because we require refinement by splitting to occur in finite time, and we want to allow $A_\bot$ to represent points that may be partial. To get useful information from a point in a lifted space, one should \emph{prove} that the point actually lies in $\Lift{\top}$.

Every space $A_\bot$ has a point $\bot$ which has no interesting information: it lies only in the entire space. This corresponds to the undefined or nonterminating value. We describe the basic opens that $\bot$ lies in by the inverse image map
\begin{align*}
\bot \liesin \cdot &: \BOpen(A) \to \Open{1}
\\ \bot \liesin \mathsf{T} &\triangleq \top
\\ \bot \liesin \Lift{a} &\triangleq \bot.
\end{align*}

\begin{proposition}
$\bot$ indeed describes a point of $A_\bot$.
\end{proposition}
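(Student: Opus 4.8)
The plan is to verify that $\bot \liesin \cdot$ satisfies the three rules of Figure~\ref{fig:fspc-pt} defining a point of the formal space $\Lifted{A}$: \irule{join}, \irule{meet-0}, and \irule{meet-2}. The single observation that drives all three cases is that, by the defining equations $\bot \liesin \mathsf{T} \triangleq \top$ and $\bot \liesin \Lift{a} \triangleq \bot$, a basic open $a : \BOpen(\Lifted{A})$ satisfies $\bot \liesin a$ precisely when $a \inteq \mathsf{T}$; extending $\bot \liesin \cdot$ to opens by union, an open $U$ satisfies $\bot \liesin U$ precisely when $\mathsf{T} \in U$. So $\bot$ is, as intended, the point that lies only in the maximal open.

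With this in hand I would dispatch the cases. For \irule{meet-0}: since $\top$ contains every basic open we have $\mathsf{T} \in \top$, and $\bot \liesin \mathsf{T} = \top$, so $\bot \liesin \top$. For \irule{meet-2}: if $\bot \liesin a$ and $\bot \liesin b$, the observation forces $a \inteq \mathsf{T} \inteq b$; since $\mathsf{T} \le \mathsf{T}$, the basic open $\mathsf{T}$ lies in $\mathsf{T} \omeet \mathsf{T} = a \omeet b$, so $\bot \liesin (a \omeet b)$. For \irule{join}: if $\bot \liesin a$ and $a \cov U$, then $a \inteq \mathsf{T}$, hence $\mathsf{T} \cov U$, and Proposition~\ref{topcov} (which states that $\top \cov U$, equivalently $\mathsf{T} \cov U$, implies $\mathsf{T} \in U$) yields $\mathsf{T} \in U$, i.e., $\bot \liesin U$.

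There is no genuine obstacle here: the only case with any content is \irule{join}, and that content has already been isolated as Proposition~\ref{topcov} (itself proved by an elementary induction on covering derivations, using that $\mathsf{T}$ carries no covering axioms in $\Lifted{A}$). Given \ref{topcov}, the remaining verifications are immediate from unwinding definitions, so the proof is short; the main work was done in setting up $\Lifted{A}$ so that $\mathsf{T}$ is covered only trivially.
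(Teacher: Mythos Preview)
Your proposal is correct and follows essentially the same approach as the paper: \irule{meet-0} from $\bot \liesin \mathsf{T}$, \irule{meet-2} trivially since $\bot$ lies only in $\mathsf{T}$, and \irule{join} by invoking Proposition~\ref{topcov}. You simply unpack the \irule{meet-2} case a bit more explicitly than the paper does.
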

\begin{proof}
\begin{description}
\item[\irule{meet-0}] Follows from $\bot \liesin \mathsf{T}$.
\item[\irule{meet-2}] Trivially satisfied, since there are no nontrivial intersections of basic opens that $\bot$ lies in.
\item[\irule{join}] Follows from the previous proposition.
\end{description}
\end{proof}

The idea that $\bot$ is the ``least defined'' point can be made formal, in that it is minimal in terms of specialization order.
\begin{proposition}
For any (generalized) point $x : \Gamma \cto A_\bot$, $\bot \le x$ (in terms of specialization order).
\end{proposition}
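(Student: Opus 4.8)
The plan is to reduce the statement to a two-case check on basic opens, after first pinning down what the inequality even means. Since $\bot$ is a point $\One \cto A_\bot$ whereas $x$ maps out of $\Gamma$, the inequality $\bot \le x$ is to be read with $\bot$ promoted to the generalized point $\bot \circ {!} : \Gamma \cto A_\bot$, where $! : \Gamma \cto \One$ is the terminal map. Its inverse image map $({!})^* \circ \bot^*$ sends $\mathsf{T} \mapsto \top$ and $\Lift{a} \mapsto \bot$ for every basic open $a$ of $A$, because $\bot^*$ is given by $\bot \liesin \mathsf{T} = \top$, $\bot \liesin \Lift{a} = \bot$, and $({!})^*$ preserves $\top$ and the empty join $\bot$. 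Unfolding the definition of specialization order, the goal then becomes: $\bot^*(U) \le x^*(U)$ in $\Open{\Gamma}$ for every open $U$ of $A_\bot$.

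Next I would use that both $\bot^*$ and $x^*$ preserve joins and that every open of $A_\bot$ is a join of basic opens, so it suffices to verify the inequality on the two kinds of basic opens, $\Lift{a}$ and $\mathsf{T}$. For $\Lift{a}$ the left-hand side is $\bot^*(\Lift{a}) = \bot$, which lies below everything in $\Open{\Gamma}$, so the inequality is immediate. For $\mathsf{T}$ the left-hand side is $\bot^*(\mathsf{T}) = \top$; here I would invoke the earlier proposition that $\mathsf{T}$ is the top element of $A_\bot$, so $\top \le \mathsf{T}$ in $\Open{A_\bot}$, and then monotonicity of $x^*$ together with $x^*(\top) = \top$ gives $\top = x^*(\top) \le x^*(\mathsf{T})$, finishing this case and hence the proof.

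I do not anticipate any real obstacle: the entire content is unwinding definitions. The only step that needs any care is the bookkeeping in the first paragraph — being explicit that $\bot \le x$ compares the generalized point $\bot \circ {!}$ with $x$, and recording the resulting inverse image map on basic opens — after which the argument is a short case split driven by the definition of $\bot^*$ and by $\mathsf{T}$ being the top of $A_\bot$.
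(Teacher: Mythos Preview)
Your proof is correct and follows essentially the same approach as the paper. The paper's argument is terser—it simply observes that since $\bot$ lies only in $\mathsf{T}$, the inequality reduces to $\top \le x^*(\mathsf{T})$, which holds because $\mathsf{T} = \top$ and $x^*$ preserves $\top$—but your explicit case split on basic opens and your careful treatment of promoting $\bot$ along $!$ amount to the same reasoning spelled out in more detail.
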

\begin{proof}
Since $\bot$ lies only in $\mathsf{T}$, it suffices to show that $\top \le x^*(\mathsf{T})$. However, since $\mathsf{T} = \top$, this is equivalent to $\top \le x^*(\top)$, which is true of every continuous map.
\end{proof}

Having established some intuition about what lifted spaces represent, we can confirm they fulfill their original purpose:
\begin{theorem}
There is the bijective correspondence
\begin{mathpar}
\mprset{fraction={===}}
\inferrule*{A \pto B}
  {A \cto B_\bot}.
\end{mathpar}
\end{theorem}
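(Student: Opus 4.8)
The plan is to construct both directions of the bijection explicitly, using the concrete presentation of $B_\bot$ given above: basic opens $\Lift{b}$ for $b : \BOpen(B)$ together with a fresh top $\mathsf{T}$, the preorder mirroring that of $B$ on the $\Lift$-opens plus $\Lift{b} \le \mathsf{T}$, and covering axioms exactly the copies $\Lift{a} \cov \Lift{U}$ of those of $B$. Given a partial map $f : A \pto B$, i.e.\ a join- and binary-meet-preserving $f^* : \Open{B} \to \Open{A}$, I define $\bar{f} : A \cto B_\bot$ on the raw presentation of opens by $\bar{f}^*(U) \triangleq \left(\bigvee_{\Lift{b}\in U} f^*(b)\right) \vee \left(\bigvee_{\mathsf{T}\in U}\top\right)$; equivalently $\bar{f}^*(\Lift{V}) = f^*(V)$ and $\bar{f}^*(\mathsf{T}) = \top$, extended to all opens by joins. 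Conversely, given $g : A \cto B_\bot$, I define $\hat{g} : A \pto B$ by $\hat{g}^* \triangleq g^* \circ \Lift : \Open{B} \to \Open{A}$.

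Then I would run through the following checks. First, $\bar{f}$ is a continuous map: it respects the mutual-covering equivalence on $\Open{B_\bot}$ because $U \cov U'$ in $B_\bot$ implies $\bar{f}^*(U) \le \bar{f}^*(U')$, proved by induction on the covering derivation — reflexivity and the $\le$-rule are immediate since $f^*$ is monotone, and the only substantive case is an adjoined axiom $\Lift{a} \cov \Lift{U}$, where the desired $f^*(a) \le f^*(U)$ is precisely the statement that $f^*$ carries the $B$-cover $a \cov U$ to an $A$-cover, a consequence of $f^*$ preserving joins. Preservation of joins is by construction; preservation of $\top$ holds because $\top_{B_\bot} = \mathsf{T}$ as opens (every $\Lift{b} \le \mathsf{T}$, together with the earlier fact that $\top \le \mathsf{T}$), so $\bar{f}^*(\top_{B_\bot}) = \top$ — and it is exactly here that the partiality of $f$ is absorbed, since $\bar{f}^*(\Lift{\top}) = f^*(\top)$ may fail to equal $\top$ while $\Lift{\top} \neq \top_{B_\bot}$; preservation of binary meets reduces, via $\Lift{a}\wedge\Lift{b} = \Lift{(a\wedge b)}$ and $\Lift{a}\wedge\mathsf{T} = \Lift{a}$ (both read off the preorder of $B_\bot$), to preservation of binary meets by $f^*$. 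Second, $\hat{g}$ is a partial map: $\hat{g}^* = g^* \circ \Lift$ preserves joins and binary meets because both $\Lift$ and $g^*$ do, but need not preserve $\top$ since $\hat{g}^*(\top_B) = g^*(\Lift{\top})$ and $\Lift{\top} \neq \top_{B_\bot}$.

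Third, the two assignments are mutually inverse: $\widehat{\bar{f}}^*(b) = \bar{f}^*(\Lift{b}) = f^*(b)$ gives $\widehat{\bar{f}} = f$; and $\overline{\hat{g}}^*$ agrees with $g^*$ on $\Lift{b}$ by definition and on $\mathsf{T}$ because $\overline{\hat{g}}^*(\mathsf{T}) = \top = g^*(\top_{B_\bot}) = g^*(\mathsf{T})$ (using that $g^*$ preserves $\top$), and since both preserve joins they agree on all opens, so $\overline{\hat{g}} = g$. I would then remark that the bijection is natural in $A$ and $B$, which upgrades it to the adjunction making $\cdot_\bot$ the right adjoint of the forgetful functor $\cat{FSpc} \to \cat{FSpc}_p$ (and hence a monad).

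The step I expect to be the main obstacle is the well-definedness in the first check: one must verify that the covering axioms freshly copied onto the $\Lift$-opens of $B_\bot$ generate no covers beyond those coming from $B$ — equivalently, that the induction on covering derivations in $B_\bot$ genuinely bottoms out in the copied axioms and the structural rules, never in a cover involving $\mathsf{T}$ nontrivially. This is precisely the content of the claim that $\Lift$ descends to a join- and meet-preserving embedding $\Open{B} \hookto \Open{B_\bot}$ identifying $B$ with the open subspace $\up(B)$; once that is in hand, every open of $B_\bot$ is either $\top_{B_\bot}$ or uniquely of the form $\Lift{V}$, and all the remaining verifications are routine.
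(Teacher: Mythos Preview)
Your proposal is correct and follows essentially the same approach as the paper: the paper defines the two directions by exactly your formulas $g^*(\mathsf{T}) \triangleq \top$, $g^*(\Lift{b}) \triangleq f^*(b)$ and $f^*(U) \triangleq g^*(\Lift{U})$, then asserts the preservation properties with a one-line justification each. Your write-up is considerably more detailed than the paper's (which does not explicitly verify mutual inverseness or discuss the well-definedness concern you flag), but the underlying construction and argument are the same.
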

\begin{proof}
Given $f : A \pto B$, we define $g : A \cto B_\bot$ by
\begin{align*}
g^* &: \BOpen(B_\bot) \to \Open{A}
\\ g^*(\mathsf{T}) &\triangleq \top
\\ g^*(\Lift{b}) &\triangleq f^*(b).
\end{align*}

The map $g^*$ preserves $\top$ since $g^*(\mathsf{T}) = \top$ and preserves joins and binary meets since $f^*$ does.

Conversely, given $g : A \cto B_\bot$, we define $f : A \pto B$ by
\begin{align*}
f^* &: \Open{B} \to \Open{A}
\\ f^*(U) &\triangleq g^*(\Lift{U}).
\end{align*}
The map $f^*$ preserves joins and binary meets since $g^*$ and $\Lift$ each do.
\end{proof}

This correspondence in natural in $A$ and $B$, giving an adjunction and thus a monad.
It remains to show that $\cdot_{\gls{Lifted}}$ in fact defines a \emph{strong} monad. Its tensorial strength
$s : A \times \Lifted{B} \pto A \times B$ is defined by the inverse image map
\begin{align*}
s^* &: \BOpen(A \times B) \to \Open{A \times \Lifted{B}}
\\  s^*(a \times b) &\triangleq a \times \Lift{b}.
\end{align*}
The inverse image map $s^*$ preserves joins and binary meets since $\Lift$ does. We claim (but do not prove) that $s$ satisfies the strong monad laws.

Lifted spaces give us a good example of a nontrivial open embedding: the map $\up : A \cto \Lifted{A}$ that is the ``return'' operation of the lifting monad $\Lifted{\cdot}$ is an open embedding. This allows us to view $A$ as an open subspace of $\Lifted{A}$. We can define $\up$ via the inverse and direct image maps
\begin{align*}
\up^* &: \BOpen(\Lifted{A}) \to \Open{A}
\\ \up^*(\mathsf{T}) &\triangleq \top
\\ \up^*(\Lift{a}) &\triangleq a
\\
\\ \up_! &: \Open{A} \to \Open{\Lifted{A}}
\\ \up_!(U) &\triangleq \Lift{U}.
\end{align*}

\begin{proposition}
\grammar{This} indeed defines an open embedding $\up : A \hookto \Lifted{A}$.
\end{proposition}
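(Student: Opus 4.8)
The plan is to invoke the characterization of open embeddings proved above: a continuous map $f : A \cto B$ is an open embedding exactly when it is an open map whose direct image map $f_!$ preserves binary meets. So I must (i) check that the stated $\up^*$ is the inverse image map of an honest continuous map and that $\up_!$ is a legitimate direct image map, (ii) verify the open-map conditions, namely $\up_! \dashv \up^*$ and the Frobenius law, and (iii) verify that $\up_!$ preserves binary meets. Step (iii) is immediate, since $\up_!(U \wedge V) = \Lift{(U \wedge V)} = \Lift{U} \wedge \Lift{V} = \up_!(U) \wedge \up_!(V)$ using that $\Lift$ preserves binary meets.

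For (i): because $\Lift$ preserves all joins, the two defining equations $\up^*(\mathsf{T}) = \top$ and $\up^*(\Lift{a}) = a$ on generating basic opens extend uniquely to a join-preserving map $\Open{\Lifted{A}} \to \Open{A}$, and one computes $\up^*(\Lift{U}) = U$ for every $U : \Open{A}$; this map sends each covering axiom $\Lift{a} \cov \Lift{U}$ of $\Lifted{A}$ to the cover $a \cov U$ of $A$, so it respects the covering structure and is well defined on opens. It sends $\mathsf{T}$ to $\top$, and, checking on basic opens and using that $\Lift$ and binary meet both preserve joins, it preserves binary meets; hence $\up$ is a continuous map. Dually, $\up_! = \Lift{(-)}$ is monotone and join-preserving by construction.

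For (ii): both $\up^*$ and $\up_!$ preserve joins, so both the Galois connection $\up_! \dashv \up^*$ and the Frobenius identity $\up_!(U \wedge \up^*(W)) = \up_!(U) \wedge W$ need only be checked with $W$ ranging over the generating basic opens $\mathsf{T}$ and $\Lift{b}$. The adjunction follows from $\up^* \circ \up_! = \mathsf{id}_{\Open{A}}$ (from $\up^*(\Lift{U}) = U$) together with $\up_! \circ \up^* \le \mathsf{id}$: on $\mathsf{T}$ this reads $\Lift{\top} \le \mathsf{T}$, which holds since $\mathsf{T}$ is the top element of $\Lifted{A}$, and on $\Lift{b}$ it is the equality $\Lift{b} = \Lift{b}$. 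For Frobenius, on $W = \mathsf{T}$ both sides equal $\up_!(U)$, and on $W = \Lift{b}$ both sides equal $\Lift{U} \wedge \Lift{b}$, using $\up^*(\Lift{b}) = b$ and that $\Lift$ preserves binary meets. Combining (i)--(iii) with the characterization theorem gives that $\up$ is an open embedding, equivalently $A \iso \{\Lifted{A} \mid \Lift{\top}\}$.

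There is no real obstacle here: every step is a one-line calculation, and the only input beyond the definitions is that $\Lift$ is a homomorphism for joins and binary meets, which is built into the construction of $\Lifted{A}$. The one place that needs a moment of care is the counit inequality $\up_! \circ \up^* \le \mathsf{id}$ on the basic open $\mathsf{T}$, which is precisely where the fact $\mathsf{T} = \top$ (the proposition that $\mathsf{T}$ is the top element of $\Lifted{A}$) is used. An alternative route, avoiding the characterization theorem, is to exhibit the homeomorphism $A \iso \{\Lifted{A} \mid \Lift{\top}\}$ directly, taking the two inverse image maps to be $\up^*$ corestricted along $\iota[\Lift{\top}]$ and $U \mapsto \Lift{U}$, and checking that they are mutually inverse continuous maps.
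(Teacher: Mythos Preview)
Your proposal is correct and follows essentially the same route as the paper: verify that $\up^*$ defines a continuous map (join, $\top$, binary meets), that $\up_! = \Lift$ preserves joins and binary meets, and that $\up_! \dashv \up^*$ together with Frobenius hold, then conclude via the characterization of open embeddings. The paper actually leaves the adjunction and Frobenius as ``straightforward to confirm,'' so your explicit unit/counit check (in particular the observation that $\Lift{\top} \le \mathsf{T}$ is the only nontrivial counit inequality) supplies detail the paper omits.
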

\begin{proof}
We first confirm $\up$ defines a continuous map:
\begin{description}
\item[\irule{meet-0}] 
\[
\up^*(\top) \ge \up^*(\mathsf{T}) = \top.
\]
\item[\irule{meet-2}] Follows from the facts
\begin{align*}
\mathsf{T} \omeet \Lift{a} &= \Lift{a}
\\ \Lift{a} \omeet \Lift{b} &= \Lift(a \omeet b).
\end{align*}
\item[\irule{join}] It suffices to check every axiom of $\Lifted{A}$: every axiom is of the form $\Lift{a} \cov_{\Lifted{A}} \Lift{U}$, where $a \cov_A U$. So we must confirm
\[
\up^*(\Lift{a}) \cov \up^*(\Lift{U}),
\]
which is equivalent to $a \cov U$, which we know by assumption.
\end{description}

We will now confirm that $\up_! = \Lift$ preserves joins and binary meets. We observe that it preserves binary meets since
\[
\Lift{a} \omeet \Lift{b} = \Lift{(a \omeet b)}.
\]
It preserves joins since if $a \cov_A U$ then $\Lift{a} \cov_{\Lifted{A}} \Lift{U}$.

Finally, it is straightforward to confirm that $\up_! \dashv \up^*$ and that the Frobenius law holds.
\end{proof}

\subsubsection*{Nondeterministic powerspaces}

As mentioned previously, the ``forgetful'' functor from $\cat{FSpc}$ to $\cat{FSpc}_{nd}$ has a right adjoint, such that there is a monad $\PLowerP : \cat{FSpc} \to \cat{FSpc}$ giving a correspondence
\begin{mathpar}
\mprset{fraction={===}}
\inferrule*{A \ndto B}
  {A \cto \PLowerP(B)}
\end{mathpar}
between continuous maps and their nondeterministic counterparts \cite{vickersdoublepowerlocale}. By this correspondence, we can think of the space $\PLowerP(B)$ as the space of nondeterministic values. Spaces of the form 
$\PLowerP(B)$ are known as \emph{positive lower powerspaces} (also known as positive or inhabited Hoare powerlocales).

There is a map $\possibly : \Open{A} \to \Open{\PLowerP(A)}$ (read ``possibly'') that can be intuitively understood in the following sense: if $U : \Open{A}$ is interpreted as a property of points of $A$, and a point $s : \One \cto \PLowerP(A)$ (where $\One$ is the one-point space) is interpreted as a subspace of $A$, $\possibly U : \Open{\PLowerP(A)}$ holds of $s$ if $U$ holds of some point in $s$.

Just like $\Lift$ for lifted spaces, the $\possibly$ operator preserves all the structure in the lattice $\Open{A}$ that ought to be preserved in $\Open{\PLowerP(A)}$: that is, it preserves joins and $\top$.
\grammar{This} implies a particular fact that is critical in the computational properties of the nondeterministic powerspaces:
\begin{theorem}
For every cover $\top \le \bigvee_{i : I} V_i$ in $A$ there is a cover $\top \le \bigvee_{i : I} \possibly V_i$ in $\PLowerP(A)$.
\end{theorem}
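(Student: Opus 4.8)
The plan is to read off the statement directly from the two structural properties of the possibility modality $\possibly : \Open{A} \to \Open{\PLowerP(A)}$ that were just recorded: that it preserves joins and that it preserves $\top$. The first preliminary observation I would make is that preservation of joins already forces $\possibly$ to be monotone, since for $U \le V$ we have $U \vee V = V$, hence $\possibly U \vee \possibly V = \possibly(U \vee V) = \possibly V$, i.e.\ $\possibly U \le \possibly V$. This is the only ``extra'' fact needed, and it is immediate.

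Then the argument is a three-line computation. Starting from the assumed cover $\top \le \bigvee_{i : I} V_i$ in $A$, apply $\possibly$ to both sides using monotonicity to obtain $\possibly \top \le \possibly\!\left(\bigvee_{i : I} V_i\right)$. Now rewrite the left-hand side using preservation of $\top$, namely $\possibly \top = \top$, and the right-hand side using preservation of joins, namely $\possibly\!\left(\bigvee_{i : I} V_i\right) = \bigvee_{i : I} \possibly V_i$. This yields exactly $\top \le \bigvee_{i : I} \possibly V_i$, a cover of $\PLowerP(A)$ indexed by the same set $I$, which is what is claimed.

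There is no real obstacle: all the work has already been done in establishing that $\possibly$ preserves joins and $\top$, which in turn is a consequence of how the positive lower powerspace is presented. The one point I would flag explicitly is that the theorem is stated for the \emph{positive} variant $\PLowerP$ precisely because positivity is what makes $\possibly \top = \top$ hold; in the non-positive lower powerspace $\PLower$ the empty overt subspace is a point that fails to lie in $\possibly \top$, so $\possibly \top \ne \top$ there and the derivation breaks at its first step. Mentioning this makes clear why the hypothesis cannot be weakened, and why the conclusion is exactly a ``splitting with a cover'' statement: any point of $\PLowerP(A)$, presented with the transported cover $\top \le \bigvee_{i:I}\possibly V_i$, can compute some index $i$ such that the corresponding overt subspace meets $V_i$.
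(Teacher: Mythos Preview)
Your proposal is correct and is essentially the same argument as the paper's. The paper compresses the three steps into a single chain of equalities, $\bigvee_{i:I}\possibly V_i = \possibly\bigl(\bigvee_{i:I}V_i\bigr) = \possibly\top = \top$, using that $\top \le X$ forces $X = \top$; you instead invoke monotonicity explicitly, but this is a cosmetic difference.
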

\begin{proof}
Suppose $\top \le \bigvee_{i : I} V_i$ in $A$. Since $\possibly$ preserves $\top$ and joins,
\[
\bigvee_{i : I} \possibly V_i
= \possibly \left( \bigvee_{i : I} V_i \right)
= \possibly \top
= \top.
\]
\end{proof}
The previous theorem can be understood computationally as allowing simulation of some nondeterministic result.

Classically, the points of $\PLowerP(B)$ are in correspondence with closed nonempty subspaces of $B$. Constructively, these subspaces also are \emph{overt} \cite{vickersdoublepowerlocale}, which is helpful for some computational tasks.

The specialization order on $\PLowerP(A)$ corresponds to subspace inclusion of possible values. 

\citet{vickersdoublepowerlocale} describes how to construct this powerspace, as well the powerspace for $\cat{FSpc}_{nd, p}$, predicatively within \textbf{FSpc}.

\citet{vickersdoublepowerlocale} shows that $\PLowerP$ defines a monad, but it remains to show that $\PLowerP$ in fact defines a \emph{strong} monad. Its tensorial strength
$s : A \times \PLowerP(B) \ndto A \times B$ is defined by the inverse image map
\begin{align*}
s^* &: \BOpen(A \times B) \to \Open{A \times \PLowerP(B)}
\\  s^*(a \times b) &\triangleq a \times \possibly b.
\end{align*}
The inverse image map $s^*$ preserves joins and $\top$ since $\possibly$ does.

\subsubsection*{Lower powerspaces}

As mentioned previously, the ``forgetful'' functor from $\cat{FSpc}$ to $\cat{FSpc}_{nd, p}$ has a right adjoint, such that there is a monad $\PLower : \cat{FSpc} \to \cat{FSpc}$ giving a correspondence
\begin{mathpar}
\mprset{fraction={===}}
\inferrule*{A \ndpto B}
  {A \cto \PLower(B)}
\end{mathpar}
between continuous maps and their nondeterministic and partial counterparts \cite{vickersdoublepowerlocale}. By this correspondence, we can think of the space $\PLower(B)$ as the space of nondeterministic and partial values.
Spaces of the form $\PLower(B)$ are known as \emph{lower powerspaces} (also known as Hoare powerlocales).

There is a map $\possibly : \Open{B} \to \Open{\PLower(B)}$ (read ``possibly'') which distributes over joins but not necessarily meets. In particular,
\[
\possibly U \le \bigvee_{i : I} \possibly V_i
\]
holds in $\Open{\PLower(B)}$ whenever $U \le \bigvee_{i : I} V_i$ in $\Open{B}$.

The monad $\PLower$ is, like $\PLowerP$, a strong monad, and its tensorial strength is almost the same: $s : A \times \PLower(B) \ndpto A \times B$ is defined by the inverse image map
\begin{align*}
s^* &: \BOpen(A \times B) \to \Open{A \times \PLower(B)}
\\  s^*(a \times b) &\triangleq a \times \possibly b.
\end{align*}
The inverse image map $s^*$ preserves joins since $\possibly$ does.


\section{Marshall language details}
\label{a:marshall}

In this appendix, we describe the Marshall language and our modifications to it: we describe its syntax, denotational semantics based on constructive topology, and operational semantics, and explain the connection between them.

\subsection{Syntax}

The Marshall language's syntax is described in Fig. \ref{marshall-syntax}.
Our syntax extensions are presented by Fig. \ref{marshall-syntax-ext}.
Fig. \ref{marshall-types} presents its typing rules,
and Fig. \ref{marshall-types-ext} presents the additional typing rules for the constructs that we added.

\begin{figure}
\small
\begin{align*}
\text{expression}\ e ::=\ &x
\\
\gor &\lstinline!True!
\gor \lstinline!False!
\\
\gor &q
\\
\gor & \lstinline!cut!\ x : r\ \lstinline!left!\ e\ \lstinline!right!\ e 
\\
\gor & e\ \lstinline!/\\!\ \ldots\ \lstinline!/\\!\ e
\\
\gor & e\ \lstinline!\\/!\ \ldots\ \lstinline!\\/!\ e
\\
\gor & e < e
\\ \gor & \lstinline!exists!\ x : r \lstinline!,!\ e
\gor \lstinline!forall!\ x : r \lstinline!,!\ e
\\ \gor & \lstinline!(! e_1\ \lstinline!,!\ \ldots\ \lstinline!,!\ e_n  \lstinline!)! 
\\
\gor &e \lstinline!#! k
\\ \gor & \lstinline!fun!\ x\ \lstinline!:!\ t\ \lstinline!=>!\ e
\\ \gor & e \ e
\\ \gor & e\otimes e
\gor e\ \lstinline!^!\ k
\\ \gor & \lstinline!let!\ x\ \lstinline!=!\ e\ \lstinline!in!\ e
\\
\\ \text{type}\ t ::=\ &
\lstinline!real!
\gor \lstinline!prop!
\gor t\ \lstinline!*!\ \ldots\ \lstinline!*!\ t
\gor t\,\,\, \lstinline!->!\ t
\\ \text{arith op}\ \otimes ::=\ &
\lstinline!+!\
\gor \lstinline!-!\
\gor \lstinline!*!\
\gor \lstinline!/!\
\\ \text{variable}\ x \phantom{::=\ }&
\\ \text{rational}\ q \phantom{::=\ }&
\\ \text{natural}\ k\phantom{::=\ }&
\\ \text{range limit}\ l ::=\ &
\lstinline!-inf!
\gor \lstinline!inf!
\gor \lstinline!q!
\\ \text{range}\ r ::=\ &
\lstinline!(!\ l\ \lstinline!,!\ l\ \lstinline!)!
\gor \lstinline![!\ l\ \lstinline!,!\ l\ \lstinline!]!
\end{align*}
\caption{Marshall syntax}
\label{marshall-syntax}
\end{figure}

\begin{figure}
\small
\begin{align*}
\text{expression}\ e ::=\
 &e\ \lstinline!\~>!\ e
\\ \gor &e \lstinline!||!\ \ldots\ \lstinline!||!\ e
\\ \gor &\lstinline!mkbool!\ e\ e
\gor \lstinline!is_true!\ e
\gor \lstinline!is_false!\ e
\\
\\ \text{type}\ t ::=\ &
\lstinline!bool!
\end{align*}
\caption{Marshall syntax extensions}
\label{marshall-syntax-ext}
\end{figure}

\newcommand{\lcode}[1]{\texttt{#1}}

\begin{figure}
\begin{mathpar}
\inferrule*
  { }
  {\Gamma \vdash \lcode{True} : \lcode{prop}}

\inferrule*
  { }
  {\Gamma \vdash \lcode{False} : \lcode{prop}}
  
\inferrule*
  { }
  {\Gamma \vdash q : \lcode{real}}
  
\inferrule*
  { \Gamma, x : \lcode{real} \vdash e_1 : \lcode{prop}
  \\ \Gamma, x : \lcode{real} \vdash e_2 : \lcode{prop}}
  { \Gamma \vdash \lcode{cut}\ x : r\ \lcode{left}\ e_1\ \lcode{right}\ e_2 : \lcode{real}} 

\inferrule*
  {\Gamma \vdash e_1 : \lcode{prop}
  \\ \ldots
  \\ \Gamma \vdash e_n : \lcode{prop} }
  {\Gamma \vdash e_1\ \lcode{/\textbackslash} \ldots \lcode{/\textbackslash}\ e_n : \lcode{prop}}
  
\inferrule*
  {\Gamma \vdash e_1 : \lcode{prop}
  \\ \ldots
  \\ \Gamma \vdash e_n : \lcode{prop} }
  {\Gamma \vdash e_1\ \lcode{\textbackslash/} \ldots \lcode{\textbackslash/}\ e_n : \lcode{prop}}
  
\inferrule*
  {\Gamma \vdash e_1 : \lcode{real}
  \\ \Gamma \vdash e_2 : \lcode{real} }
  {\Gamma \vdash e_1 < e_2 : \lcode{prop}}
\\
\inferrule*
  {\Gamma, x : \lcode{real} \vdash e : \lcode{prop}}
  {\Gamma \vdash \lcode{exists}\ x : r\lcode{,}\ e : \lcode{prop}}
  
\inferrule*
  {\Gamma, x : \lcode{real} \vdash e : \lcode{prop}}
  {\Gamma \vdash \lcode{forall}\ x : r\lcode{,}\ e : \lcode{prop}}
  
\inferrule*
  {\Gamma \vdash e_1 : t_1
  \\ \ldots
  \\ \Gamma \vdash e_n : t_n}
  {\Gamma \vdash \lcode{(}e_1\lcode{,} \ldots\lcode{,}\ e_n\lcode{)} : t_1\ \lcode{*} \ldots \lcode{*}\ t_n}
  
\inferrule*
  {\Gamma \vdash e: t_1\ \lcode{*} \ldots \lcode{*}\ t_n}
  {\Gamma \vdash e\ \lcode{\#}\ k : t_k}
  
\inferrule*
  {\Gamma, x : t_1 \vdash e : t_2}
  {\Gamma \vdash \lcode{fun}\ x : t\ \lcode{=>}\ e : t_1\ \lcode{->}\ t_2}
  
\inferrule*
  {\Gamma \vdash e_1 : t_A\ \lcode{->}\ t_B
  \\ \Gamma \vdash e_2 : t_A}
  {\Gamma \vdash e_1\ e_2 : t_B}

\inferrule*
  {\Gamma \vdash e_1 : \lcode{real}
  \\ \Gamma \vdash e_2 : \lcode{real}}
  {\Gamma \vdash e_1 \otimes e_2 : \lcode{real}}

\inferrule*
  {\Gamma \vdash e : \lcode{real}}
  {\Gamma \vdash e\ \lcode{\^{}}\ k : \lcode{real}}
  
\inferrule*
  {\Gamma \vdash e_1 : t_1
  \\ \Gamma, x : t_1 \vdash e_2 : t_2}
  {\Gamma \vdash \lcode{let}\ x\ \lcode{=}\ e_1\ \lcode{in}\ e_2 : t_2}
\end{mathpar}
\caption{Marshall typing rules}
\label{marshall-types}
\end{figure}

\begin{figure}
\begin{mathpar}
\inferrule*
  { \mathsf{base}(t)
  \\ \Gamma \vdash e_1 : \lcode{prop}
  \\ \Gamma \vdash e_2 : t }
  {\Gamma \vdash e_1\ \lcode{\textasciitilde>}\ e_2 : t}
  
\inferrule*
  { \mathsf{base}(t)
  \\ \Gamma \vdash e_1 : t
  \\ \ldots
  \\ \Gamma \vdash e_n : t}
  {\Gamma \vdash e_1\lcode{||} \ldots\lcode{||}\ e_n : t}
  
\inferrule*
  {\Gamma \vdash e_1 : \lcode{prop}
  \\ \Gamma \vdash e_2 : \lcode{prop} }
  {\Gamma \vdash \lcode{mkbool}\ e_1\ e_2 : \lcode{bool}}
\\
\inferrule*
  {\Gamma \vdash e : \lcode{bool}}
  {\Gamma \vdash \lcode{is\_true}\ e : \lcode{prop}}  

\inferrule*
  {\Gamma \vdash e : \lcode{bool}}
  {\Gamma \vdash \lcode{is\_false}\ e : \lcode{prop}}  
\end{mathpar}
\caption{Marshall typing rule extensions}
\label{marshall-types-ext}
\end{figure}

\subsection{Denotational semantics}

We can ascribe the original version of Marshall language a categorical semantics in the category of presheaves over $\cat{FSpc}$. Types denote objects (Fig. \ref{marshall-type-denote}) and expressions denote maps (Fig. \ref{marshall-term-denote}) in this category. Presumably, it would be possible to restrict to some subcategory of sheaves, but we will not do so, as we are primarily interested in the behavior of the first-order fragment of the language. A first-order expression denotes a continuous map.

The Dedekind cut constructor is generally interpreted according to $\mathsf{from\_cut}$, whose inverse image map is defined as\footnote{
Because $\R$ is locally compact, the exponential object $\Sierp^{\R}$ exists, but even if it did not, this definition could be interpreted appropriately within the presheaf category.}
\begin{align*}
\mathsf{from\_cut} &: \Sierp^{\R} \times \Sierp^{\R} \ndpto \R
\\ \left(\mathsf{from\_cut}(L, U)\right)^*(B_\varepsilon(q)) &\triangleq
     L(q - \varepsilon) \wedge U(q + \varepsilon),
\end{align*}
recalling that it suffices to define an inverse image map by its behavior on basic opens. 
However, we only consider a Marshall Dedekind cut to be a valid expression (both in its original version and in our modified version) if it in fact denotes a total and deterministic real number.

Rational numbers are denoted in the following straightforward manner:
\begin{align*}
\mathsf{from\_rat} &: \rat \cto \R
\\ \left(\mathsf{from\_rat}(q')\right)^*(B_\varepsilon(q)) &\triangleq
     q - \varepsilon < q' < q + \varepsilon.
\end{align*}

In our modification to Marshall that enables partiality and nondeterminism, we denote into the category of presheaves over $\cat{FSpc}_{nd, p}$ rather than $\cat{FSpc}$. Accordingly, first-order expressions denote maps in $\cat{FSpc}_{nd, p}$. Our extended denotational semantics is presented for types in Fig. \ref{marshall-type-denote-ext} and for terms in Fig. \ref{marshall-term-denote-ext}.

We add nondeterministic joins \lstinline!||! as well as the partial restriction operator \lstinline!~>!, but only allow these operations to be applied on expressions of \emph{base types}, which are those types which don't involve any function arrows. Every base type corresponds to (the Yoneda embedding of) a space. We write $\mathsf{base}(t)$ to indicate that $t$ is a base type.

The restriction operator \lstinline!~>! that we added induces partiality. We define the restriction operator $\restrict{\cdot}{\cdot} : A \times \Sigma \pto A$ by the inverse image map
\begin{align*}
\left(\restrict{e}{U}\right)^*(V) \triangleq U \wedge e^*(V).
\end{align*}

\begin{figure}
\small
\begin{align*}
\denote{\lstinline!real!} &\triangleq \R
\\ \denote{\lstinline!prop!} &\triangleq \Sigma
\\ \denote{t_1\ \lstinline!*!\ \ldots\ \lstinline!*!\ t_n} &\triangleq
    \denote{t_1} \times \ldots \times \denote{t_n}
\\ \denote{t_1\,\,\, \lstinline!->!\ t_2} &\triangleq \denote{t_1} \to \denote{t_2}
\end{align*}
\caption{Marshall denotation of types}
\label{marshall-type-denote}
\end{figure}

\newcommand{\subst}[3]{ {[#2/#1]#3} }
\begin{figure}
\small
\begin{align*}
\denote{\lstinline!True!} &\triangleq \top_\Sigma
\\ \denote{\lstinline!False!} &\triangleq \bot_\Sigma
\\ \denote{q} &\triangleq \mathsf{from\_rat}(q)
\\ \denote{\lstinline!cut!\ x : r\ \lstinline!left!\ e_1\ \lstinline!right!\ e_2}
   &\triangleq \mathsf{from\_cut}(\lambda x.\ \denote{e_1}, \lambda x.\ \denote{e_2})
\\ \denote{e_1\ \lstinline!/\\!\ \ldots\ \lstinline!/\\!\ e_n}
  &\triangleq \denote{e_1} \wedge \ldots \wedge \denote{e_n}
\\ \denote{e_1\ \lstinline!\\/!\ \ldots\ \lstinline!\\/!\ e_n}
  &\triangleq \denote{e_1} \wedge \ldots \wedge \denote{e_n}
\\ \denote{e_1 < e_2} &\triangleq \denote{e_1} < \denote{e_2}
\\ \denote{\lstinline!exists!\ x : r \lstinline!,!\ e}
  &\triangleq \exists x : \denote{r}.\ \denote{e}
\\ \denote{\lstinline!forall!\ x : r \lstinline!,!\ e}
  &\triangleq \forall x : \denote{r}.\ \denote{e}
\\ \denote{e \lstinline!#! k}
  &\triangleq \pi_k \denote{e}
\\ \denote{\lstinline!fun!\ x\ \lstinline!:!\ t\ \lstinline!=>!\ e}
  &\triangleq \lambda x. \denote{e}
\\ \denote{e_1 \ e_2}
  &\triangleq \denote{e_1}\ \denote{e_2}
\\ \denote{e_1 \otimes e_2}
  &\triangleq \denote{\otimes}(\denote{e_1}, \denote{e_2})
\\ \denote{e_1\ \lstinline!^!\ k}
  &\triangleq \denote{e_1}^k
\\ \denote{\lstinline!let!\ x\ \lstinline!=!\ e_1\ \lstinline!in!\ e_2}
  &\triangleq{ \denote{\subst{x}{e_1}{e_2}}}
\end{align*}
\caption{Marshall denotation of terms}
\label{marshall-term-denote}
\end{figure}

\begin{figure}
\small
\begin{align*}
\denote{\lstinline!bool!} &\triangleq \bool
\end{align*}
\caption{Marshall extension denotation of types}
\label{marshall-type-denote-ext}
\end{figure}

\begin{figure}
\small
\begin{align*}
\denote{e_1\ \lstinline!\~>!\ e_2} &\triangleq \restrict{\denote{e_2}}{\denote{e_1}}
\\ \denote{e_1 \lstinline!||!\ \ldots\ \lstinline!||!\ e_n}
  &\triangleq \denote{e_1} \sqcup \ldots \sqcup \denote{e_n}
\\ \denote{\lstinline!mkbool!\ e_1\ e_2}
  &\triangleq \restrict{\btrue}{\denote{e_1}} \sqcup \restrict{\bfalse}{\denote{e_2}}
\\ \denote{\lstinline!is_true!\ e}
  &\triangleq \denote{e} = \btrue
\\ \denote{\lstinline!is_false!\ e}
  &\triangleq \denote{e} = \bfalse
\end{align*}
\caption{Marshall extension denotation of terms}
\label{marshall-term-denote-ext}
\end{figure}

\subsection{Operational semantics}

Fundamentally, Marshall expressions are executed in two steps. First, a Marshall term is normalized to a normal form. Some normalization is done under binders. The normalization procedure eliminates local definitions and redexes, reducing applications of lambda expressions and tuple projections of tuple values.

Our modified version of Marshall also handles nondeterministic joins during normalization. All expressions are reduced to an $n$-ary join of join-free expressions.

After normalization, Marshall expressions are successively refined until they are fine enough to return a result. Expressions of function types will be lambda expressions (since they have already been normalized) and return immediately. For expressions of base types, the evaluation relation $e \Downarrow_p U$, described in Fig. \ref{marshall-eval}, says when a Marshall expression $e$ is sufficiently refined to know that $\denote{e}$ lies in $U$ (formally, $\denote{e}^*U = \top$), where $p : \rat^+$ target precision that allows the user to control when a real number is sufficiently precise that it should no longer be refined.

The function $\mathsf{real\_approx}$ approximates the result of real-valued expressions using interval arithmetic and Kaucher multiplication \cite{marshall, intawi}. For a Dedekind cut, $\mathsf{real\_approx}$ simply uses the range over which the cut variable is bound as the interval approximation.

We extended Marshall's evaluation relation to handle Boolean expression and nondeterministic joins, as shown in Fig. \ref{marshall-eval-ext}. The rule for evaluation of nondeterministic joins says that if any expression in a join evaluates to some $U$, then the entire join does.

\newcommand{\done}[2]{#1 \Downarrow_p #2}

\begin{figure}
\begin{mathpar}
\inferrule*
  { }
  {\done{\lcode{True}}{\cdot = \strue}}
  
\inferrule*
  {\mathsf{real\_approx}(e) = (a, b) \\ b - a < p}
  {\done{e}{B_{(b-a)/2}\left(\frac{a + b}{2}\right)}}
  
\inferrule*
  {\done{e_1}{U_1}
  \\ \ldots
  \\ \done{e_n}{U_n} }
  {\done{\lcode{(}e_1\lcode{,} \ldots\lcode{,}\ e_n\lcode{)}}
             {(U_1, \ldots, U_n)}
  }
\end{mathpar}
\caption{Marshall evaluation relation}
\label{marshall-eval}
\end{figure}

\begin{figure}
\begin{mathpar}
\inferrule*
  {\done{e_i}{U}}
  {\done{e_1\lcode{||} \ldots\lcode{||}\ e_n}{U}}
\\
\inferrule*
  { }
  {\done{ \lcode{mkbool}\ \lcode{True}\ e_2 }{ \cdot = \btrue }}

\inferrule*
  { }
  {\done{ \lcode{mkbool}\ e_1\ \lcode{True} }{\cdot = \bfalse }}
\end{mathpar}
\caption{Extensions to Marshall evaluation relation}
\label{marshall-eval-ext}
\end{figure}

If a (normalized) Marshall expression is not yet fine enough to evaluate to a result, one step of refinement is performed, which we denote here by the function $r$. Refinements are performed repeatedly until the expression evaluates to a result. An expression may be refined repeatedly without ever evaluating, meaning that the expression diverges. Refinements preserve the meanings of expressions, i.e., $\denote{r(e)} = \denote{e}$.

Refinement of \lstinline!prop!-valued expressions proceeds as follows: there are procedures for computing lower and upper approximants of any \lstinline!prop!-valued expression. That is, given an expression $\phi$, Marshall computes expressions $\phi^-$ and $\phi^+$ such that $\denote{\phi^-} \le \denote{\phi} \le \denote{\phi^+}$ (where $\le$ is specialization order). \citet{marshall} explains how these approximants are computed. If $\phi^- = \lstinline!True!$, then $r(\phi) = \lstinline!True!$, and if $\phi^+ = \lstinline!False!$, then $r(\phi) = \lstinline!False!$. If neither of these is true, then refinement proceeds as follows.

Many constructs do not directly participate in refinement; refinement simply proceeds into their arguments. For instance, $r(e_1 \otimes e_2) = r(e_1) \otimes r(e_2)$. Refinement works similarly for exponentiation and \lstinline!prop!-valued conjunction and disjunction.

Refinement of Dedekind cuts, existential quantifiers, and universal quantifiers, is a more sophisticated process, detailed further by \citet{marshall}. Dedekind cuts are refined to cuts where, fundamentally, the range that the variable is cut over is more narrow. Existentially quantified statements are refined by splitting into disjunctions of existentially quantified statements over smaller ranges, and universally quantified statements are refined into conjunctions of universally quantified statements over smaller ranges.

We extend the refinement relation in a rather straightforward way: refinements of nondeterministic joins are joins of their refinements, and $r(\lstinline!mkbool!\ e_1\ e_2) =  \lstinline!mkbool!\ r(e_1)\ r(e_2)$. To refine the restriction operator, we have $r(\lstinline!True ~>! e_2) = r(e_2)$ and $r(e_1 \lstinline!~>! e_2) = r(e_1) \lstinline!~>! r(e_2)$ when $e_1 \ne \lstinline!True!$.

\subsection{Computational soundness and adequacy}

On an input expression $e$ (whose type has no function arrows) with target precision $p$, Marshall returns a response $U$ if, after normalizing $e$ to $e'$, after some number of refinements $n$ of $e'$, the result $r^n(e')$ evaluates to $U$, i.e., $\done{r^n(e')}{U}$.

Both normalization and refinement of expressions should preserve their denotations, such that $\denote{e} = \denote{r^n(e')}$. Together with the fact that the evaluation relation $\Downarrow_p$ should satisfy the proposition that if $\done{e}{U}$ then $\denote{e}^*(U) = \top$, we can conclude that if Marshall returns a response $U$ for an input expression $e$, then $\denote{e}^*(U) = \top$, i.e., $e$ lies in $U$.

Marshall should satisfy some notion of computational adequacy as well, whose statement is more complicated: given an input expression $e$ with base type $t$, then depending on the type $t$:

\begin{itemize}
\item \lstinline!real!: If $\denote{e}$ is total, i.e., it factors through a map $\One \ndto \R$, then for every $\varepsilon > 0$, setting the target precision to $\varepsilon$ returns a result $B_\varepsilon'(q)$, where $\varepsilon' < \varepsilon$.
\item \lstinline!prop!: If $\denote{e} = \strue$, then (regardless of the target precision $p$), Marshall returns the result $\cdot = \strue$.
\item \lstinline!bool!: If $\denote{e}$ is total, i.e., it factors through a map $\One \ndto \bool$, then (regardless of target precision $p$), Marshall returns either $\cdot = \btrue$ or $\cdot = \bfalse$.
\item $t_1 \lstinline!*! \ldots \lstinline!*! t_n$: Marshall returns a result if each of its projections returns a result.
\end{itemize}

\printglossary[title=List of symbols, toctitle=List of symbols]
\end{longversion}

\end{document}